\theoremstyle{plain}
\newtheorem{theorem}{Theorem}[section]
\newtheorem{lem}[theorem]{Lemma}
\newtheorem{corollary}[theorem]{Corollary}
\newtheorem{proposition}[theorem]{Proposition}
\theoremstyle{definition}
\newtheorem{definition}[theorem]{Definition}
\theoremstyle{remark}
\newcommand{\CDF}{\mathrm{CDF}}
\newcommand{\BK}{\mathrm{BK}}
\newcommand{\MZW}{\mathrm{MZW}}
\newcommand{\dom}{\mathrm{Dom}}
\DeclareMathOperator{\Var}{Var}
\DeclareMathOperator{\Cov}{Cov}
\renewcommand{\d}{\mathrm{d}}
\newcommand{\jf}[1]{\textcolor{red}{#1}}
\begin{document}

\setstcolor{blue}
\setulcolor{blue}
\begin{frontmatter}
\title{Simulation of warping processes with applications to temperature data}
\runtitle{Simulation of warping processes with applications to temperature data}

\begin{aug}

\author[A]{\fnms{Nolwenn}~\snm{Le Méhauté}\ead[label=e1]{nolwenn.le-mehaute@univ-grenoble-alpes.fr}},
\author[A]{\fnms{Jean-François}~\snm{Coeurjolly}\ead[label=e2]{ jean-francois.coeurjolly@univ-grenoble-alpes.fr}}
\and
\author[B]{\fnms{Marie-Hélène}~\snm{Descary}\ead[label=e3]{descary.marie-helene@uqam.ca}}

\address[A]{Univ. Grenoble Alpes, CNRS, Grenoble INP\footnote{Institute of Engineering Univ. Grenoble Alpes}, LJK, 38000 Grenoble, France\printead[presep={,\ }]{e1,e2}}

\address[B]{Department of Mathematics,
Université du Québec à Montréal, 
Canada\printead[presep={,\ }]{e3}}

\runauthor{N. Le Méhauté et al.}
\end{aug}

\begin{abstract}
Curve registration plays a major role in functional data analysis by separating amplitude and phase variation through warping functions and the accurate simulation of warping processes is essential for developing statistical methods that properly account for phase variability in functional data. In this paper, we focus on the simulation of continuous warping processes with a prescribed expectation and a controllable variance. We study and compare three procedures, including two existing methods and a new algorithm based on randomized empirical cumulative distribution functions. For each approach, we provide an operational description and establish theoretical results for the first two moments of the simulated processes. A numerical study illustrates the theoretical findings and highlights the respective merits of the three methods. Finally, we present an application to the analysis of temperature distributions in Montreal based on simulated realizations from a warping process estimated from temperature quantile functions.

\end{abstract}

\begin{keyword}[class=MSC]
\kwd[Primary ]{62R10}
\kwd[; secondary ]{62P12}
\end{keyword}

\begin{keyword}
\kwd{Functional data analysis}
\kwd{simulation procedures}
\kwd{registration}
\kwd{alignment}
\kwd{warping process}
\kwd{evolution of temperatures}
\end{keyword}

\end{frontmatter}
\tableofcontents

\section{Introduction}

Functional data analysis focuses on the study of observations that take the form of functions, such as curves or images. Improvements in measurement accuracy and in data storage have led to the emergence of functional data analysis in a wide range of fields such as biomechanics (e.g \citep{ryan2006functional}), medicine (e.g \citep{orozco2025functional} for a review) or climate science (e.g \citep{leeming2025functional}).

A common example where functional data occur is given by curves resulting from the replication of the same random experiment. Such a sample generally highlights lateral displacements, referred to as phase variability, that originate from measurement errors or from intrinsic characteristic of the underlying phenomenon. The Berkeley growth data provide an example which highlights that ignoring phase variability leads to analyzing a mean curve that does not conserve the common features of the curves and thus results in an erroneous analysis \citep{ramsay2005functional}. Thus, phase variability plays a major role in statistical analysis of functional datasets. 
In the presence of phase variability, the standard approach is to account for it by solving a curve registration problem. From a mathematical perspective, this problem is expressed as follows. Given $f_1, \dots, f_n,$ an $n$-sample of curves belonging to $L^2([0,1]),$ the associated curves without phase variation, denoted by $g_1, \dots, g_n$, are given by the set of equations $g_i = f_i \circ w_i,$ for all $i \in \{1, \dots, n\}.$ In these expressions, the functions $w_1, \dots, w_n$ correspond to $n$ realizations of a same stochastic process $w,$ called a warping process.
In order to preserve the order of events and the smoothness of functional data in $L^2([0,1])$, a warping process is, in general, an element of $\Gamma,$ the set of continuous increasing functions on $[0,1]$ such that  $w(0) = 0$ and $w(1) = 1$.

The estimation of warping functions associated with given datasets has been extensively studied. A 2015 survey by \citet{marron2015functional} identified four main classes of estimation procedures: landmark registration, equivalence classes approach, variational methods and Fisher-Rao metric minimization. More recent approaches rely on a Bayesian framework (e.g \citep{lu2017bayesian,kim2025sequential}).
In contrast, the literature on simulation procedures for paths of warping processes is rather limited. To the best of our knowledge, only two key papers \citep{bharath2020distribution, ma2024stochastic} address this question, each proposing a distinct algorithm. Given the important role of phase variation in functional data analysis, this gap in the literature is a limiting factor in the development of models that rely on simulated warping process. 

One key contribution of this work is the proposal of a third algorithm, motivated by the fact that the cumulative distribution function of a continuous random variable is itself a warping function. The objectives of this paper are threefold: (i) to provide a detailed description of the three algorithms, making them operational and reproducible to generate a warping process in $\Gamma$; (ii) to establish the main theoretical properties of the simulated processes; and (iii) to compare them from a numerical point of view. In particular, we focus on the following expected properties: (C1) the simulated process has a target expectation; and (C2) the user can control its variability.

The first approach considered in this manuscript was introduced by \citet{bharath2020distribution}. The authors analyze a simulation procedure for warping functions originally proposed by \citet{cheng2016bayesian} and show that property (C1) holds only for the identity function, while the variability, hence (C2), cannot be prescribed. Within a distributional framework, they propose an extension of this algorithm, referred to as the BK Algorithm, which overcomes these first limitations. The idea is to start with $n$ uniformly spaced data points on $[0,1]$ for the $x$-axis, and to define random jumps using an $n$-dimensional Dirichlet distribution with parameters adapted to a targeted warping function. The proposed warping process is then simply obtained by linear interpolation. \citet{bharath2020distribution} prove the weak convergence of the process indexed by warping functions as $n\to \infty$. We complete their work by deriving explicit expressions for the first two moments of the simulated process, thus satisfying (C1)-(C2). Moreover, we obtain bounds on the convergence rates of these two moments and thus on the $L^2([0,1])$-norm, as $n$ increases.

The second approach was proposed by \citet{ma2024stochastic}. The authors present a completely different algorithm, based on modeling a second-order centered stochastic process in $L^2([0,1])$ using a Karhunen-Loève decomposition, see e.g. \citep{wang2016functional}, and on an isometry between $L^2([0,1])$ and $\Gamma_1$, a subspace of $\Gamma$. The original version of their algorithm produces functions fluctuating around the identity function, thus breaking property (C1). We reparameterize the Karhunen-Loève decomposition in the most natural way and control its moments so that the resulting warping process is an element of $\Gamma_1 \subset \Gamma$ and a natural candidate to achieve property (C1). This new version, referred to as the MZW Algorithm, is also very simple to implement. Unfortunately, the non-linearity induced by the isometry between $L^2([0,1])$ and $\Gamma_1$ prevents us from controlling theoretically the first two moments. Consequently, properties (C1) and (C2) cannot be achieved regardless of the number of terms used in the Karhunen-Loève decomposition. We actually prove that control of the warping process can only be obtained, not in the $L^2([0,1])$-norm but in a norm related to $\Gamma_1$. A  result similar to (C1) is however shown in the Gaussian case when the variances of the random variables in the Karhunen-Loève decomposition converge to 0. 

As mentioned above, one of the main contributions of this manuscript is the introduction of a third algorithm. Observing that any continuous cumulative distribution function is a particular case of a warping function, we propose to generate an empirical cumulative  distribution function with a prescribed expectation and then smooth it linearly, following the approach of \citet{blanke2018polygonal}, to obtain a continuous function on $[0,1]$ that serves as a natural and simple candidate for a warping process. To control variability, we further introduce into the procedure random weights drawn from an $n$-dimensional Dirichlet distribution with constant concentration parameter. The resulting method, referred to as the CDF Algorithm, is straightforward to implement and enjoys the same properties as the warping process generated by the BK Algorithm.

The rest of the paper is organized as follows. Section~\ref{section_description_algos} provides a detailed description of the three algorithms. Section~\ref{section::properties} presents their main theoretical properties. Simulations performed with each algorithm and illustrations of the properties announced in Section~\ref{section::properties} are presented in Section~\ref{section::Simulations}. We end this simulation study with some general recommendations. Finally, Section~\ref{section::Simulations} also contains an application to real temperature data from Montreal, illustrating the practical relevance of the proposed simulation procedures. 
Additional figures and proofs of our different results are postponed to Appendices~\ref{Appendix_autres_fig}-\ref{app:MZW}.

\section{Warping process simulation}\label{section_description_algos}

We now describe the three simulation algorithms of warping processes introduced in this paper. The BK and CDF Algorithms, sharing some similarities, are presented in Sections \ref{sec:BK}–\ref{sec:CDF}.
Both rely on a linear interpolation of $n+2$ points, $n$ of which are random. In Section \ref{sec:MZW}, we present the MZW Algorithm proposed by \citet{ma2024stochastic} which is, as already mentioned, of a different kind since it relies on mapping a second-order stochastic process in $L^2([0,1])$ to an inner-product space contained within the set of warping functions.


In the sequel, we denote by $\Gamma$ the space of warping functions defined~by
\[\Gamma :=  \{ \gamma\;:\;[0,1] \to [0,1] \mid \gamma(0) = 0, \gamma(1) =1, \gamma \text{ is an increasing continuous function} \}.\] 
Moreover, for $f,g \in L^2([0,1])$, we denote by $\langle f,g \rangle $ the $L^2$ inner-product between $f$ and $g$ on $[0,1]$, defined as $\langle f,g \rangle = \int_0^1 f(t)g(t) \d t$, and by $\| f \|$ the associated norm. Finally, for $i\le j \in \mathbb N$, we let $\llbracket i,j\rrbracket=\{k \in \mathbb N \mid  i\le k\le j\}$.

\subsection{Bharath and Kurtek's Algorithm (BK Algorithm)}\label{sec:BK} 


Building on the procedure originally designed by \citet[Section 4.2]{cheng2016bayesian} for Bayesian registration of functions and curves, \citet{bharath2020distribution} extend this method in several important directions. We begin by presenting the original approach.

\paragraph{Cheng, Dryden and Huang's Algorithm.} The algorithm takes three inputs: a natural number $n \in \mathbb{N}^\ast$ which specifies the number of points of the x-axis partition; a deterministic partition of the line segment $[0,1]$ denoted by $t_0^\ast <t_1^\ast < \dots < t_n^\ast < t_{n+1}^\ast  $ with $t_0^\ast=0$ and $t_{n+1}^\ast=1;$ and a positive real parameter $\theta \in \mathbb{R}_+^\ast$ which is a concentration parameter for the Dirichlet distribution. The algorithm proceeds as follows:  
\begin{enumerate} 
\item \textit{Construction of the x-axis partition.} \newline The x-axis partition is given by the deterministic vector $(t_0^\ast, t_1^\ast, \dots, t_n^\ast, t_{n+1}^\ast).$
\item \textit{Construction of the y-axis partition.} \newline
One generates a Dirichlet random vector of size $n+1,$ denoted by $(p_1, \dots, p_{n+1}),$ with vector parameter $( \theta , \dots, \theta).$ One then forms the random vector of size $n+2$, denoted by $(\Tilde{p}_0,\Tilde{p}_1, \dots \Tilde{p}_{n+1})$, where $(\Tilde{p}_1,\cdots, \Tilde{p}_{n+1})$ is the vector of cumulative sums of $(p_1,\dots, p_{n+1})$ and $\Tilde{p}_0=0.$
\item \textit{Linear Interpolation.} \newline One performs a linear interpolation of the points $(t_i^\ast, \Tilde{p}_i),$ for $i \in \llbracket 0, n+1 \rrbracket.$ 
\end{enumerate} 

\citet{bharath2020distribution} examine this algorithm and provide functional convergence results showing that, as $n \to \infty$, the limit process is given by the identity warp map. In particular, this algorithm does not allow obtaining a limit process that can be centered at a chosen warp map and whose variance term can be controlled. \citet[Corollary 1]{bharath2020distribution} propose a modification of the construction of the y-axis partition, which extends the degeneracy behaviour of the limit process as $n \to \infty$ to a larger class of distributions than the Dirichlet distribution. This result also highlights that the deterministic nature of the x-axis partition is not suitable for obtaining a limit process whose variance term can be controlled.

Algorithm 2 in \citet{bharath2020distribution} aims at overcoming this issue and is presented as able to simulate a warping process having a chosen expectation (at least as $n \to \infty$) and for which the fluctuations can be controlled. The presentation of their algorithm contains a few typos. In particular, the x-axis presented in Steps~1 and~3 is not specified, $H$ should be $H^{-1}$ in Step~1 and a parameter $\theta$ is missing as a factor of the Dirichlet parameters in Step~2. Discussions with the authors enabled us to correct the algorithm. We now present the corrected version, referred to as the BK Algorithm hereafter.\\


\noindent\fbox{\begin{minipage}{.98\textwidth}
\paragraph{BK Algorithm.} The algorithm takes three parameters as inputs: a natural number $n \in \mathbb{N}^\ast$ corresponding to the number of random points in the partition of the x-axis; a positive real number $ \theta \in \mathbb{R}_+^\ast,$  which is a concentration parameter for the Dirichlet distribution; and a warping function $\varphi$ from $[0,1]$ to $[0,1].$ The algorithm proceeds as follows. 

\begin{enumerate} 
\item \textit{Construction of the x-axis partition.}

One generates an $n$-sample from the Uniform distribution on $[0,1],$ denoted by \linebreak$(U_1,\dots, U_n).$ Let $(U_1^\ast, \dots, U_n^\ast)$ be the sorted sample and set $U_0^\ast=0$ and $U_{n+1}^\ast= 1$. 
One forms the vector $U^\ast= (U_0^\ast, U_1^\ast, \dots, U_n^\ast, U_{n+1}^\ast)$. 
\item \textit{Construction of the y-axis partition.}

One generates a Dirichlet random vector of size $n+1,$ denoted by $(\alpha_1, \dots, \alpha_{n+1}),$ with vector parameter given by $( \theta ( \varphi(U_j^\ast) - \varphi(U_{j-1}^\ast)))_{j \in \llbracket 1, n+1 \rrbracket}.$ Let $\alpha_0=0$ and $\alpha= (\alpha_0,\alpha_1,\dots, \alpha_{n+1}).$ One forms the vector denoted by $\Tilde{\alpha},$ corresponding to the cumulative sums of $\alpha.$  
\item \textit{Linear Interpolation.} 

One performs linear interpolation of the points $(U_i^\ast, \Tilde{\alpha}_i),$ for $i \in \llbracket0, n+1 \rrbracket.$ 
\end{enumerate} 
\end{minipage}}

\medskip

The resulting path, denoted by $w_n^{\BK}(\cdot;\theta)$, is, for $t \in [0,1)$,  \begin{equation}\label{expression_BK}
    w_n^{\BK}(t;\theta) =
        \displaystyle \sum_{j=0}^n \left( \Tilde{\alpha}_j + \alpha_{j+1} \frac{t - U_j^\ast}{U_{j+1}^\ast - U_j^\ast} \right) \mathds{1}_{[ U_j^\ast, U_{j+1}^\ast)}(t),
\end{equation} and $w_n^{\BK}(1;\theta) = 1.$

\subsection{Empirical cumulative distribution function with random weights (CDF Algorithm)}\label{sec:CDF}

The intuition behind this approach is quite simple. An empirical cumulative distribution function (CDF for short), being a CDF, could be considered as the realization of a warping process. We refine this idea in two ways: first, by using a linearized version of the standard empirical CDF; and second, by introducing random weights (again based on the Dirichlet distribution) to allow control over the variance of the resulting warping process.


For the first step, we build on the work of \citet{blanke2018polygonal}, who proposed a nonparametric continuous estimator of the CDF of a random variable. Their construction relies on a smoothing technique applied to the associated empirical cumulative distribution, named polygonal smoothing. It is performed as described hereafter. Let $\left( U_1,\dots, U_n\right)$ be an $n$-sample from the Uniform distribution on $[0,1],$ $G$ be an absolutely continuous cumulative distribution function from $[0,1]$ to $[0,1],$ and $p \in [0,1]$ be a real parameter. The vector $\left(U_0^\ast, U_1^\ast, \dots, U_{n+1}^\ast\right)$ denotes the sorted vector of $\{U_i\;;\; i \in \llbracket1,n\rrbracket \}$, with $U_0^\ast= 0$ and $U_{n+1}^\ast=1.$ We also set $X_i^\ast= G^{-1}(U_i^\ast),$ for $i \in \llbracket1,n \rrbracket.$ Interpolation knots are given by $(X_i^\ast, (i-p)/n),$ for $i \in \llbracket1,n \rrbracket$ and two extremes points $(0,0)$ and $(1,1).$ In particular, the corresponding curve intersects each constant step of the empirical cumulative distribution function at abscissas $ X_i^\ast + p  (X_{i+1}^\ast - X_i^\ast),$ for $i$ in $\llbracket1,n-1 \rrbracket.$ Thus, parameter $p$ acts as a localization parameter of the abscissa of each interpolation knot on the line segment $[X_i^\ast, X_{i+1}^\ast],$ for $i \text{ in } \llbracket1, n-1\rrbracket.$ 

We adapt this idea to our problem; this requires adapting the previous construction to the case where the jumps have different values and are random. We thus propose to perform linear interpolation of the constant steps of the path of the process defined as $t \in [0,1] \mapsto \sum_{i=1}^n \beta_i \mathds{1}_{[0, \varphi(t))}(U_i^\ast),$ where $(\beta_1,\dots, \beta_n)$ is a Dirichlet distributed random vector with parameter $(\theta/n, \dots, \theta/n)$, for a given concentration parameter $\theta \in \mathbb{R}_+^\ast,$ and where $\varphi$ is an absolutely continuous warping function. Note that for $p \in \{ 0,1\},$ the resulting path has a constant step and hence is not a warping function. Consequently, we assume that $p \in (0,1).$ We finally obtain the following algorithm, based on the linearized empirical cumulative distribution function with random weights, which we refer to as the CDF Algorithm.\\


\noindent\fbox{\begin{minipage}{.98\textwidth}
\paragraph{CDF Algorithm.} The algorithm takes four input parameters: $n,\theta, \varphi$ and $p.$ Parameters $n,\theta, \varphi$ have the same roles as in the BK algorithm, and $p \in (0,1).$ The linear interpolation is performed as follows: \begin{enumerate}\vspace{5pt}
    \item \textit{Construction of the x-axis partition.} \newline  One generates an $n$-sample from Uniform distribution on $[0,1].$ Let $\left( U_1^\ast, \dots, U_n^\ast\right)$ be the sorted sample and set $U_0^\ast=0$ and $U_{n+1}^\ast=1.$ \newline One forms the vector $\varphi^{-1}(U^\ast)= \left( \varphi^{-1}(U_0^\ast), \dots, \varphi^{-1}(U_{n+1}^\ast)\right).$
    \item \textit{Construction of the y-axis partition.} \newline 
    One generates a random vector $(\beta_1, \dots, \beta_n)$ from a Dirichlet distribution with parameter $( \theta/n, \dots, \theta/n )$, set $\beta_{-1} = \beta_0 = \beta_{n+1}=0$, and forms the random vector $\gamma_p= (\gamma_{j,p})_{j \in \llbracket0,n+1\rrbracket}$ with $\gamma_{j,p} = (1-p)\beta_j + p \beta_{j-1}$ for $j \in \llbracket 0, n+1 \rrbracket.$ \newline One forms the vector $\Tilde{\gamma}_p=(\Tilde{\gamma}_{i,p})_{i \in \llbracket 0, n+1 \rrbracket}$ which corresponds to the cumulative sums of the vector $\gamma_p$ meaning that $\Tilde{\gamma}_{i,p} = \sum_{j=0}^i \gamma_{j,p}.$ 
    \item \textit{Linear interpolation.} \newline One performs linear interpolation of the points $\left( \varphi^{-1}(U_i^\ast), \Tilde{\gamma}_{i,p} \right), \text{ for } i \in \llbracket0, n+1 \rrbracket.$
\end{enumerate} 
\end{minipage}}

\medskip

The resulting path, denoted by $w_n^{\CDF}(\cdot; \theta,p)$, intersects the constant steps of the jump process $\sum_{i=1}^n \beta_i \mathds{1}_{[0, \varphi(\cdot))}(U_i^\ast)$ at the points with abscissa $\varphi^{-1}(U_j^\ast) + (p \beta_j / \gamma_{j+1,p})(\varphi^{-1}(U_{j+1}^\ast) - \varphi^{-1}(U_j^\ast))$, for $j \in \llbracket1,n-1\rrbracket$, and admits the explicit expression, for $t \in [0,1)$,  \[
    w_n^{\CDF}(t;\theta,p) =  \displaystyle\sum_{j=0}^n \left( \Tilde{\gamma}_{j,p} + \gamma_{j+1,p}  \frac{t - \varphi^{-1}(U_j^\ast)}{\varphi^{-1}(U_{j+1}^\ast) - \varphi^{-1}(U_j^\ast)}  \right) \mathds{1}_{[ \varphi^{-1}(U_j^\ast), \varphi^{-1}(U_{j+1}^\ast))}(t), \] and $w_n^{\CDF}(1;\theta,p) = 1.$

\subsection{Ma, Zhou and Wu's Algorithm (MZW Algorithm)} \label{sec:MZW}

In this subsection, we describe the algorithm introduced by \citet{ma2024stochastic}. The notation $\mathcal{D}([0,1])$ stands for the set of functions that admit a derivative at each point of the line segment $[0,1].$  Let $\Gamma_1$ be the subset of $\Gamma$ defined as \[\Gamma_1~=\{\gamma \in \Gamma \cap \mathcal{D}([0,1]) \;|\; \exists\;(m_\gamma, M_\gamma)\in (\mathbb{R}_+^\ast)^2, 0< m_\gamma< \gamma' <M_\gamma <\infty\}.\] The theoretical justification of this algorithm relies on two key points. The first one is the introduction of an inner product space structure on $\Gamma_1.$ Hence, with the binary operation $\oplus_{\Gamma_1},$ the scalar multiplication $\odot_{\Gamma_1}$ and the inner product $\langle \cdot , \cdot \rangle_{\Gamma_1}$ (whose definitions are recalled in Appendix~\ref{rappel_MZW}), $\Gamma_1$ becomes an inner product space. Let $H(0,1)$ be the subset of $L^2([0,1])$ defined as 
\[ H(0,1)= \{ h \in L^2([0,1]) \;|\; \exists\;(m_h,M_h) \in \mathbb{R}^2, -\infty<m_h < h<M_h< \infty, \int_0^1 h(t)\mathrm{d}t = 0 \}.\] 
The second key point is the existence of an isometric isomorphism between $\Gamma_1$ and $H(0,1),$ given by \[\psi^{-1}\;:\;h \in H(0,1) \mapsto \psi^{-1}(h) = \frac{\int_0^\cdot \exp(h(s))\mathrm{d}s}{ \int_0^1 \exp(h(\tau))\mathrm{d}\tau } \in \Gamma_1.\] 
These arguments lead to a two-step algorithm: first, simulate a stochastic process with paths in $H(0,1)$; second, apply the isometric isomorphism $\psi^{-1}$ to obtain a path of a warping process in $\Gamma_1$. Thus, a reformulation of the problem is to find a simulation method of a path in $H(0,1).$ The method in~\citet{ma2024stochastic} relies on the representation of the paths of a zero-mean second-order stochastic process in the smallest Hilbert space included in $L^2([0,1])$ and containing $H(0,1).$ More precisely, let $E(0,1)=\{ h \in L^2([0,1])| \int_0^1 h(t) \mathrm{d}t = 0 \},$ and $B=(\phi_i)_{i \geq 1}$ be the Fourier basis of $L^2([0,1])$ to which we remove the constant function equals to $1.$ Then, the space $E(0,1)$ is the smallest extension of $H(0,1)$ included in $L^2([0,1])$ with a Hilbert space structure, and $B$ is an orthonormal basis of $E(0,1).$ The following lemma provides a way to construct a covariance kernel in $E(0,1).$ 

\begin{lem}{\citep[Proposition 2.2]{ma2024stochastic}}\label{lemme_MZW}
For any non-negative sequence $(v_i)_{i\geq 1}$ such that $\sum_{i=1}^{\infty}v_i<\infty,$ let \begin{equation*}
    K(s,t) = \sum_{i=1}^{\infty} v_i \phi_i(s)\phi_i(t), \text{ for all } s, t \in [0,1].
\end{equation*} Then $K$ converges absolutely and uniformly and $K$ is a continuous, symmetric, non-negative definite function.
\end{lem}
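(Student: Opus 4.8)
The plan is to verify the four asserted properties — absolute and uniform convergence, continuity, symmetry, and non-negative definiteness — one at a time, with the whole argument resting on a single elementary bound for the Fourier basis.

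First I would record the uniform bound on the basis functions. Since $B=(\phi_i)_{i\ge 1}$ is extracted from the trigonometric orthonormal basis of $L^2([0,1])$, each $\phi_i$ is of the form $\sqrt 2\cos(2\pi k\,\cdot)$ or $\sqrt 2\sin(2\pi k\,\cdot)$, so that $\sup_{t\in[0,1]}|\phi_i(t)|\le \sqrt 2$ for every $i$. Hence the general term of the series is controlled pointwise by
\[ v_i\,|\phi_i(s)\phi_i(t)| \le 2\,v_i \qquad \text{for all } (s,t)\in[0,1]^2. \]
Because $v_i\ge 0$ and $\sum_{i\ge1} v_i<\infty$, the Weierstrass $M$-test applies and delivers at once the absolute and uniform convergence of $K(s,t)=\sum_{i\ge1} v_i\phi_i(s)\phi_i(t)$ on $[0,1]^2$.

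The continuity and symmetry then follow almost immediately. Each partial sum $K_N(s,t)=\sum_{i=1}^N v_i\phi_i(s)\phi_i(t)$ is a finite linear combination of products of continuous functions, hence continuous on $[0,1]^2$; since $(K_N)_N$ converges uniformly to $K$, the limit $K$ is continuous as well. Symmetry, $K(s,t)=K(t,s)$, is immediate from $\phi_i(s)\phi_i(t)=\phi_i(t)\phi_i(s)$. The remaining point, non-negative definiteness, is where the hypothesis $v_i\ge 0$ is genuinely used: given any $m\in\mathbb{N}$, points $t_1,\dots,t_m\in[0,1]$ and reals $c_1,\dots,c_m$, I would compute
\[ \sum_{j=1}^m\sum_{k=1}^m c_j c_k\,K(t_j,t_k) = \sum_{i=1}^\infty v_i\Big(\sum_{j=1}^m c_j\phi_i(t_j)\Big)^2 \ge 0, \]
where the exchange of the finite double sum with the infinite one is legitimate by the absolute convergence established above, and nonnegativity holds term by term since each $v_i\ge 0$.

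I do not anticipate a genuine obstacle here: the only substantive ingredient is the uniform boundedness $\sup_{t}|\phi_i(t)|\le\sqrt 2$ of the trigonometric basis, which is precisely what lets the $M$-test close all four properties simultaneously. For an arbitrary orthonormal basis such a pointwise bound would fail and the convergence and continuity claims would require a different argument; it is the special structure of the Fourier basis that makes the proof this direct.
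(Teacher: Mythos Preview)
Your proof is correct and complete. The paper does not actually supply its own proof of this lemma: it is quoted verbatim as \citep[Proposition~2.2]{ma2024stochastic} and used as a black box, so there is nothing to compare against beyond noting that your argument is the standard one --- Weierstrass $M$-test via the uniform bound $\|\phi_i\|_\infty\le\sqrt{2}$, continuity by uniform convergence, and non-negative definiteness by the quadratic-form identity.
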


Combining Lemma~\ref{lemme_MZW} and the Karhunen-Loève representation, we obtain a path of an element in $E(0,1).$ Indeed, let $c$ be a covariance kernel constructed as in Lemma~\ref{lemme_MZW} and let $X$ be a second-order stochastic process with mean function $\mu_X$ and covariance kernel $c.$ The Karhunen-Loève theorem (see for example \citep{wang2016functional}) states that $X$ admits the representation
\[X = \mu_X +  \sum_{i=1}^\infty \langle X-\mu_X, \phi_i\rangle \phi_i \] 
in the sense that 
\[ \underset{J \to \infty}{\lim}\; \underset{t \in [0,1]}{\sup} \mathbb{E}\left[ \left(X(t) - \mu_X(t) - \sum_{i=1}^J \langle X-\mu_X, \phi_i\rangle \phi_i(t) \right)^2\right] = 0 ,\] where the random variables $\langle X - \mu_X, \phi_i \rangle, $ for $i \geq 1,$ called scores, are non-correlated with zero mean and variance $v_i.$ In particular, for a second-order stochastic process $Y$ with mean zero and covariance kernel $c,$ we get an element $Y_m$ in $H(0,1)$ with a truncation at order $m \in \mathbb{N}^\ast $ as $Y_m = \sum_{i=1}^{m} \langle Y , \phi_i \rangle \phi_i$.
Since the expectation and the variance of the scores are known, this expression entails an easy simulation method of a path in $H(0,1)$, and we finally get the simulation algorithm, called the original version of the MZW Algorithm.

\paragraph{MZW Algorithm (original version).} The algorithm has two input parameters: a natural number $m \in \mathbb{N}^\ast,$ corresponding to the order of truncation of the process to be simulated and an $m$-tuple of positive coefficients denoted by $(v_i)_{i \in \llbracket1,m\rrbracket}.$ The algorithm proceeds as follows:
\begin{enumerate}
    \item \textit{Simulation of a path in $H(0,1).$ } \newline 
    One simulates an $m$-tuple of coefficients denoted by $(G_1,\dots, G_m),$ with $G_i$ a stochastic process with mean $0$ and variance $v_i$, and forms $t \in [0,1] \mapsto Y_m(t)= \sum_{i=1}^m G_i \phi_i(t).$ 
    \item \textit{Projection in $\Gamma_1.$} \newline One defines $t \in [0,1] \mapsto w_{m,0}^{\MZW}(t)= \psi^{-1}(Y_m)(t)$.
\end{enumerate} 
Simulations carried out in \citet{ma2024stochastic} seem to show that paths fluctuate around the identity map. Based on this observation, a naive approach to obtain simulations fluctuating around a targeted $\varphi$ function, consists in defining $t \in [0,1] \mapsto w_{m,0}^{\MZW}(t) - t + \varphi(t)$. This modification of the initial MZW Algorithm is not suitable since this mapping is not necessarily increasing, hence not a warping function. Thus, the only possible modification of the algorithm is to act on the mean function of the process whose paths are in $H(0,1).$ A natural choice for this mean function is given by $\psi(\varphi).$ This leads to the following algorithm, that we still denote by MZW Algorithm for short in the following.\\

\noindent\fbox{\begin{minipage}{.98\textwidth}
\paragraph{MZW Algorithm (modified version).} The algorithm has four input parameters: a natural number $m \in \mathbb{N}^\ast,$ corresponding to the order of truncation of the process to be simulated, an $m$-tuple of positive real numbers  $(v_i)_{i \in \llbracket1,m\rrbracket},$ a concentration parameter $\theta \in \mathbb{R}_+^\ast$ and a warping function $\varphi \in \Gamma_1.$ The algorithm proceeds as follows: 
\begin{enumerate}
\item \textit{Simulation of one path in $H(0,1).$ } \newline 
One simulates an $m$-tuple of coefficients denoted by $(G_1,\dots, G_m),$ with $G_i$ a stochastic process with mean 0 and variance  $v_i / (1+\theta),$ and forms $t \in [0,1] \mapsto X_m(t;\theta) = \psi(\varphi)(t) + \sum_{i=1}^m G_i \phi_i(t).$ 
\item \textit{Projection on $\Gamma_1.$} \newline One defines $t \in [0,1] \mapsto w_m^{\MZW}(t;\theta) = \psi^{-1}(X_m(t;\theta)).$ 
\end{enumerate} 
\end{minipage}}

\medskip

The control of the variability in the BK and CDF Algorithms is achieved via a concentration parameter $\theta$. The proposed parameterization by $\theta$ of the variances of $G_i$ has the same flavour.

The function $\psi:\Gamma_1\rightarrow H(0,1)$ is defined by $\psi(\varphi)  = \log(\varphi') - \int_0^1 \log(\varphi')$, which leads to the warp path given, for $t\in[0,1]$, by 
\[w_m^{\MZW}(t;\theta) = \frac{\int_0^t \varphi'(s) \exp(\sum_{i=1}^m G_i \phi_i(s))\d s}{\int_0^1 \varphi'(\tau) \exp(\sum_{i=1}^m G_i \phi_i(\tau))\d\tau}.\] 
The simulated path in $H(0,1)$ separates the deterministic term $\psi(\varphi)$ from the random component $\sum_{i=1}^m G_i \phi_i$, whereas in the expression of $w_m^{\mathrm{MZW}}(\cdot;\theta)$, the terms $\varphi'$ and $\exp(\sum_{i=1}^m G_i \phi_i)$ are coupled, making the calculation of the associated expectation difficult. The nonlinearity of the isometric isomorphism involved in the algorithm is not surprising, as it reflects the geometry of the underlying spaces, particularly that of $\Gamma_1$, which is inherently complex. 

\section{Properties of simulated warping functions}\label{section::properties} 

In this section, we present theoretical results to illustrate the significance of the algorithms introduced earlier. To this end, we introduce the subset $\Tilde{\Gamma}$ of $\Gamma$ defined as $\Tilde{\Gamma} = \{ \gamma \in \Gamma \;|\; \gamma \text{ is absolutely continuous}\}$, which is relevant to study the theoretical properties of BK and CDF Algorithms. 

We start with a study of the first two moments of $w_n^\BK(t; \theta)$ and $w_n^\CDF(t; \theta,p),$ for a given $t \in (0,1)$. We provide their explicit expressions, their limits along with the corresponding convergence rates as $n\to \infty$. For the BK Algorithm, Proposition~\ref{prop:warping} completes \citet[Proposition~1]{bharath2020distribution}.


\begin{proposition}\label{prop:warping}
For any $t \in (0,1),$ $p \in (0,1)$, $\theta \in \mathbb{R}_+^\ast$ and for any warping function $\varphi \in \Tilde{\Gamma},$ the following two statements hold:\\
(i) 
\begin{equation*}
\mathbb E[w_n^{\BK}(t ; \theta)] = \mathcal E_n^{\BK}(t)
\qquad \text{ and } \qquad
\mathbb E[w_n^{\CDF}(t; \theta, p)] = \mathcal E_n^{\CDF}(t; p)
\end{equation*}
where $\mathcal E_n^{\BK}(t)$ and $\mathcal E_n^{\CDF}(t; p)$ are explicitly given by \eqref{esp_explicite_BK} and \eqref{esp_explicite_CDF}. In particular, as $n\to \infty$, we have
\begin{equation*}
\mathbb E[w_n^{\BK}(t ; \theta)] -\varphi(t)=O\left( \frac{1}{n}\right)
\qquad \text{ and } \qquad 
\mathbb E[w_n^{\CDF}(t; \theta, p)] -\varphi(t)= O\left(\frac{1}{n}\right).
\end{equation*}
(ii) 
\begin{equation*}
\Var[w_n^{\BK}(t; \theta)] = \mathcal V_n^{\BK}(t; \theta) 
\qquad \text{ and } \qquad 
\Var[w_n^{\CDF}(t; \theta, p)] = \mathcal V_n^{\CDF}(t;\theta,p)
\end{equation*}
where $\mathcal V_n^{\BK}(t;\theta)$ and $\mathcal V_n^{\CDF}(t;\theta,p)$ are explicitly given by \eqref{var_explicite_BK} and \eqref{var_explicite_CDF}. In particular, as $n\to \infty$, we have
\begin{equation*}
\Var[w_n^{\BK}(t;\theta)] - \frac{1}{1 + \theta} \varphi(t) (1 - \varphi(t)) = O\left( \frac{1}{n} \right)
\end{equation*} and \begin{equation*}
\Var[w_n^{\CDF}(t; \theta, p)] - \frac{1}{1 + \theta} \varphi(t) (1 - \varphi(t)) = O\left( \frac{1}{n} \right).
\end{equation*}
\end{proposition}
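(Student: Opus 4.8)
The plan is to condition on the uniform order statistics, which constitute the (random) $x$-partition: given this partition, each path is an affine function of a Dirichlet vector, and its conditional moments can be read off from the classical Dirichlet moment formulas. Throughout, write $\mathcal U = (U_0^\ast,\dots,U_{n+1}^\ast)$ for the partition and let $J$ be the bracketing index, i.e. $t\in[U_J^\ast,U_{J+1}^\ast)$ for BK and $t\in[\varphi^{-1}(U_J^\ast),\varphi^{-1}(U_{J+1}^\ast))$ for CDF. The key observation is that $J$ counts the sample points below the relevant threshold: $J=\#\{i:U_i\le t\}\sim\mathrm{Binomial}(n,t)$ for BK, whereas the placement of the knots at $\varphi^{-1}(U_j^\ast)$ forces $J=\#\{i:U_i\le\varphi(t)\}\sim\mathrm{Binomial}(n,\varphi(t))$ for CDF. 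This is precisely the mechanism that makes $\varphi(t)$, rather than $t$, the target of the CDF procedure.

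For the first moments I would first compute the conditional expectation given $\mathcal U$. For BK, the Dirichlet parameters sum to $\theta(\varphi(1)-\varphi(0))=\theta$, so $\mathbb E[\alpha_j\mid\mathcal U]=\varphi(U_j^\ast)-\varphi(U_{j-1}^\ast)$; the sum telescopes to $\mathbb E[\tilde\alpha_J\mid\mathcal U]=\varphi(U_J^\ast)$, and the conditional mean of $w_n^{\BK}(t;\theta)$ equals the chord of $\varphi$ joining $(U_J^\ast,\varphi(U_J^\ast))$ and $(U_{J+1}^\ast,\varphi(U_{J+1}^\ast))$ evaluated at $t$. For CDF the symmetric parameters give $\mathbb E[\beta_j]=1/n$, hence $\mathbb E[\tilde\gamma_{J,p}\mid\mathcal U]=(J-p)/n$ and $\mathbb E[\gamma_{J+1,p}\mid\mathcal U]=1/n$ on interior indices, so the conditional mean is $(J-p)/n$ plus a fractional localization term of size $O(1/n)$. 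Averaging over $\mathcal U$ using the joint law of the bracketing order statistics (for fixed $J=k$, the left neighbour is the maximum of $k$ uniforms on $(0,t)$ and the right neighbour the minimum of $n-k$ uniforms on $(t,1)$, conditionally independent) yields the announced explicit expressions. Both reduce to $\varphi(t)$ in the limit because the bracketing interval shrinks to $\{t\}$.

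For the variances I would use the law of total variance, $\Var[w_n]=\mathbb E[\Var[w_n\mid\mathcal U]]+\Var[\mathbb E[w_n\mid\mathcal U]]$. The cumulative sum is the dominant term: for BK, $\tilde\alpha_J\sim\mathrm{Beta}(\theta\varphi(U_J^\ast),\theta(1-\varphi(U_J^\ast)))$, whence $\Var[\tilde\alpha_J\mid\mathcal U]=\varphi(U_J^\ast)(1-\varphi(U_J^\ast))/(1+\theta)$; the analogous Beta computation for the partial sum $B_{J-1}=\sum_{j=1}^{J-1}\beta_j$ in the CDF case gives $(J-1)(n-J+1)/(n^2(1+\theta))$. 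The factor $1/(1+\theta)$ is exactly the $\theta+1$ appearing in the Dirichlet/Beta variance denominator, which is where the control of variability (C2) enters. The remaining contributions, coming from $\alpha_{J+1}$ (resp.\ $\beta_J,\beta_{J+1}$), the fractional interpolation weight and the associated covariances, are all $O(1/n)$ because these increments carry Dirichlet mass $O(1/n)$. Likewise $\Var[\mathbb E[w_n\mid\mathcal U]]$ is $O(1/n)$, since the bracketing order statistic fluctuates around $t$ (resp.\ $\varphi(t)$) at scale $n^{-1/2}$. Taking expectations and letting $U_J^\ast\to t$ (resp.\ $J/n\to\varphi(t)$) produces the common limit $\varphi(t)(1-\varphi(t))/(1+\theta)$ for both algorithms.

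The main obstacle is quantifying the convergence rates under the sole assumption that $\varphi$ is absolutely continuous. The clean identity I would exploit is $\mathbb E[\varphi(t)-\varphi(U_J^\ast)]=\int_0^t\varphi'(s)\,(1-(t-s))^n\,\d s$, obtained by writing $\varphi(t)-\varphi(U_J^\ast)=\int_{U_J^\ast}^t\varphi'$ and integrating the indicator $\mathds 1\{U_J^\ast<s\}$, whose probability is $(1-(t-s))^n$; the exponential concentration of $(1-(t-s))^n$ near $s=t$ delivers the $O(1/n)$ bound, with a constant depending on $\varphi$ through the local size of $\varphi'$ near $t$. The genuinely delicate points are: handling the coupling in the fractional linear term, where $t-U_J^\ast$ and $U_{J+1}^\ast-U_J^\ast$ appear in a ratio and do not factorize; controlling the boundary indices $J\in\{0,n\}$, which occur only with exponentially small probability and are therefore negligible; and upgrading the pointwise spacing estimates into the variance rate, which requires combining the approximation of the conditional variance with the $O(1/n)$ bound on the variance of the conditional mean. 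These order-statistics estimates, together with the exact Dirichlet moments, are what turn the explicit formulas into the stated $O(1/n)$ rates.
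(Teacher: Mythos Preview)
Your proposal is essentially correct and follows the same strategy as the paper: condition on the uniform order statistics, use the closed-form Dirichlet moments to reduce the conditional mean and variance to explicit functions of $(U_J^\ast,U_{J+1}^\ast)$, and then average against the known joint law of consecutive order statistics. The paper carries this out by summing over all indices $j$ with the indicator $\mathds 1_{[U_j^\ast,U_{j+1}^\ast)}(t)$ and collapsing the sums via the binomial theorem; your framing through the single bracketing index $J$ and its binomial law is an equivalent, slightly more probabilistic, bookkeeping. In particular your identity $\mathbb E[\varphi(t)-\varphi(U_J^\ast)]=\int_0^t\varphi'(s)(1-t+s)^n\,\d s$ is exactly the content of the paper's Lemma on $\sum_{j}\mathbb E[\varphi(U_j^\ast)\mathds 1_{[U_j^\ast,U_{j+1}^\ast)}(t)]$ after the integration by parts. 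Your law-of-total-variance split (Dirichlet noise versus partition noise) is a clean reorganisation of the paper's direct computation of $\sum_j\mathbb E[A_j(t)^2]-(\mathbb E[w_n])^2$, and makes the source of the $1/(1+\theta)$ factor more transparent.

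One small slip: the bracketing order statistic $U_J^\ast$ does not fluctuate around $t$ at scale $n^{-1/2}$ but at scale $n^{-1}$, since $\mathbb P(t-U_J^\ast>s)=(1-s)^n$; this only strengthens your claim that $\Var[\mathbb E[w_n\mid\mathcal U]]=O(1/n)$. Also, the $O(1/n)$ rate from $\int_0^t\varphi'(s)(1-t+s)^n\,\d s$ genuinely requires some local control of $\varphi'$ near $t$; the paper simply bounds by $\|\varphi'\|_\infty/(n+1)$, so both your sketch and the paper implicitly use more than bare absolute continuity at this step.
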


The following comments ensue from Proposition~\ref{prop:warping}. First, by construction, the paths obtained from CDF and BK Algorithms are warping functions in $\Tilde{\Gamma}.$ Second, Statement~(i) highlights that the expectation term for each procedure is independent of the concentration parameter $\theta$ and tends, as expected, to $\varphi$ on $[0,1]$. Note that centered versions around $\varphi$ of the two warping paths can be derived from~\eqref{esp_explicite_BK} and~\eqref{esp_explicite_CDF}. Third, Statement~(ii) ensures the convergence of the variance term for each procedure to $\varphi(1-\varphi) / (1 + \theta)$ on $[0,1]$. Hence, the variances can be easily controlled through the parameter $\theta$. Fourth, we can notice that the parameter $p$, involved in the CDF Algorithm, has no influence on the limiting variance. This is not surprising as the length of each element of the form $[\varphi^{-1}(U_j^\ast), \varphi^{-1}(U_{j+1}^\ast)]$ gets smaller as $n \to \infty.$

The proof of Proposition~\ref{prop:warping} is postponed to Appendices~\ref{annexe_preuve}-\ref{appendix::proof_CDF}. The calculation of the associated expectation and variance terms is performed on each element of the partition $\{[U_j^\ast, U_{j+1}^\ast]\;;\; j\in \llbracket0, n\rrbracket \}$ of $[0,1]$ for the BK case and on each element of the partition $\{[\varphi^{-1}(U_j^\ast), \varphi^{-1}(U_{j+1}^\ast)]\;;\; j \in \llbracket0,n\rrbracket \}$ for the CDF case. Getting the explicit expressions relies on lengthy computations mainly based on properties of Dirichlet distribution of order statistics of uniform random vectors. The convergence results are essentially obtained by controlling terms of the form $\mathbb{E}[\sum_{j=0}^n  \Tilde{\alpha}_j^k \mathds{1}_{[ U_j^\ast, U_{j+1}^\ast)}(t)]$ for $k=1,2$ for the BK Algorithm and in terms of the form $\mathbb{E}[\sum_{j=0}^n  \Tilde{\gamma}_{j,p}^k \mathds{1}_{[ \varphi^{-1}(U_j^\ast), \varphi^{-1}(U_{j+1}^\ast))}(t)]$ for $k=1,2$ for the CDF Algorithm. In other words, all the information to prove the asymptotic results is contained in the jump process  $\sum_{j=0}^n \tilde{\alpha}_j \mathds{1}_{[U_j^\ast, U_{j+1}^\ast)}$ for the regularized process $w_n^\BK(\cdot;\theta)$ and in the jump process $\sum_{j=0}^n  \Tilde{\gamma}_{j,p} \mathds{1}_{[ \varphi^{-1}(U_j^\ast), \varphi^{-1}(U_{j+1}^\ast))}$ for the regularized process $w_n^\CDF(\cdot;\theta,p).$

As a consequence of Proposition~\ref{prop:warping}, we get the asymptotic behaviour of the $L^2$ risk function stated in Corollary~\ref{cv_risk_BK_CDF}.

\begin{corollary}\label{cv_risk_BK_CDF} For any $p \in (0,1),\;\theta \in \mathbb{R}_+^\ast$ and any warping function $\varphi \in \Tilde{\Gamma},$ we have as $n\to \infty$   
\[ \mathbb{E}\left[ \|w_n^{\BK}(\cdot;\theta) - \varphi \|^2 \right] 
\to 
\frac{1}{1 + \theta} \int_0^1 \varphi(t)(1 - \varphi(t)) \mathrm{d}t \] 
and 
\[ \mathbb{E}\left[ \|w_n^{\CDF}(\cdot;\theta,p) - \varphi \|^2 \right] 
\to
\frac{1}{1 + \theta} \int_0^1 \varphi(t)(1 - \varphi(t)) \mathrm{d}t. \] 
\end{corollary}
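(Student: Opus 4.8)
The plan is to reduce the statement to the pointwise moment asymptotics already established in Proposition~\ref{prop:warping} through a bias--variance decomposition, and then to justify the interchange of limit and integral by a dominated convergence argument. I write the argument for $w_n^{\BK}(\cdot;\theta)$; the one for $w_n^{\CDF}(\cdot;\theta,p)$ is identical, since Proposition~\ref{prop:warping} supplies both processes with the same first- and second-moment limits.

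First I would use that the integrand $(w_n^{\BK}(t;\theta)-\varphi(t))^2$ is non-negative and apply Tonelli's theorem to exchange expectation and integration over $[0,1]$:
\[
\mathbb{E}\left[\|w_n^{\BK}(\cdot;\theta)-\varphi\|^2\right]
= \int_0^1 \mathbb{E}\left[(w_n^{\BK}(t;\theta)-\varphi(t))^2\right]\d t .
\]
For each fixed $t\in(0,1)$, the bias--variance decomposition gives
\[
\mathbb{E}\left[(w_n^{\BK}(t;\theta)-\varphi(t))^2\right]
= \Var[w_n^{\BK}(t;\theta)] + \left(\mathbb{E}[w_n^{\BK}(t;\theta)]-\varphi(t)\right)^2 .
\]
Next I would invoke Proposition~\ref{prop:warping}: Statement~(i) yields $\mathbb{E}[w_n^{\BK}(t;\theta)]-\varphi(t)=O(1/n)$, so the squared-bias term is $O(1/n^2)$ and vanishes, while Statement~(ii) gives $\Var[w_n^{\BK}(t;\theta)]\to \varphi(t)(1-\varphi(t))/(1+\theta)$. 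Combining these, for every $t\in(0,1)$,
\[
\mathbb{E}\left[(w_n^{\BK}(t;\theta)-\varphi(t))^2\right]
\xrightarrow[n\to\infty]{}
\frac{1}{1+\theta}\,\varphi(t)(1-\varphi(t)).
\]

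The step one has to be careful about — and the only real obstacle — is the passage from this pointwise convergence to convergence of the integral, since the rates in Proposition~\ref{prop:warping} are obtained for each fixed $t$ and are not claimed to be uniform in $t$. Here the construction resolves the difficulty: both $w_n^{\BK}(t;\theta)$ and $\varphi(t)$ take values in $[0,1]$, so $(w_n^{\BK}(t;\theta)-\varphi(t))^2\le 1$, whence $\mathbb{E}[(w_n^{\BK}(t;\theta)-\varphi(t))^2]\le 1$ for all $n$ and all $t$. Since the constant function $1$ is integrable on $[0,1]$, dominated convergence applies to the sequence of integrands and permits passing the limit inside the integral. This yields
\[
\mathbb{E}\left[\|w_n^{\BK}(\cdot;\theta)-\varphi\|^2\right]
\xrightarrow[n\to\infty]{}
\frac{1}{1+\theta}\int_0^1 \varphi(t)(1-\varphi(t))\,\d t ,
\]
as claimed, and the identical computation provides the corresponding limit for the CDF Algorithm.
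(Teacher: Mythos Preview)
Your argument is correct and follows exactly the route the paper has in mind: the corollary is presented there simply as a consequence of Proposition~\ref{prop:warping}, and the bias--variance decomposition together with dominated convergence (using that $w_n^{\BK}(\cdot;\theta)$, $w_n^{\CDF}(\cdot;\theta,p)$ and $\varphi$ all take values in $[0,1]$) is the natural way to make that consequence precise. Your care in noting that the pointwise rates from Proposition~\ref{prop:warping} need not be uniform in $t$, and that the uniform bound by $1$ is what legitimizes the passage to the limit, is exactly the point to highlight.
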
 

Now, we focus on properties of paths obtained with MZW Algorithm. As mentioned in the previous section, the resulting paths are intrinsically linked to the geometry of the spaces $\Gamma_1$ and $L^2([0,1]).$ In particular, since $w_m^{\MZW}(\cdot;\theta)$ and $\varphi$ belongs to $\Gamma_1,$ it seems relevant to seek a control of the term $\mathbb{E}[\| w_m^{\MZW}(\cdot;\theta) \ominus_{\Gamma_1} \varphi \|_{\Gamma_1}^2].$ A simple calculation shows that, for all warping function $\Check{\varphi} \in \Gamma_1,$ the term $\mathbb{E}[\| w_m^{\MZW}(\cdot;\theta) \ominus_{\Gamma_1} \Check{\varphi} \|_{\Gamma_1}^2 ]$ admits the following decomposition \begin{equation}\label{decompo_biais_variance}
    \mathbb{E}[\| w_m^{\MZW}(\cdot;\theta) \ominus_{\Gamma_1} \Check{\varphi} \|_{\Gamma_1}^2 ] = \sum_{i=1}^m \Var[G_i] + \| \varphi \ominus_{\Gamma_1} \Check{\varphi} \|_{\Gamma_1}^2
\end{equation} 
whereby we deduce the next result.

\begin{proposition}\label{prop_MZW}
Let $m \in \mathbb{N}^\ast, \theta \in \mathbb{R}_+^\ast$ and $\varphi \in \Gamma_1.$ Let $(v_i)_{i \geq 1}$ be a sequence of positive real numbers satisfying the hypothesis of Lemma~\ref{lemme_MZW}. 
Then, the resulting process $w_m^{\MZW}(\cdot;\theta)$ has a unique Fréchet mean in $\Gamma_1$ for the distance associated to the inner product $\langle\cdot,\cdot\rangle_{\Gamma_1}$ given by $\varphi$ and the associated Fréchet variance is given by 
\[\mathbb{E}\left[ \| w_m^{\MZW}(\cdot;\;\theta) \ominus_{\Gamma_1} \varphi \|_{\Gamma_1}^2 \right] = \mathbb{E}\left[ \|X_m - \psi(\varphi)\|^2 \right] = \frac{1}{1 + \theta} \sum_{i=1}^m v_i. \]
\end{proposition}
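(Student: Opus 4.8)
The plan is to transport the whole problem from $\Gamma_1$ to the Hilbert space $H(0,1)$ through the isometric isomorphism $\psi$, where a Fréchet mean reduces to an ordinary expectation. Since $\psi$ is an isometry, for any $\gamma_1,\gamma_2\in\Gamma_1$ one has $\|\gamma_1\ominus_{\Gamma_1}\gamma_2\|_{\Gamma_1}=\|\psi(\gamma_1)-\psi(\gamma_2)\|$. Applying this with $\gamma_1=w_m^{\MZW}(\cdot;\theta)$ and $\gamma_2=\varphi$, and using $\psi(w_m^{\MZW}(\cdot;\theta))=X_m(\cdot;\theta)$ by construction of the algorithm, immediately gives the middle identity $\mathbb{E}[\|w_m^{\MZW}(\cdot;\theta)\ominus_{\Gamma_1}\varphi\|_{\Gamma_1}^2]=\mathbb{E}[\|X_m-\psi(\varphi)\|^2]$ and, more generally, rewrites the Fréchet functional $\Check{\varphi}\mapsto\mathbb{E}[\|w_m^{\MZW}(\cdot;\theta)\ominus_{\Gamma_1}\Check{\varphi}\|_{\Gamma_1}^2]$ as $\Check{\varphi}\mapsto\mathbb{E}[\|X_m-\psi(\Check{\varphi})\|^2]$ in $L^2([0,1])$.

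Next I would invoke the bias–variance decomposition \eqref{decompo_biais_variance}, valid for every $\Check{\varphi}\in\Gamma_1$ (it follows from a Pythagorean expansion of $\|X_m-\psi(\Check{\varphi})\|^2$ after centering by $\psi(\varphi)$, the cross term vanishing in expectation because the random part $\sum_{i=1}^m G_i\phi_i$ has mean zero). This splits the Fréchet functional into a constant term $\sum_{i=1}^m\Var[G_i]$ and a nonnegative term $\|\varphi\ominus_{\Gamma_1}\Check{\varphi}\|_{\Gamma_1}^2$. Since the first term does not depend on $\Check{\varphi}$, and the definiteness of $\|\cdot\|_{\Gamma_1}$ (equivalently, the injectivity of $\psi$) forces the second term to vanish if and only if $\Check{\varphi}=\varphi$, the functional attains its minimum over $\Gamma_1$ at the single point $\varphi$. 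This establishes both the existence, within $\Gamma_1$, and the uniqueness of the Fréchet mean, and identifies it as $\varphi$.

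It then remains to evaluate the Fréchet variance, i.e.\ the minimal value $\sum_{i=1}^m\Var[G_i]$. Writing $X_m-\psi(\varphi)=\sum_{i=1}^m G_i\phi_i$ and using that $(\phi_i)_{i\ge 1}$ is orthonormal in $L^2([0,1])$, I would expand $\|X_m-\psi(\varphi)\|^2=\sum_{i=1}^m G_i^2$ and take expectations; since each $G_i$ has mean $0$ and variance $v_i/(1+\theta)$, this yields $\sum_{i=1}^m\Var[G_i]=\frac{1}{1+\theta}\sum_{i=1}^m v_i$, completing the chain of equalities.

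Once the isometry is exploited, the computation is essentially routine, so there is no serious analytical obstacle; the only points requiring care are bookkeeping ones. One should confirm that $X_m=\psi(\varphi)+\sum_{i=1}^m G_i\phi_i$ indeed lies in $H(0,1)$ almost surely — its integral vanishes because $\psi(\varphi)\in H(0,1)$ and each $\phi_i$ is orthogonal to the constants, while boundedness follows from $\varphi\in\Gamma_1$ and the finiteness of the sum — so that $w_m^{\MZW}(\cdot;\theta)=\psi^{-1}(X_m)$ is a well-defined element of $\Gamma_1$ and the Fréchet mean genuinely belongs to $\Gamma_1$. The hypothesis of Lemma~\ref{lemme_MZW} on $(v_i)_{i\ge 1}$ ensures the scores and the kernel are well-defined, but since the truncation order $m$ is finite, no convergence issue arises in the final summation.
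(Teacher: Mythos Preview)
Your proof is correct and reaches the same decomposition \eqref{decompo_biais_variance} as the paper, but by a different and more economical route. The paper works entirely inside $\Gamma_1$: it explicitly computes $w_m^{\MZW}(\cdot;\theta)\ominus_{\Gamma_1}\Check{\varphi}$, takes the logarithm of its derivative, squares and integrates, and then recognizes by hand that the resulting expression splits as $\sum_{i=1}^m\Var[G_i]+\|\varphi\ominus_{\Gamma_1}\Check{\varphi}\|_{\Gamma_1}^2$ --- in effect re-deriving the isometry relation through direct computation. You instead invoke the isometric isomorphism $\psi$ at the outset, transport the Fréchet functional to $L^2([0,1])$, and reduce everything to the standard bias--variance split for a random element with known mean $\psi(\varphi)$; the $\Gamma_1$ structure then plays no further role.

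Your approach is cleaner and makes the mechanism transparent (nothing about $\Gamma_1$ beyond the existence of the isometry is used). The paper's approach is more self-contained, since it does not ask the reader to trust the linearity and norm-preservation of $\psi$ as a black box, and it exhibits concretely where the orthogonality $\int_0^1\phi_i=0$ enters. Both are complete; your added remark that $X_m\in H(0,1)$ almost surely is a useful sanity check that the paper leaves implicit.
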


Since $\sum_{i=1}^m v_i < \infty,$ we can notice that the Fréchet mean and Fréchet variance are analogous to the asymptotic expressions of expectation and variance of Proposition~\ref{prop:warping}. 
Thus, Proposition~\ref{prop_MZW} justifies the relevance of the MZW procedure for simulating $\varphi$ in the space $\Gamma_1$ and for controlling the associated Fréchet variance term, thanks to the parameter $\theta$. It also highlights that the $L^2$-norm $\mathbb{E}\left[ \|X_m - \psi(\varphi)\|^2 \right]$ can be controlled. However, it does not lead to a control of the $L^2$-norm $\mathbb{E}[\|w_m^{\MZW}(\cdot; \theta) - \varphi\|^2]$ as $m \to \infty$, as will be illustrated in Section~\ref{ssec:num_study}.

\section{Simulation and application to temperature data}\label{section::Simulations}

\subsection{Numerical study}\label{ssec:num_study}

This subsection provides a numerical comparison of the three algorithms presented in Section~\ref{section_description_algos}. We proceed in two steps. First, we illustrate the warping functions obtained with each algorithm for different target function $\varphi$. Second, we examine in more detail how the simulated warping functions vary in terms of empirical mean and variance under different algorithm parameters. All simulations follow a common framework. As mentioned in the description of MZW Algorithm, the way of simulating the coefficients $G_i$ depends on user's choice and classical examples are given in \citet{ma2024stochastic}. In the sequel, we assume that $(G_i)_{i \geq 1}$ is a sequence of independent and identically distributed random variables from Normal distribution with mean 0 and variance parameter $v_i$ given by $v_i = 1 / (i^2 (1 + \theta)). $ For the CDF Algorithm, Proposition~\ref{prop:warping} justifies that parameter $p$ does not intervene in the asymptotic expectation and variance terms. A separate numerical study of the $L^2$ risk, not presented here, performed for different values of $p$ did not reveal an optimal choice for this parameter, so we fix $p=0.5$ for all simulations involving the CDF Algorithm.  

\paragraph{Simulation setup and sample paths.} As a first illustration, we generate 30 paths from each algorithm for three different warping functions $\varphi_i, i\in \llbracket 1,3 \rrbracket,$ where $ \varphi_1$ is the cumulative distribution function of a Uniform distribution on $[0,1],$ $\varphi_2$ is the cumulative distribution function of a $\text{Beta}(5,2)$ distribution and $\varphi_3$ is the warping function defined by $t \in [0,1] \mapsto (e^{-5t} -1)/(e^{-5}-1)$. In particular, $\varphi_i$ for $i = 1,3$ belong to $\Gamma_1$ while $\varphi_2$ only belongs to $\Tilde{\Gamma}$. The parameters $n$ and $m$, as well as the concentration parameter $\theta$, are kept fixed. The corresponding sample paths are presented in Figure~\ref{fig_main_plusieursphi}. 

\paragraph{Behaviour of the expectation.} Then, we study the behaviour of the empirical mean of the simulated processes for each algorithm. For the CDF and BK Algorithms, we illustrate Proposition~\ref{prop:warping}(i). For the MZW procedure, we have no equivalent result at our disposal since the metric $\| \cdot \|_{\Gamma_1}$ involved in Proposition~\ref{prop_MZW} is not appropriate to this study. This leads us to numerically study the behaviour of the expectation term of the MZW simulated process as $m \to \infty$. In the sequel, we let $m$ equal to $n.$ For each algorithm, we simulate $250$ paths with fixed parameter $\theta$ for  $n,m=5,25,200.$ For each algorithm, we represent the empirical mean of the simulated curves in black and the theoretical warping function $\varphi$ in red. 
Figure~\ref{fig_main_mvarie} reports the results for the theoretical warping function $\varphi=\varphi_3$. Illustrations for $\varphi=\varphi_1,\varphi_2$ can be found in Appendix~\ref{Appendix_autres_fig}. For the CDF and BK procedures, we can observe the convergence of the empirical mean to $\varphi$ as $n \to \infty,$ which is in agreement with Proposition~\ref{prop:warping}(i). In the MZW case, we can remark that the empirical mean curve can be quite far from the target function $\varphi$ even if $m$ is large.

\paragraph{Behaviour of the variance term.} We now illustrate the importance of the parameter $\theta$ to control the variability of the simulated warping functions. To this end, we simulate 250 paths with each procedure with fixed parameters $n,m=15$ for $\theta=0.1, 1,50$, again showing the empirical mean in black and the theoretical warping function $\varphi$ in red. Figure~\ref{fig_main_thetavarie} provides the results for $\varphi=\varphi_3$. Illustrations for $\varphi=\varphi_1,\varphi_2$ can be found in Appendix~\ref{Appendix_autres_fig}. For BK and CDF procedures, we remark that the variability decreases when the value of the parameter $\theta$ increases, which is in line with Proposition~\ref{prop:warping}(ii). We also notice the same behaviour with MZW Algorithm. This observation is not surprising. Indeed, the term $X_m$ is the sum of a deterministic component $\psi(\varphi)$ and of a random component $ \sum_{i=1}^m G_i \phi_i.$ When the value of $\theta$ increases, the variance parameter of each $G_i$ decreases and thus, $X_m$ behaves closer to the deterministic process $\psi(\varphi)$. A proof of this result in the case where $G_i$ for $i \in \llbracket 1,m\rrbracket,$ are independent Normal distributed random variables is given in Appendix~\ref{appendix::cv_moments_MZW}. 

\paragraph{Conclusion and recommendations.} As already underlined, the MZW Algorithm is not designed to generate a given $\varphi$ function. However, it has the merit of being easy to implement and fast. By construction, using the Karhunen-Loève expansion, this method is also highly flexible: (i) the user can adjust the variances of the random variables $G_i$ as well as the choice of basis functions; (ii) the regularity of the warping process can be easily controlled. When $m<\infty$, the regularity of sample paths are governed by the regularity by the one of $\psi(\varphi)$, whereas the two other algorithms produce only continuous paths. 
Let us now compare the BK and CDF Algorithms. From a practical perspective, both are  straightforward to implement and very fast. The two methods are theoretically nearly identical: the rates of convergence for the expectation and variance terms are the same. Moreover, a table, not reproduced here, evaluating Monte-Carlo approximations (over the 250 replications) of the $L^2$-errors of the simulated warping processes for these two algorithms, across all situations considered in Figures~\ref{fig_main_mvarie}-\ref{fig_main_thetavarie} and  \ref{Prop1(i)_varphi1}-\ref{Prop1(ii)_varphi2}, does not reveal any clear difference between the BK and CDF Algorithms. 
From a numerical point of view, the CDF Algorithm tends to be slightly simpler and more stable than the BK Algorithm. Indeed, the parameter for the Dirichlet distribution involved in the CDF Algorithm is constant and equal to $\theta/n$, whereas in the BK algorithm it is equal to $\delta_j:=\theta(\varphi(U_j^*) - \varphi(U_{j-1}^*)$. This observation has two consequences. First, the same Dirichlet realizations can be reused for different target functions $\varphi$ in the CDF Algorithm, which is not possible with the BK Algorithm. Second, depending on the programming language and implementation procedure to generate Dirichlet realizations, when $\varphi$ contains flat regions, some $\delta_j$ may become extremely close to zero, potentially leading to numerical issues. 
We also believe that the CDF Algorithm is conceptually more natural. It is directly built on the idea of generating a continuous cumulative distribution function, which is, by definition, a warping function. This construction conveniently separates the control of the expected warping function, through the generation of variables of the form $\varphi^{-1}(U_j^*)$, and the control of the variance, which is achieved through random weights generated independently of the function $\varphi$.

\begin{figure}[t]
\includegraphics[scale = 0.80]{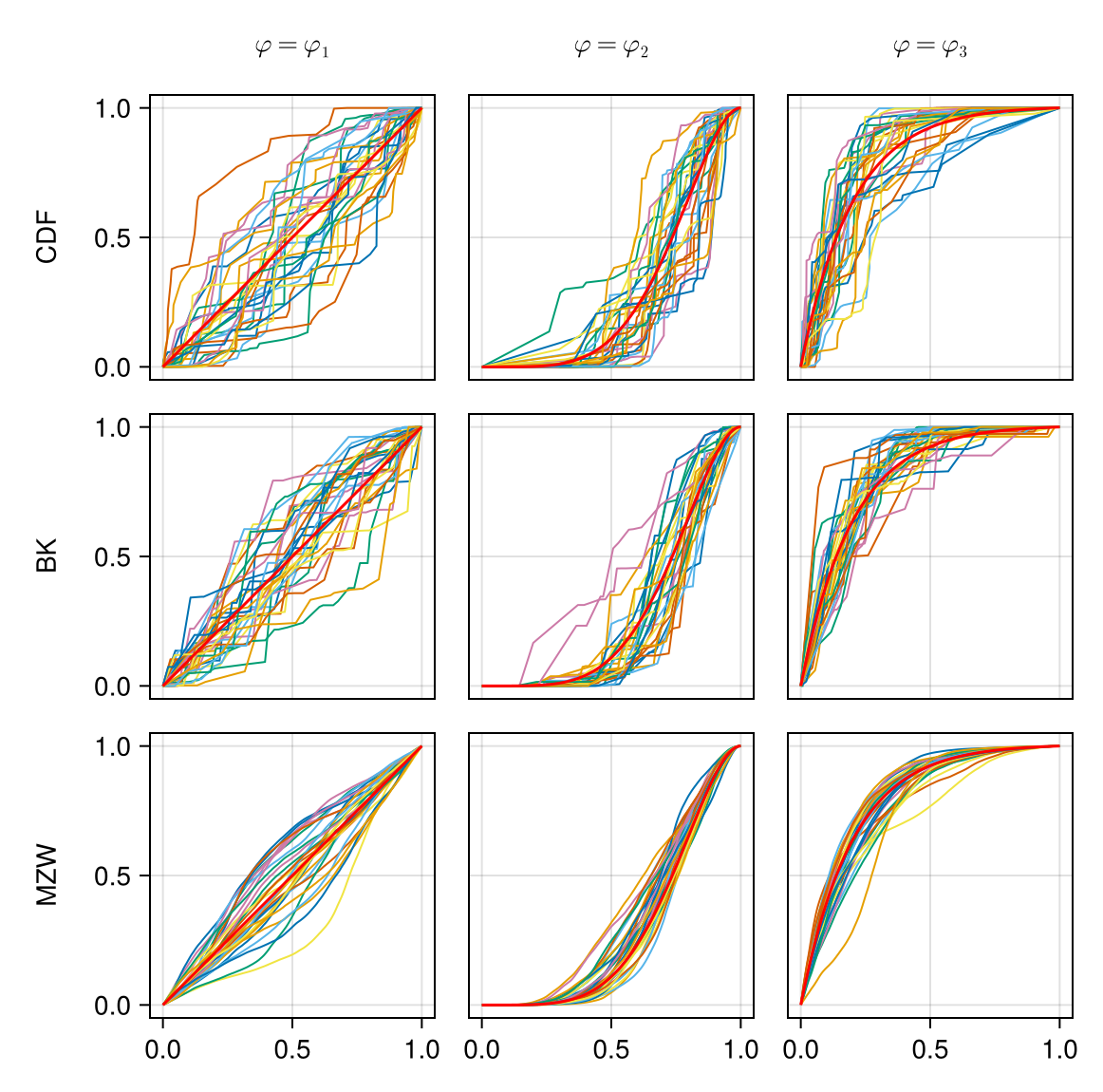}
\caption{Simulation of 30 paths with each algorithm for three different functions $\varphi :$ $\varphi_1, \varphi_2$ and $\varphi_3.$ Here, $n~= 25,\;\theta = 10.$ For MZW Algorithm, the truncation order is fixed with $m = 800.$ In each case, the red curve corresponds to the associated theoretical function $\varphi.$}
\label{fig_main_plusieursphi}
\end{figure}

\begin{figure}[t]
    \includegraphics[scale = 0.80]{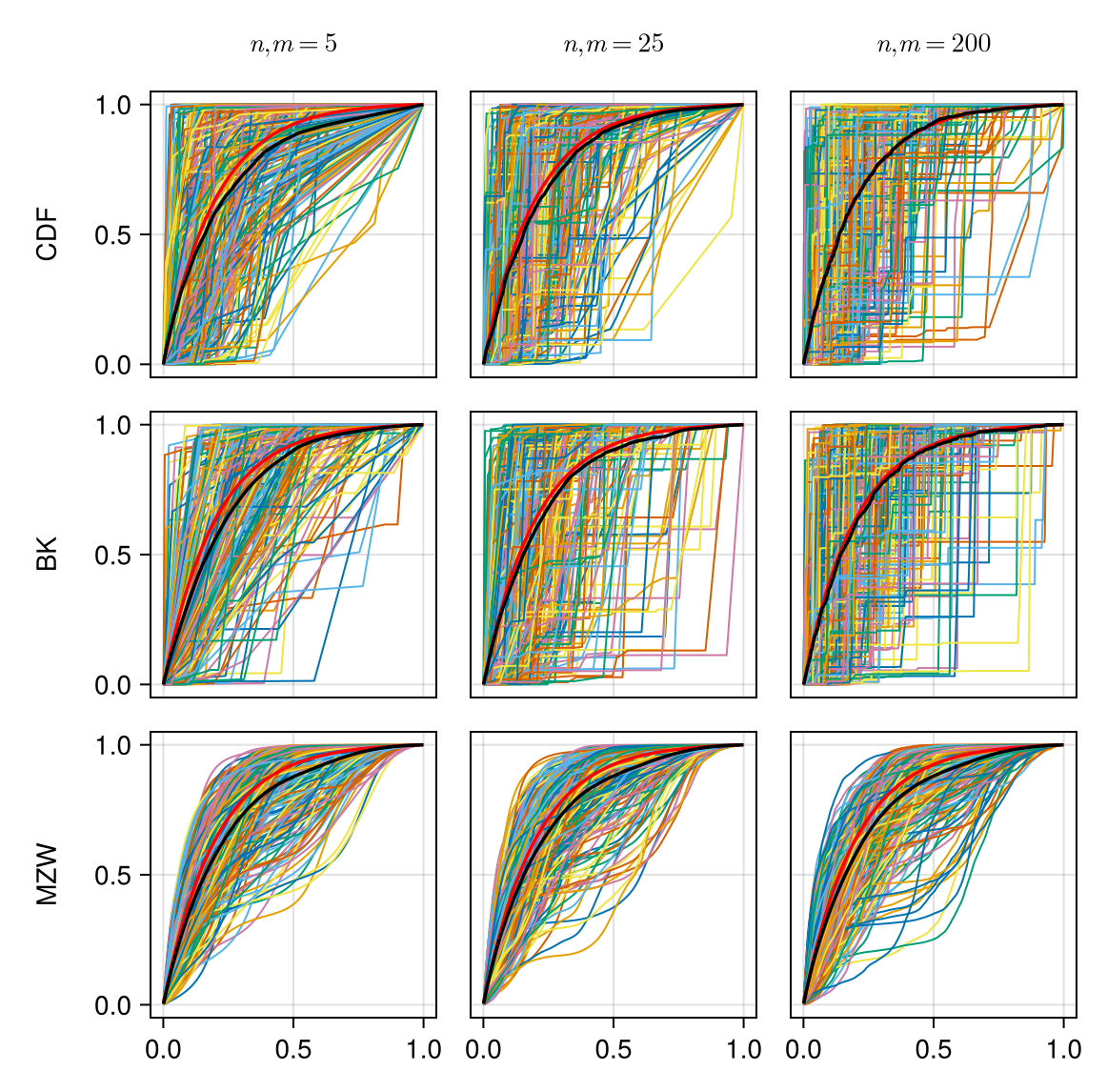}
    \caption{Simulation of 250 paths with CDF and BK (respectively MZW) Algorithms for different values of $n$ (respectively $m$). Here we take $\varphi$ equals to $\varphi_3$ and parameter $\theta$ is fixed to $1.$}
    \label{fig_main_mvarie}
\end{figure}

\begin{figure}[t]
    \includegraphics[scale = 0.80]{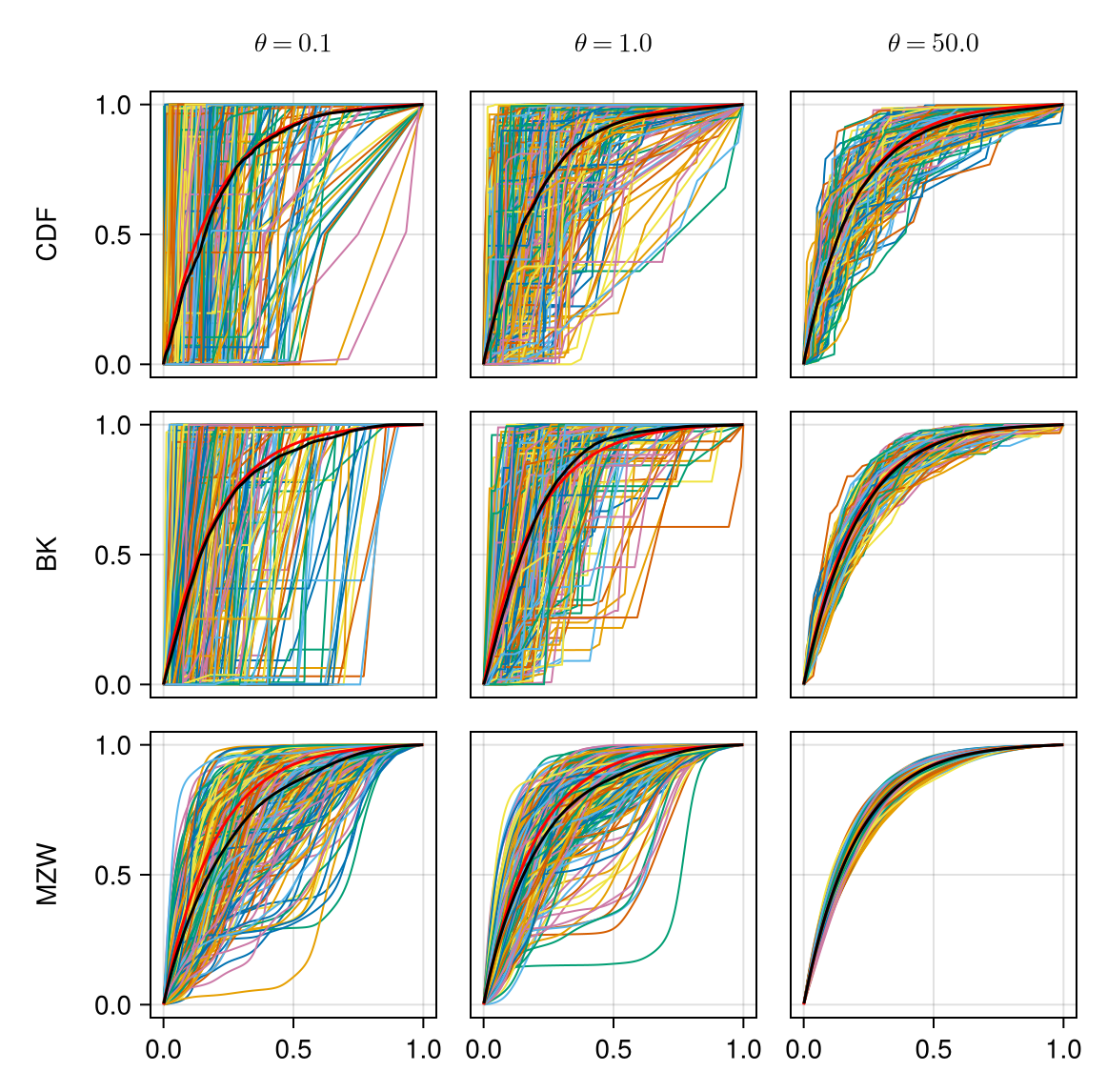}
    \caption{Simulation of 250 paths with each algorithm for different values of $\theta.$ Here we take $\varphi$ equals to $\varphi_3.$ Parameters $n$ and $m$ are fixed to 15.}
    \label{fig_main_thetavarie}
\end{figure}

\subsection{Application to temperature data}

We consider hourly temperature records in Montreal from 1953 to 2021 for a total of 648,535 observations. The data we consider is an extraction from the dataset described in \cite{mekis2020}. Our goal is to characterize the evolution of temperatures over $I=7$ periods of approximately ten years each. The periods considered along with $N_i$ the corresponding number of available data are presented in Table~\ref{tab:donnees_par_periode}. Note that data for the entire year 2013 are missing.

\begin{table}[ht]
    \centering
    \begin{tabular}{ccc}\hline
         Number i of the period & Corresponding period & Number $N_i$ of data \\ \hline
         1 & 1953-1960 & 70,118  \\ 
         2 & 1960-1970 & 96,426\\
         3 & 1970-1980 & 96,431\\
         4 & 1980-1990 & 96,430 \\
         5 & 1990-2000 & 96,347 \\
         6 & 2000-2010 & 96,403 \\
         7 & 2010-2021 & 96,380\\\hline
    \end{tabular}
    \caption{Periods considered and corresponding number of available data per period. For short, we note 2010-2021 the period from 2010 to 2012 and from 2014 to 2021.}
    \label{tab:donnees_par_periode}
\end{table}

To study the evolution of temperatures, a standard approach consists in comparing the summary statistics of annual data with reference values calculated with temperature data collected over a reference period. These comparisons form temperature anomalies series (e.g \citep{li2024record}, \citep{jones1986global}). Here, instead of limiting the analysis to summary statistics, we propose to analyse and compare the whole distribution of the temperatures data of each period with that of a reference period, which we choose to be 1970-1990.

Our approach relies on the comparison of the quantile function $F_i^{-1}$ of each period $i \in \llbracket 1, I \rrbracket$ with the quantile function $F_0^{-1}$ of the reference period. More precisely, we propose a modification of the model presented in \citet{gallon2013statistical} where we assume that for $i\in \llbracket 1, I \rrbracket, F_i^{-1} = F_0^{-1} \circ \gamma_i,$ with $\gamma_i$ a strictly increasing and continuous function on $[0,1].$ To define $\gamma_i$ on the whole line segment $[0,1],$ we set $F_i^{-1}(0) = \inf\{ F_i^{-1}(x) ; x \in (0,1) \}$ and $F_i^{-1}(1) = \sup\{F_i^{-1}(x) ; x \in (0,1)\},$ for $ i \in \llbracket0, I \rrbracket.$ We then study the rescaled warping functions $w_i = (\gamma_i - \gamma_i(0)) / (\gamma_i(1) - \gamma_i(0)).$ In the absence of changes in the temperature distribution over time, the warping function reduces to the identity function. Therefore, we aim to compare each warping function $w_i$ with the identity function on $[0,1].$


To do so, for each period $i \in \llbracket 1 , I \rrbracket$, we first randomly split the data into $m$ samples, yielding $m$ empirical quantile functions $F_{i1}^{-1},\dots, F_{im}^{-1}$ and consequently $m$ warping processes, $w_{ij} = (\gamma_{ij} - \gamma_{ij}(0)) / (\gamma_{ij}(1) - \gamma_{ij}(0))$ with $ \gamma_{ij}=F_0 \circ F_{ij}^{-1}.$
From a computational point of view, the construction of the $m$ warping functions $w_{ij}$ results from a linear interpolation such that each $w_{ij}$ belongs to $\Gamma_1.$ Figure~\ref{fig:quantile_data} gives a representation of the $m$ empirical quantile functions for the period 2010-2021, with $m = 50.$ 

\begin{figure}[t]
    \centering
    \includegraphics[scale = 0.55]{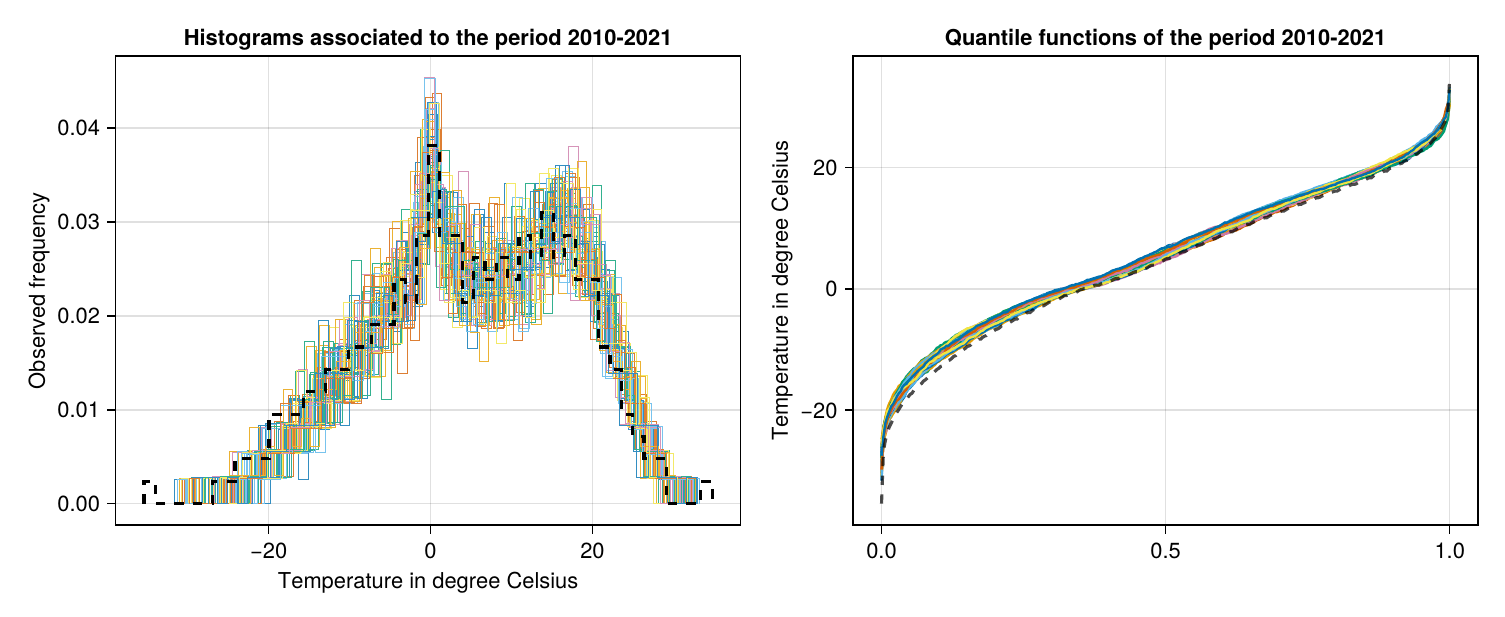}
    \caption{Left: Histograms of the temperatures data for the $50$ samples obtained by random splitting for the period 2010-2021. In addition, the histogram of the reference period is represented in black. Right: Corresponding quantile functions curves. The black dashed curve corresponds to the quantile function of the reference period.}
    \label{fig:quantile_data}
\end{figure}

Next, we assume that the $m$ warping  functions are realizations of an $m$-sample from a warping process $w_i$ obtained with the CDF Algorithm with known parameters $p = 0.5$ and $n_i = N_i / m$. We then estimate the function $\varphi_i =\mathbb E [w_i]$ and the concentration parameter $\theta_i$ as follows
\begin{equation}\label{appli_estim}
    \Hat{\varphi_i} = \frac{1}{m}\sum_{j=1}^m w_{ij} \quad \textrm{and} \quad \Hat{\theta}_i = \frac{ \int_0^1 \Hat{\varphi}_i(t) (1 - \Hat{\varphi}_i(t))\d t}{\int_0^1 \Hat{v}_i^c(t)\d t} - 1,
\end{equation}
where $\Hat{v}_i^c(t)$ is the unbiased sample variance of $w_{i1}(t),\ldots,w_{im}(t)$. 

We first discuss the properties of the estimators $ \Hat{\varphi}_i $ and $\Hat{\theta}_i.$ We note that the expectation of $\Hat{\varphi}_i$ coincides with the expectation of $w_i.$ Consequently, $\Hat{\varphi}_i$ is an asymptotically unbiased estimator of $\varphi_i,$ as $n_i\to\infty.$ We investigate the properties of the estimator of the concentration parameter through a short simulation study. For a fixed value of $\theta$, we generate $m = 50$ warping curves using the CDF Algorithm with $n=25, 200, 2000, 40000$, and compute $\Hat{\varphi}$ and $\Hat{\theta}$ as defined in \eqref{appli_estim}. This procedure is repeated $B= 100$ times. Table~\ref{tab:sim_study} reports the empirical means and standard deviations, denoted by $\tilde\theta^{\mathrm{MC}}$ and $\Tilde \sigma^{\mathrm{MC}}$, of the scaled parameter estimates $\hat \theta_b/\theta$, $b=1,\dots,B$. We observe that the empirical means $\Tilde{\theta}^{\mathrm{MC}}$ are quite close to 1 and that the values of the standard deviations $\Tilde \sigma^{\mathrm{MC}}$ decrease when the concentration parameter $\theta$ increases. It is to be noticed that estimates of the concentration parameter are also quite sensitive to the parameter $n$. 

\begin{table}[ht]
\centering
\begin{tabular}{ccccccc}
\hline
&\multicolumn{2}{c}{$\theta=5$}&\multicolumn{2}{c}{$\theta=500$}&\multicolumn{2}{c}{$\theta=1800$} \\
$n$& 
$\Tilde{\theta}^{\mathrm{MC}}$& $\Tilde{\sigma}^{\mathrm{MC}}$ &
$\Tilde{\theta}^{\mathrm{MC}}$& $\Tilde{\sigma}^{\mathrm{MC}}$ &
$\Tilde{\theta}^{\mathrm{MC}}$& $\Tilde{\sigma}^{\mathrm{MC}}$ \\
\hline
  25 & 0.696 & 0.122 & 0.047 & 0.007 & 0.014 & 0.002\\
  200 & 0.798 & 0.148 & 0.296 & 0.039& 0.102 & 0.014 \\
  2000 & 0.808 & 0.164 & 0.831 & 0.103& 0.531 & 0.07 \\
  40000 & 0.834 & 0.136 & 0.99 & 0.128& 0.972 & 0.122  \\
 \hline
\end{tabular}
    \caption{Empirical means and standard deviations, denoted by $\tilde\theta^{\mathrm{MC}}$ and $\tilde \sigma^{\mathrm{MC}}$, of the scaled parameter estimates $\hat \theta_b/\theta$, $b=1,\dots,B$ for different values of $n$ and $\theta$. Here, $\varphi$ corresponds to the identity function on $[0,1]$ and $p=0.5.$}
    \label{tab:sim_study}
\end{table}

Now, we return to the data analysis of temperature data. We propose to construct confidence bands with coverage probability $1-\alpha$ for the theoretical mean $\varphi_i$ associated to the period $i$ with a parametric bootstrap resampling approach, with confidence level $1-\alpha =95\%.$ We proceed as follows for each period $i.$ First, we generate $B = 100$ realizations $\tilde \varphi_{ib}$ each of them being defined as the empirical mean of $m$ warping processes with parameters $p, n_i, \Hat{\varphi}_i, \Hat{\theta}_i$. In particular, we can mention that the different estimated concentration parameters $\Hat{\theta}_i$ range from 1350 to 2200 for the different periods. Second, we calculate the width $h_i$ of the confidence bands obtained as the $1- \alpha$ empirical quantile of the vector $(\| \Hat{\varphi}_i - \Tilde{\varphi}_{ib}   \|_{\infty})_{b \in \llbracket 1, B \rrbracket}$ motivated by the simplicity of interpretation of the supremum norm. A comparative study between the $L^2$-norm and the supremum norm for the construction of the confidence bands is presented in \citet{cuevas2006use}. Figure~\ref{fig:BC_bootstrap} provides the functions $p \mapsto \Hat{\varphi}_i(p) - p$ and the confidence bands $h_i$ obtained for each period $i \in \llbracket1, I \rrbracket$. We can observe that the confidence bands of the periods 1990-2000, 2000-2010 and 2010-2021 are far from the value 0, indicating that the associated quantile functions differ from that of the reference period. Moreover, since these confidences bands lie entirely above 0, this points to upward shift of the quantile functions relative to the reference quantile function. This implies that the hourly temperature distributions for these periods are shifted to the right compared to the reference distribution, reflecting an increase in higher temperature values, a phenomenon that intensifies over the three considered periods.


\begin{figure}[t]
    \centering
    \includegraphics[scale = 0.50]{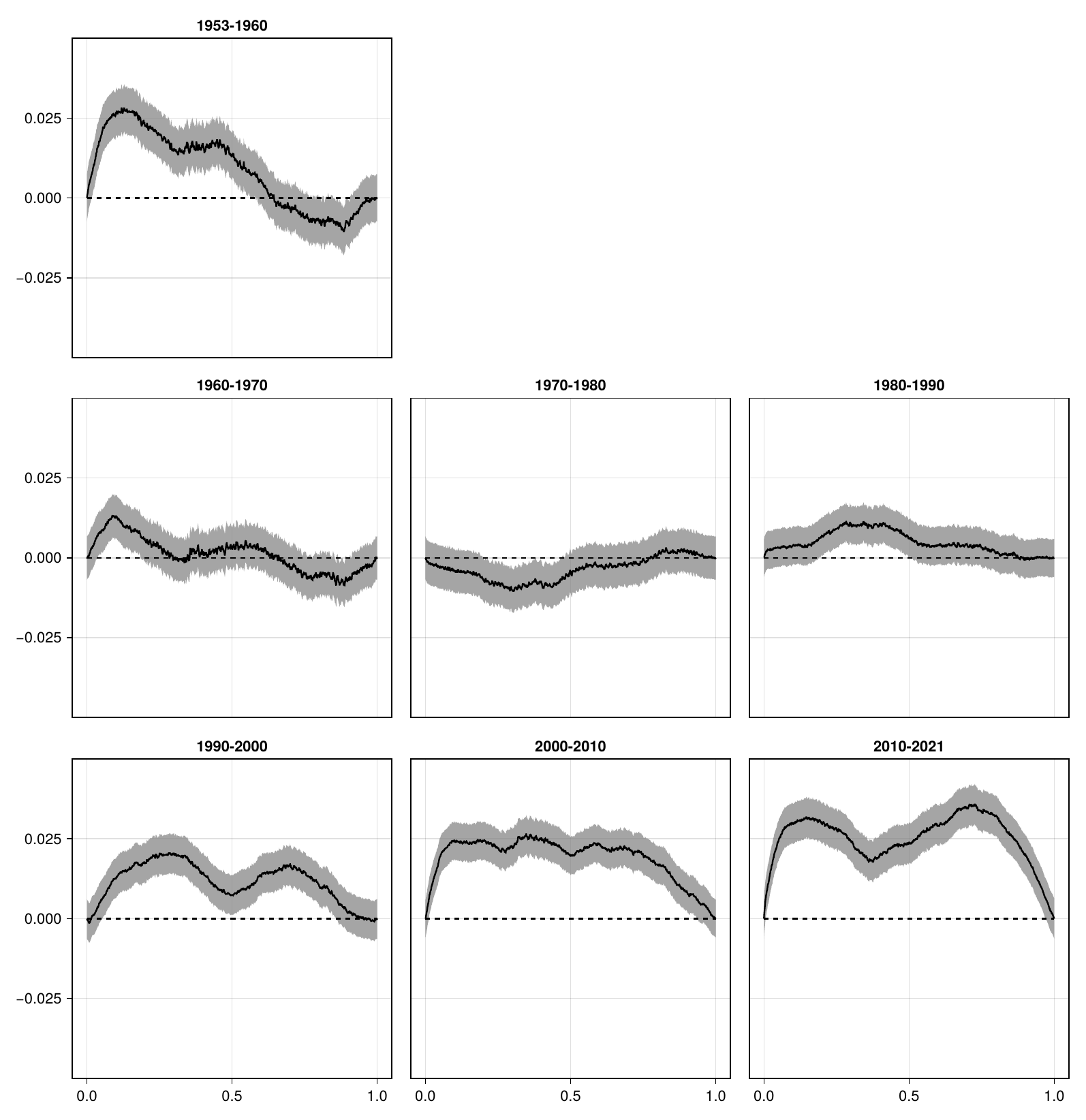}
    \caption{Confidence bands obtained via bootstrap resampling approach for the $I$ periods. The black  curve represents the difference between the sample mean $\Hat{\varphi}_i$ and the identity function on $[0,1],$ for $i \in \llbracket1, I \rrbracket.$}
    \label{fig:BC_bootstrap}
\end{figure}


\begin{appendix} 
\section{Additional figures}\label{Appendix_autres_fig}

Figures~\ref{Prop1(i)_varphi1}-\ref{Prop1(ii)_varphi2} are similar to Figures~\ref{fig_main_mvarie}-\ref{fig_main_thetavarie} for functions $\varphi=\varphi_1,\varphi_2$.

\begin{figure}[ht]
    \centering
    \includegraphics[scale=0.80]{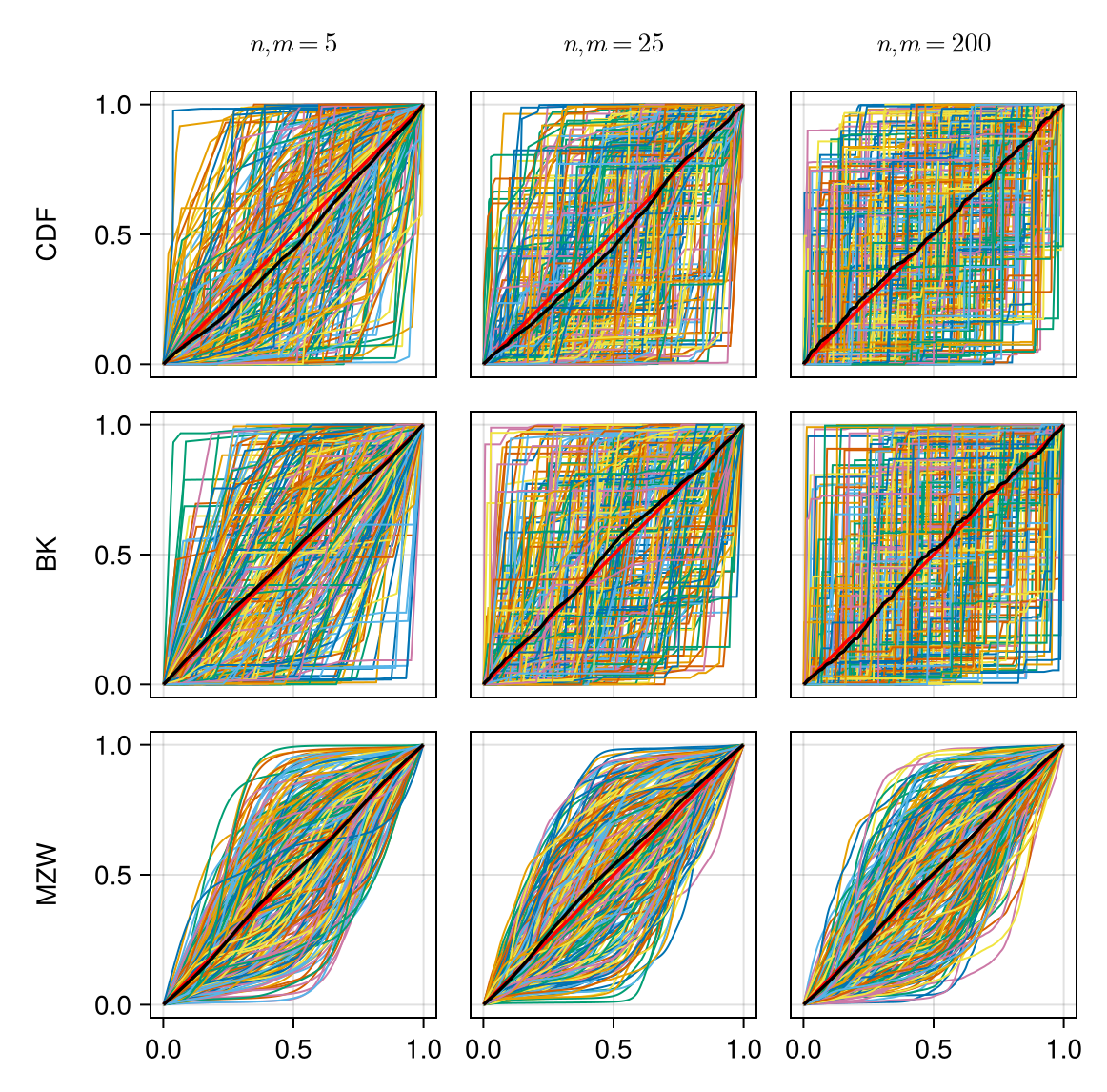}
    \caption{Simulation of 250 paths with CDF and BK (respectively MZW) Algorithms for different values of $n$ (respectively $m$). Here $\varphi$ is equal to $\varphi_1$ and parameter $\theta$ is fixed to $1.$ For CDF Algorithm, parameter $p$ is fixed to $0.5.$}
    \label{Prop1(i)_varphi1}
\end{figure}

\begin{figure}[ht]
    \centering
    \includegraphics[scale = 0.80]{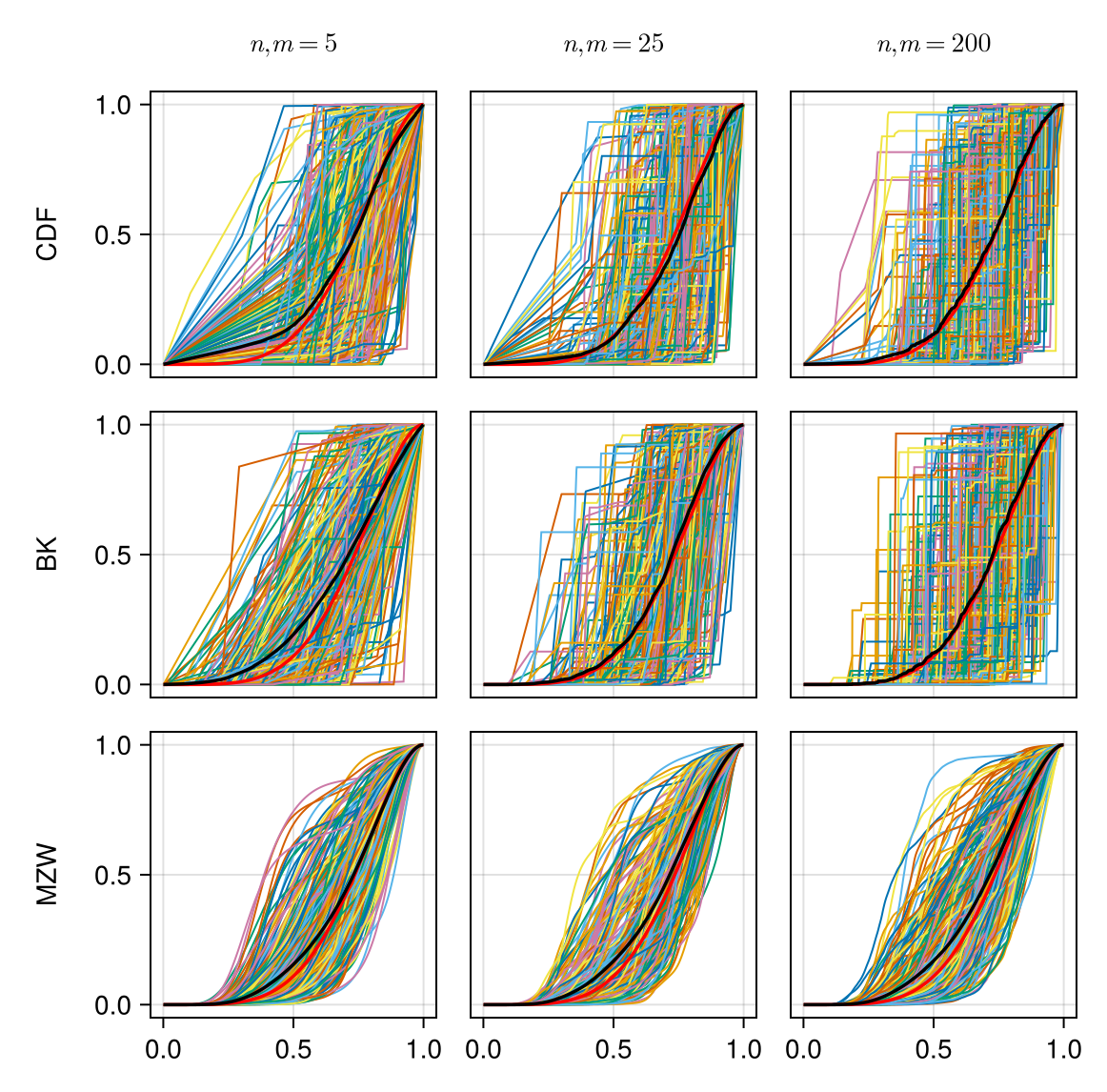}
    \caption{Simulation of 250 paths with CDF and BK (respectively MZW) Algorithms for different values of $n$ (respectively $m$). Here $\varphi$ is equal to $\varphi_2$ and parameter $\theta$ is fixed to $1.$ For CDF Algorithm, parameter $p$ is fixed to $0.5.$ }
    \label{Prop1(i)_varphi2}
\end{figure}

\begin{figure}[ht]
    \centering
    \includegraphics[scale = 0.80]{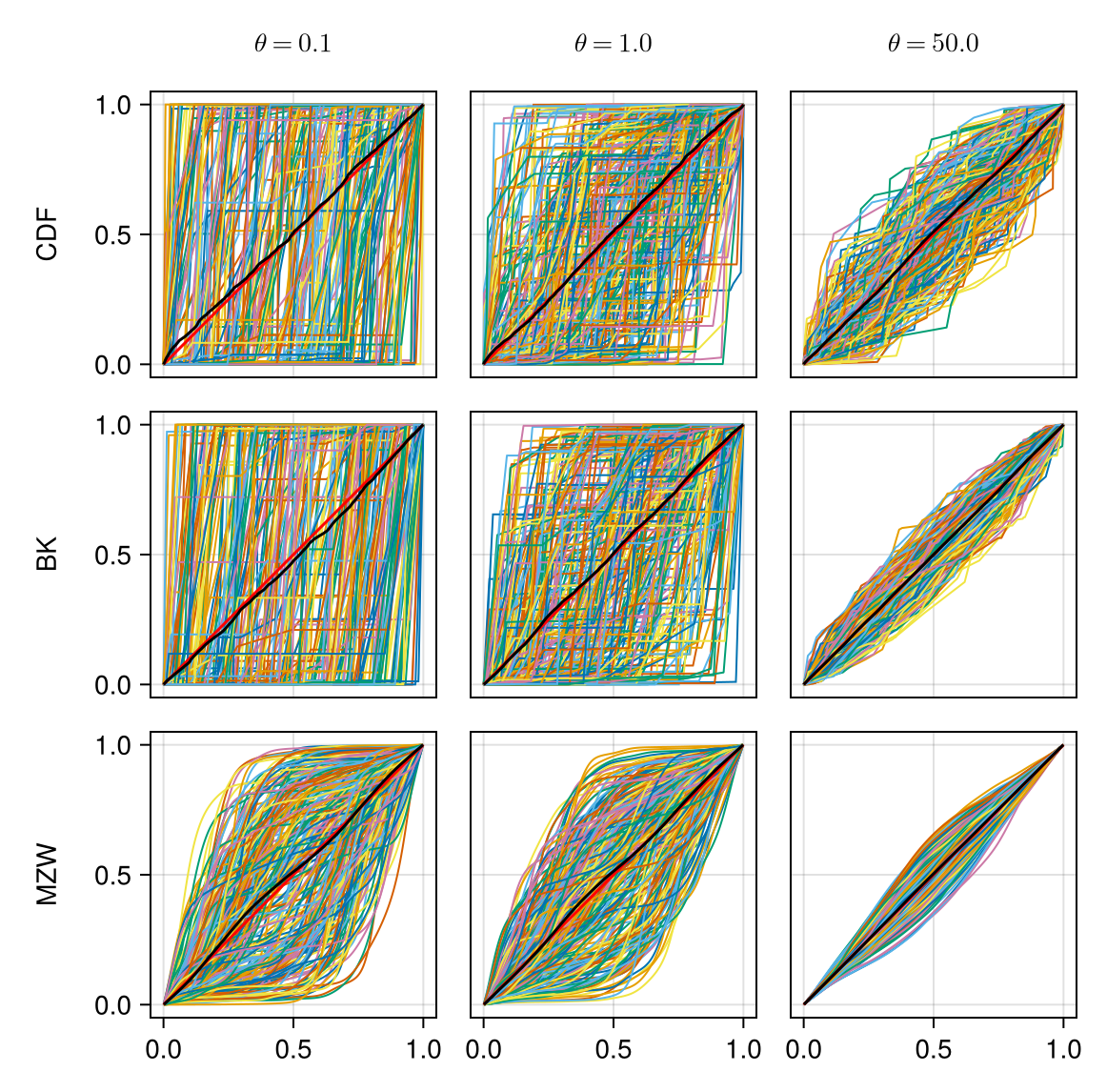}
    \caption{Simulation of 250 paths with each algorithm for different values of $\theta.$ Here $\varphi$ is equal to $\varphi_1.$ Parameters $n,m$ are fixed to 15. For CDF Algorithm, parameter $p$ is fixed to $0.5.$}
    \label{Prop1(ii)_varphi1}
\end{figure}

\begin{figure}[ht]
    \centering
    \includegraphics[scale = 0.80]{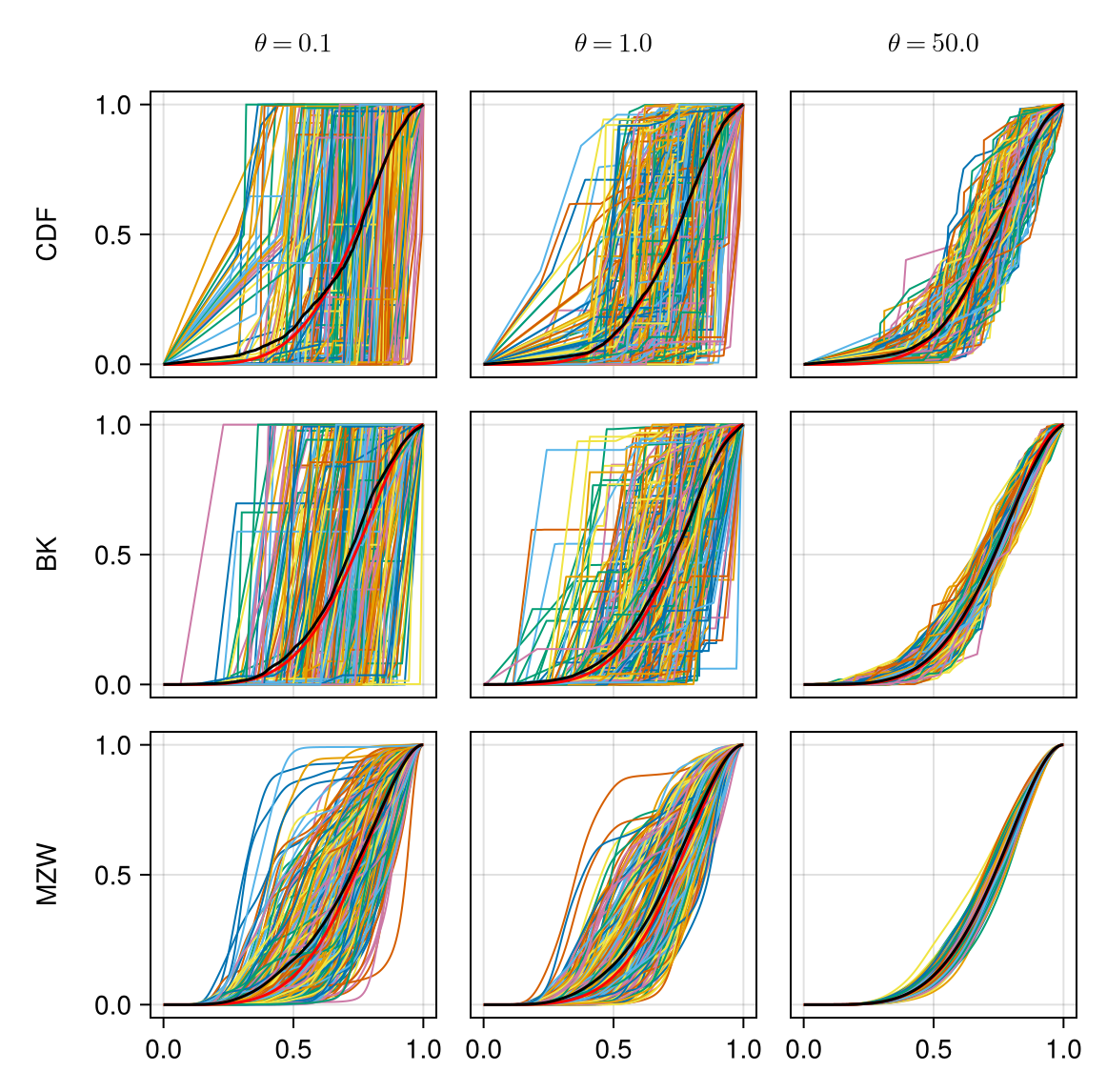}
    \caption{Simulation of 250 paths with each algorithm for different values of $\theta.$ Here $\varphi$ is equal to $\varphi_2.$ Parameters $n,m$ are fixed to 15. For CDF Algorithm, parameter $p$ is fixed to $0.5.$}
    \label{Prop1(ii)_varphi2}
\end{figure}

\section{Auxiliary lemmas for the proof of Proposition~\ref{prop:warping}} \label{annexe_preuve}
Lemmas~\ref{lem_exp_coeff_BK}-\ref{lem_exp_mom_phi_BK} (respectively Lemmas~\ref{lem_exp_beta_CDF}-\ref{lem:esp_ind_CDF}) provide useful results for the proof of Proposition~\ref{prop:warping} for the BK method (respectively for the CDF method). We remind the main notations. The vector $U^\ast = (U_0^\ast, \dots, U_{n+1}^\ast)$ is the sorted sample of $(U_0, \dots, U_{n+1})$ with $U_0~=0,\; U_{n+1} = 1,$ and $U_i,$ for $i \in \llbracket 1,n \rrbracket,$ are independent and identically distributed random variables from Uniform distribution on $[0,1].$ The function $\varphi \in \Tilde{\Gamma}$ is the cumulative distribution function that we want to simulate and we denote by $\varphi^{-1}$ its inverse function. Moreover, since $\varphi$ is an absolutely continuous function from $[0,1]$ to $[0,1],$ we deduce that $\varphi^2$ is also an absolutely continuous function and we denote by $\varphi'$ the derivative of $\varphi$ defined almost everywhere on $[0,1].$ The notation $\theta$ refers to a positive real number. The vector $(\alpha_1, \dots, \alpha_{n+1})$ follows a Dirichlet distribution with vector parameter given by $(\theta (\varphi(U_j^\ast) - \varphi(U_{j-1}^\ast)))_{j \in \llbracket1,n+1\rrbracket}$ and $\alpha_0$ is equals to $0.$ For all $j \in \llbracket0, n+1 \rrbracket,$ $\Tilde{\alpha}_j$ denotes the element $\sum_{i = 0}^j \alpha_i.$ The vector $(\beta_1,\dots, \beta_n)$ is a Dirichlet distributed random vector whose vector-parameter is given by $(\theta/n, \dots, \theta/n)$ and $\beta_{-1} = \beta_0 = \beta_{n+1}=0.$ The element $\Tilde{ \beta_j}$ refers to the $j$th component of the vector of cumulative sums of $(\beta_0, \dots, \beta_{n+1}),$ that is $\Tilde{\beta}_j = \sum_{i=0}^j \beta_i.$ The random vector $\gamma_p=(\gamma_{j,p})$ is defined as $\gamma_{j,p} = (1-p)\beta_j + p \beta_{j-1},$ for $j \in \llbracket 0, n+1 \rrbracket$ and $\Tilde{ \gamma}_p$ is the vector of cumulative sums of the vector $\gamma_p.$ The proof of Lemmas~\ref{lem_exp_coeff_BK}-\ref{lem:esp_ind_CDF} relies on standard properties of Dirichlet distribution that can be found for example in \citet[page 6]{Frigyik2010} and due to the fact that the expression of the densities of $U_1^\ast, U_{n}^\ast$ and of the pairs $(U_j^\ast, U_{j+1}^\ast),$ for $j \in \llbracket1,n-1\rrbracket,$ are known, as mentioned for example in \citet[page 63]{gentle2009computational}. In particular, denoting by $f_{(U_j^\ast, U_{j+1}^\ast)}$ the joint density of the pair $(U_j^\ast, U_{j+1}^\ast),$ for a fixed $j \in \llbracket1, n-1 \rrbracket, $ we have: 
\begin{equation}\label{densite_jointe_Ui}
    f_{(U_j^\ast, U_{j+1}^\ast)}(u,v) = n! \frac{u^{j-1}}{(j-1)!} \frac{(1-v)^{n-j-1}}{(n-j-1)!} \mathds{1}_{0 \leq u < v \leq 1}.
\end{equation}
Lastly, since $\varphi$ is differentiable almost everywhere and by denoting by $\dom(\varphi')$ for the domain of the map $\varphi',$ we introduce the following notation 
\begin{equation}
\| \varphi' \|_\infty = \underset{\dom(\varphi')}{\sup}\varphi'. \label{sup}
\end{equation}

\begin{lem} \label{lem_exp_coeff_BK}
Let $j \in \llbracket1, n+1 \rrbracket.$ Then, we have the following statements: 
\begin{align}
    & \mathbb{E}\left[ \alpha_j \;|\; U^\ast \right] = \varphi(U_j^\ast) - \varphi(U_{j-1}^\ast). \label{prop_dir1}\\
    & \Var[\alpha_j \; | \; U^\ast] = \frac{1}{1 + \theta} \left( \varphi(U_j^\ast) - \varphi(U_{j-1}^\ast) \right) - \frac{1}{1 + \theta} \left( \varphi(U_j^\ast) - \varphi(U_{j-1}^\ast) \right)^2. \label{prop_dir2} \\
    & \forall i \neq j, \; \Cov[\alpha_i, \alpha_j \; | \; U^\ast] = - \frac{1}{1 + \theta} \left( \varphi(U_i^\ast) - \varphi(U_{i-1}^\ast) \right) \left( \varphi(U_{j}^\ast) - \varphi(U_{j-1}^\ast) \right). \label{prop_dir3}\\
    & \mathbb{E}\left[ \Tilde{\alpha}_j \; | \; U^\ast \right] = \varphi(U_j^\ast). \label{esp_tilde_alpha} \\
    & \mathbb{E}\left[ \alpha_j^2 \; | \; U^\ast \right] = \frac{1}{1 + \theta} \left( \varphi(U_j^\ast) - \varphi(U_{j-1}^\ast) \right) + \left(1 - \frac{1}{1 + \theta} \right) \left( \varphi(U_{j}^\ast) - \varphi(U_{j-1}^\ast) \right)^2 .\label{mom_ord_2_alpha}\\ 
    & \mathbb{E}\left[ \Tilde{\alpha}_j^2 \; | \; U^\ast \right] = \frac{1}{1 + \theta} \varphi(U_j^\ast) + \left( 1 - \frac{1}{1 + \theta} \right) \varphi(U_j^\ast)^2.\label{mom_ord_2_tilde_alpha} \\
    & \forall j \neq n+1, \mathbb{E}\left[ \Tilde{\alpha}_j \alpha_{j+1} \;|\; U^\ast \right] = \left(1 - \frac{1}{1 + \theta} \right) \left( \varphi(U_{j+1}^\ast) - \varphi(U_j^\ast) \right) \varphi(U_j^\ast) . \label{cross_esp}
\end{align}
\end{lem}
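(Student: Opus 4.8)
The plan is to condition throughout on the order statistics $U^\ast$, under which $(\alpha_1,\dots,\alpha_{n+1})$ is, by construction, a Dirichlet vector with deterministic parameters $\delta_j := \theta(\varphi(U_j^\ast)-\varphi(U_{j-1}^\ast))$. Writing $d_j := \varphi(U_j^\ast)-\varphi(U_{j-1}^\ast)$ so that $\delta_j = \theta d_j$, the single fact that drives all the clean constants is that the total concentration telescopes: $\sum_{j=1}^{n+1}\delta_j = \theta(\varphi(U_{n+1}^\ast)-\varphi(U_0^\ast)) = \theta$, using $U_0^\ast=0$, $U_{n+1}^\ast=1$ and $\varphi(0)=0$, $\varphi(1)=1$. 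I would record this normalization first, since it is what turns the generic $1/(a_0+1)$ and $a_j/a_0$ factors into the stated expressions.

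Given this, statements \eqref{prop_dir1}--\eqref{prop_dir3} follow immediately from the standard first and second conditional moments of a Dirichlet distribution (as in \citet{Frigyik2010}): substituting $a_j=\delta_j$ and $a_0=\theta$ into $\mathbb E[\alpha_j]=a_j/a_0$, $\Var[\alpha_j]=a_j(a_0-a_j)/(a_0^2(a_0+1))$ and $\Cov[\alpha_i,\alpha_j]=-a_ia_j/(a_0^2(a_0+1))$, the powers of $\theta$ cancel and leave $d_j$, $(d_j-d_j^2)/(1+\theta)$ and $-d_id_j/(1+\theta)$ respectively. Statement \eqref{mom_ord_2_alpha} is then just $\mathbb E[\alpha_j^2\mid U^\ast]=\Var[\alpha_j\mid U^\ast]+(\mathbb E[\alpha_j\mid U^\ast])^2$, and \eqref{esp_tilde_alpha} is a telescoping sum over $i\in\llbracket 0,j\rrbracket$ using $\alpha_0=0$ and $\varphi(U_0^\ast)=0$.

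The only computations requiring care are the two second-order cumulative quantities. For \eqref{mom_ord_2_tilde_alpha} I would write $\tilde\alpha_j=\sum_{i=1}^j\alpha_i$ and expand $\Var[\tilde\alpha_j\mid U^\ast]$ as the sum of the $j$ diagonal variances \eqref{prop_dir2} plus twice the $\binom{j}{2}$ covariances \eqref{prop_dir3}; setting $S_j:=\varphi(U_j^\ast)=\sum_{i=1}^j d_i$, the bracketed bilinear form collapses by completing the square, $\sum_{i}d_i-\big(\sum_i d_i^2+2\sum_{i<k}d_id_k\big)=S_j-S_j^2$, so that $\Var[\tilde\alpha_j\mid U^\ast]=(S_j-S_j^2)/(1+\theta)$, and adding $(\mathbb E[\tilde\alpha_j\mid U^\ast])^2=S_j^2$ gives the claim. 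For \eqref{cross_esp} I would expand $\tilde\alpha_j\alpha_{j+1}=\sum_{i=1}^j\alpha_i\alpha_{j+1}$; every factor involves the distinct indices $i\le j<j+1$, so each term equals $\Cov[\alpha_i,\alpha_{j+1}\mid U^\ast]+d_id_{j+1}=(1-\tfrac{1}{1+\theta})d_id_{j+1}$, and summing over $i$ yields $(1-\tfrac{1}{1+\theta})d_{j+1}S_j$, which is the stated expression. The main (and essentially only) obstacle is the covariance bookkeeping in \eqref{mom_ord_2_tilde_alpha}; once the square-completion is spotted everything reduces to elementary algebra, and no properties of the densities \eqref{densite_jointe_Ui} are needed here since we work conditionally on $U^\ast$.
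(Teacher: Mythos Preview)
Your proposal is correct and follows essentially the same approach as the paper: both arguments reduce to the standard Dirichlet moment formulas after observing that the concentration parameters $\theta(\varphi(U_j^\ast)-\varphi(U_{j-1}^\ast))$ telescope to total mass $\theta$, and then handle \eqref{mom_ord_2_tilde_alpha} and \eqref{cross_esp} by expanding into variance/covariance terms and collapsing the resulting sums. The only cosmetic difference is that for \eqref{mom_ord_2_tilde_alpha} you compute $\Var[\tilde\alpha_j\mid U^\ast]$ first and then add the squared mean, whereas the paper expands $\mathbb E[\tilde\alpha_j^2\mid U^\ast]$ directly; the algebra is identical.
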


\begin{proof}
Equations~\eqref{prop_dir1}-\eqref{prop_dir3} follow from standard properties of the Dirichlet distribution. Equation~\eqref{esp_tilde_alpha} follows from Equation~\eqref{prop_dir1}, from a telescopic property and from the equation $\varphi(0) = 0.$ Equation~\eqref{mom_ord_2_alpha} results from Equations~\eqref{prop_dir1}-\eqref{prop_dir2} and the conditional variance definition: 
\begin{equation*}
\mathbb{E}[ \alpha_j^2 \; | \; U^\ast] = \Var[\alpha_j\;|\; U^\ast] + \mathbb{E}[ \alpha_j \;|\; U^\ast]^2. \end{equation*}
Now, we prove Equation~\eqref{mom_ord_2_tilde_alpha}. By definition of $\Tilde{\alpha}_j,$ we have \begin{equation}
    \mathbb{E}\left[ \Tilde{\alpha}_j ^2 \; | \; U^\ast \right]  = \sum_{i=0}^j \sum_{k=0}^j \mathbb{E}\left[ \alpha_i \alpha_k \;|\; U^\ast\right].
\end{equation}
Then, using the expression of conditional variance and covariance, we get 
\begin{align*}
    \mathbb{E}\left[ \alpha_j^2 \; | \; U^\ast \right] &= \sum_{i=0}^j \mathbb{E}\left[ \alpha_i^2 \;|\; U^\ast \right] + \sum_{i=0}^j \sum_{\substack{k=0 \\ k \neq i}}^j \mathbb{E}\left[ \alpha_i \alpha_k \; | \; U^\ast \right] \\
    &= \sum_{i=0}^j \left( \Var[\alpha_i \; | \; U^\ast] + \mathbb{E}\left[ \alpha_i \;|\; U^\ast \right]^2 \right) \\
    &\quad + \sum_{i=0}^j \sum_{\substack{k=0 \\ k \neq i}}^j \left( \Cov[\alpha_i, \alpha_k \; | \; U^\ast ] + \mathbb{E}\left[ \alpha_i \; | \; U^\ast \right] \mathbb{E}\left[ \alpha_k \;|\; U^\ast \right] \right).
\end{align*} 
Using Equations~\eqref{prop_dir1}-\eqref{prop_dir3}, we derive the following equations 
\begin{align*} 
    \mathbb{E}\left[ \Tilde{\alpha}_j ^2 \; | \; U^\ast \right] &= \frac{1}{1+ \theta} \sum_{i=1}^j \left( \varphi(U_i^\ast) - \varphi(U_{i-1} ^\ast) \right) + \left(1 - \frac{1}{1 + \theta} \right) \sum_{i=1}^j \left( \varphi(U_i^\ast) - \varphi(U_{i-1}^\ast) \right)^2 \\
    & \quad \quad + \left(1 - \frac{1}{1 + \theta} \right) \sum_{i=1}^j \sum_{\substack{k=1 \\ k \neq i}}^j \left( \varphi(U_k^\ast) - \varphi(U_{k-1}^\ast) \right) \left( \varphi(U_i^\ast) - \varphi(U_{i-1}^\ast) \right)  \\
    &= \frac{1}{1 + \theta}  \sum_{i=1}^j \left( \varphi(U_i^\ast) - \varphi(U_{i-1}^\ast) \right)\\
    &\quad + \left( 1-\frac{1}{1 + \theta} \right) \sum_{i=1}^j \sum_{k=1}^j \left( \varphi(U_k^\ast) - \varphi(U_{k-1}^\ast) \right) \left( \varphi(U_i^\ast) - \varphi(U_{i-1}^\ast) \right) \\
    &= \frac{1}{1 + \theta} \varphi(U_j^\ast) + \left( 1 -  \frac{1}{1 + \theta} \right) \varphi(U_j^\ast)^2.
\end{align*}
Equation~\eqref{cross_esp} results from the following remark and from the previous results: 
\begin{align*}
    \mathbb{E}\left[ \Tilde{\alpha}_j \alpha_{j+1} \;|\; U^\ast   \right] &= \sum_{i=0}^j \mathbb{E}\left[ \alpha_i \alpha_{j+1} \; | \; U^\ast \right] \\
    &= \sum_{i=0}^j \left( \Cov[\alpha_i, \alpha_{j+1} \; | \; U^\ast]  + \mathbb{E}\left[ \alpha_i \; |\; U^\ast \right] \mathbb{E}\left[ \alpha_{j+1} \;|\; U^\ast \right] \right).
\end{align*}
\end{proof}

\begin{lem}\label{lem_exp_mom_phi_BK}
Let $t \in (0,1).$ We obtain the following equations. 
\begin{align}
    & \sum_{j=1}^{n-1} \mathbb{E}\left[ \mathds{1}_{[U_j^\ast, U_{j+1}^\ast)}(t) \right] = 1 - t^n - (1-t)^n \\
    &\sum_{j=1}^{n-1} \mathbb{E}\left[ \varphi(U_j^\ast) \mathds{1}_{[U_j^\ast, U_{j+1}^\ast)}(t) \right] =  \varphi(t) - \varphi(t) t^n - \int_0^t \varphi'(u) \left( (1-t+u)^n -u^n \right) \mathrm{d}u. \label{esp_phi}\\ 
    \begin{split}&\sum_{j=1}^{n-1} \mathbb{E}\left[ \varphi(U_{j+1}^\ast) \mathds{1}_{[U_j^\ast, U_{j+1}^\ast)}(t) \right] \\
    &\quad= -t^n + \varphi(t) - (1-t)^n\varphi(t) + \int_t^1 \varphi'(v) \left( (1-v+t)^n - (1-v)^n \right) \mathrm{d}v \end{split}\label{Newton2}\\
    &\sum_{j=1}^{n-1} \mathbb{E}\left[ \varphi(U_j^\ast)^2 \mathds{1}_{[U_j^\ast, U_{j+1}^\ast)}(t) \right] = \varphi(t)^2 - \varphi(t)^2t^n  - 2 \int_0^t \varphi'(u) \varphi(u) \left( (1-t+u)^n - u^n \right) \mathrm{d}u \label{Newton3}\\
    \begin{split}&\sum_{j=1}^{n-1} \mathbb{E}\left[ \varphi(U_{j+1}^\ast)^2 \mathds{1}_{[U_j^\ast, U_{j+1}^\ast)}(t) \right]\\
    &\quad = -t^n + \varphi(t)^2 - (1-t)^n \varphi(t)^2 + 2 \int_t^1 \varphi'(v) \varphi(v) \left( (1+t-v)^n - (1-v)^n \right) \mathrm{d}v .\end{split} \label{Newton4}
\end{align}
\end{lem}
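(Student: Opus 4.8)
\emph{Proof plan.} The plan is to prove all five identities by the same scheme, starting from the joint density~\eqref{densite_jointe_Ui}, which is valid precisely for the indices $j \in \llbracket 1, n-1\rrbracket$ that appear in the sums. Since the event encoded by $\mathds{1}_{[U_j^\ast, U_{j+1}^\ast)}(t)$ is $\{U_j^\ast \le t < U_{j+1}^\ast\}$, i.e. $\{u \le t < v\}$, each summand rewrites, for a generic weight $g=g(u,v)$, as
\[
\mathbb{E}\left[ g\, \mathds{1}_{[U_j^\ast, U_{j+1}^\ast)}(t)\right] = \int_0^t \!\!\int_t^1 g(u,v)\, n!\, \frac{u^{j-1}}{(j-1)!}\,\frac{(1-v)^{n-j-1}}{(n-j-1)!}\, \mathrm{d}v\,\mathrm{d}u,
\]
where $g \equiv 1$, $g=\varphi(u)$, $g=\varphi(v)$, $g=\varphi(u)^2$, $g=\varphi(v)^2$ for the five identities respectively. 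First I would exchange the finite sum over $j$ with the integral, which raises no issue.

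The pivotal step is to collapse the density sum. Writing $k=j-1$, the binomial theorem gives
\[
\sum_{k=0}^{n-2} \frac{u^{k}(1-v)^{n-2-k}}{k!\,(n-2-k)!} = \frac{(u+1-v)^{n-2}}{(n-2)!},
\]
so the summed density equals $n(n-1)(u+1-v)^{n-2}$ and every identity reduces to evaluating $\int_0^t\!\int_t^1 g\, n(n-1)(u+1-v)^{n-2}\,\mathrm{d}v\,\mathrm{d}u$. For $g\equiv 1$ a direct double integration gives $1-t^n-(1-t)^n$; I would carry this case out first as a template.

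For the four weighted identities I would integrate out one variable explicitly and then integrate by parts in the other. When the weight depends on $u$, integrating in $v$ first produces the factor $n[(u+1-t)^{n-1}-u^{n-1}] = \frac{\mathrm{d}}{\mathrm{d}u}[(u+1-t)^n - u^n]$; integrating by parts against $\varphi(u)$ (resp. $\varphi(u)^2$) and using $\varphi(0)=0$ to discard the lower boundary term yields~\eqref{esp_phi} (resp.~\eqref{Newton3}), the factor $2$ in the latter coming from $(\varphi^2)'=2\varphi\varphi'$, which is licit since $\varphi$ and hence $\varphi^2$ are absolutely continuous. Symmetrically, when the weight depends on $v$, integrating in $u$ first gives $n[(t+1-v)^{n-1}-(1-v)^{n-1}] = \frac{\mathrm{d}}{\mathrm{d}v}[(1-v)^n-(t+1-v)^n]$, and integration by parts using $\varphi(1)=1$ yields~\eqref{Newton2} and~\eqref{Newton4}.

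The main obstacle is bookkeeping rather than conceptual: one must carefully evaluate the boundary terms of each integration by parts at $u=t$ or $v=t$ and at the endpoints $0,1$ — for instance $(u+1-t)^n=1$ at $u=t$, and $(t+1-v)^n=1$ at $v=t$ — and invoke the normalizations $\varphi(0)=0$, $\varphi(1)=1$ at the right place to drop the correct boundary contribution. Tracking the signs and the surviving $t^n$ and $(1-t)^n$ terms, especially in the two squared identities, is where errors are most likely to creep in.
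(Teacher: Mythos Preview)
Your proposal is correct and follows essentially the same approach as the paper: joint density~\eqref{densite_jointe_Ui}, binomial theorem, and integration by parts using $\varphi(0)=0$ or $\varphi(1)=1$. The only cosmetic difference is the order of operations: the paper integrates out one variable first and then applies the binomial theorem to the resulting sum $\sum_{j} \binom{n-1}{j-1} u^{j-1}(1-t)^{n-j}$ (or its mirror), whereas you collapse the sum over $j$ directly at the density level to obtain the single kernel $n(n-1)(u+1-v)^{n-2}$ before integrating---the intermediate one-variable integrals and the subsequent integration by parts are then identical in both routes.
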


\begin{proof}
From Equation~\eqref{densite_jointe_Ui} that reminds the expression of the joint density of the pairs $(U_j^\ast, U_{j+1}^\ast),$ for $j \in \llbracket1, n-1 \rrbracket,$ we derive an explicit expression of $\sum_{j=1}^{n-1 }\mathbb{E}[ \mathds{1}_{[U_j^\ast, U_{j+1}^\ast)}(t)] $ as follows: 
\begin{align*}
    \sum_{j=1}^{n-1 }\mathbb{E}\left[ \mathds{1}_{[U_j^\ast, U_{j+1}^\ast)}(t) \right] &= \sum_{j=1}^{n-1} \int_0^t \int_t^1 n! \frac{u^{j-1}}{(j-1)!} \frac{(1-v)^{n-j-1}}{(n-j-1)!} \mathrm{d}v \mathrm{d}u \\
    &= \sum_{j=1}^{n-1} n! \frac{t^j}{j!} \frac{(1-t)^{n-j}}{(n-j)!} \\
    &= 1 - t^n - (1-t)^n.
\end{align*}
Now, we prove Equation~\eqref{esp_phi}.
We have 
\begin{align*}
    \sum_{j=1}^{n-1} \mathbb{E}\left[ \varphi(U_j^\ast) \mathds{1}_{[U_j^\ast, U_{j+1}^\ast)}(t) \right] &= \sum_{j=1}^{n-1} \int_t^1 \int_0^t \varphi(u) \frac{n!}{(j-1)!(n-j-1)!} u^{j-1} (1-v)^{n-j-1} \mathrm{d}u \mathrm{d}v \\
    &= \sum_{j=1}^{n-1} \int_0^t \varphi(u) n! \frac{u^{j-1}}{(j-1)!} \frac{(1-t)^{n-j}}{(n-j)!} \mathrm{d}u  \\
    &= \int_0^t \varphi(u) n \sum_{j=1}^{n-1} \binom{n-1}{j-1} u^{j-1} (1-t)^{n-1-(j-1)} \mathrm{d}u 
\end{align*} 
Applying the binomial theorem leads to the hereunder equation. 
\begin{align*}
  \sum_{j=1}^{n-1} \mathbb{E}\left[ \varphi(U_j^\ast) \mathds{1}_{[U_j^\ast, U_{j+1}^\ast)}(t) \right]    &= \int_0^t \varphi(u) n \left( (1-t + u)^{n-1} - u^{n-1} \right) \mathrm{d}u. 
\end{align*} 
Then, by integrating by parts, and by using the property $\varphi(0) = 0,$ we finally get  
\begin{align*} 
    &\sum_{j=1}^{n-1} \mathbb{E}\left[ \varphi(U_j^\ast) \mathds{1}_{[U_j^\ast, U_{j+1}^\ast)}(t) \right]\\ &= \left[ \varphi(u) \left( (1-t+u)^n - u^n \right) \right]_0^t - \int_0^t \varphi'(u) \left( (1-t+u)^n -u^n \right) \mathrm{d}u\\ 
    &= \varphi(t) - \varphi(t) t^n - \int_0^t \varphi'(u) \left( (1-t+u)^n -u^n \right) \mathrm{d}u,
\end{align*} 
Proofs of Equations~\eqref{Newton2}-\eqref{Newton4} rely on similar arguments.  We write the details below. 

\noindent We have 
\begin{align*}
&\sum_{j=1}^{n-1} \mathbb{E}\left[ \varphi(U_{j+1}^\ast) \mathds{1}_{[U_j^\ast, U_{j+1}^\ast)}(t) \right]\\ &= \sum_{j=1}^{n-1} \int_t^1 \int_0^t \varphi(v) \frac{n!}{(j-1)!(n-j-1)!} u^{j-1} (1-v)^{n-j-1} \mathrm{d}u \mathrm{d}v \\
&= \sum_{j=1}^{n-1} \int_t^1 \varphi(v) \frac{(1-v)^{n-j-1}}{(n-j-1)!}\frac{1}{(j-1)!} n! \frac{t^j}{j} \mathrm{d}v \\
&= \int_t^1 \varphi(v) n \sum_{j=1}^{n-1} \binom{n-1}{j} t^j (1-v)^{n-1-j} \mathrm{d}v \\
&= \int_t^1 \varphi(v) n \left( (1-v+t)^{n-1} - (1-v)^{n-1} \right) \mathrm{d}v \\
&= \left[ - \varphi(v) \left( (1+t-v)^n - (1-v)^n \right) \right]_t^1 + \int_t^1 \varphi'(v) \left( (1-v+t)^n - (1-v)^n \right) \mathrm{d}v 
\end{align*} 
The equation $\varphi(1) = 1$ leads to    
\begin{align*}
&\sum_{j=1}^{n-1} \mathbb{E}\left[ \varphi(U_{j+1}^\ast) \mathds{1}_{[U_j^\ast, U_{j+1}^\ast)}(t) \right]\\
&= -t^n + \varphi(t) (1 - (1-t)^n) + \int_t^1 \varphi'(v) \left( (1-v+t)^n - (1-v)^n \right) \mathrm{d}v \\
&= -t^n + \varphi(t) - (1-t)^n\varphi(t) + \int_t^1 \varphi'(v) \left( (1-v+t)^n - (1-v)^n \right) \mathrm{d}v.
\end{align*} 
We now focus on Equation~\eqref{Newton3}. 
\begin{align*}
&\sum_{j=1}^{n-1} \mathbb{E}\left[ \varphi(U_j^\ast)^2 \mathds{1}_{[U_j^\ast, U_{j+1}^\ast)}(t) \right] \\
&= \sum_{j=1}^{n-1} \int_0^1 \int_0^1 \varphi(u)^2 \mathds{1}_{[u,v)}(t) n! \frac{u^{j-1}}{(j-1)!} \frac{(1-v)^{n-j-1}}{(n-j-1)!} \mathrm{d}v \mathrm{d}u \\
&= \sum_{j=1}^{n-1} \int_0^t \int_t^1 \varphi(u)^2 n! \frac{u^{j-1}}{(j-1)!} \frac{(1-v)^{n-j-1}}{(n-j-1)!} \mathrm{d}v \mathrm{d}u \\
&= \sum_{j=1}^{n-1} \int_0^t \varphi(u)^2 n! \frac{u^{j-1}}{(j-1)! (n-j-1)!} \frac{(1-t)^{n-j}}{(n-j)} \mathrm{d}u \\
&= \int_0^t \varphi(u)^2 n \sum_{j=1}^{n-1} \binom{n-1}{j-1} u^{j-1} (1-t)^{n-1-(j-1)} \mathrm{d}u \\
&= \int_0^t \varphi(u)^2 n \sum_{k=0}^{n-2} \binom{n-1}{k} u^{k} (1-t)^{n-1-k} \mathrm{d}u \\
&= \int_0^t \varphi(u)^2 n \left( (1-t+u)^{n-1} - u^{n-1} \right) \mathrm{d}u  \\
&= \left[ \varphi(u)^2 \left( (1-t+u)^n - u^n \right) \right]_0^t - 2 \int_0^t \varphi'(u) \varphi(u) \left( (1-t+u)^n - u^n \right) \mathrm{d}u  \\
&= \varphi(t)^2 (1-t^n) - 2 \int_0^t \varphi'(u) \varphi(u) \left( (1-t+u)^n - u^n \right) \mathrm{d}u.
\end{align*} 
Regarding Equation~\eqref{Newton4}. 
\begin{align*}
&\sum_{j=1}^{n-1} \mathbb{E}\left[ \varphi(U_{j+1}^\ast)^2 \mathds{1}_{[U_j^\ast, U_{j+1}^\ast)}(t) \right] \\
&= \sum_{j=1}^{n-1} \int_0^1 \int_0^1 \varphi(v)^2 n! \frac{u^{j-1}}{(j-1)!} \frac{(1-v)^{n-j-1}}{(n-j-1)! } \mathds{1}_{[u,v)}(t) \mathrm{d}u \mathrm{d}v \\
&= \sum_{j=1}^{n-1} \int_t^1 \varphi(v)^2 n! \frac{(1-v)^{n-j-1}}{(j-1)!(n-j-1)! } \int_0^t u^{j-1}\mathrm{d}u \mathrm{d}v \\
&= \sum_{j=1}^{n-1} \int_t^1 \varphi(v)^2 n! \frac{(1-v)^{n-j-1}}{(j-1)! (n-j-1)!}\frac{t^j}{j} \mathrm{d}v \\
&= \int_t^1 \varphi(v)^2 n \sum_{j=1}^{n-1} \binom{n-1}{j} t^j (1-v)^{n-1-j} \mathrm{d}v \\
&= \int_t^1 \varphi(v)^2 n \left( (1+t-v)^{n-1} - (1-v)^{n-1} \right) \mathrm{d}v \\
&= \left[ \varphi(v)^2 \left( (1-v)^n - (1+t-v)^n \right) \right]_t^1 - 2 \int_t^1 \varphi'(v) \varphi(v) \left( (1-v)^{n} - (1+t-v)^n \right) \mathrm{d}v \\
&= -t^n + \varphi(t)^2 (1 - (1-t)^n) + 2 \int_t^1 \varphi'(v) \varphi(v) \left( (1+t-v)^n - (1-v)^n \right) \mathrm{d}v.
\end{align*}
\end{proof} 
Now we state auxiliary lemmas for the proof of Proposition~\ref{prop:warping} in the CDF case.

\begin{lem}\label{lem_exp_beta_CDF}
Let $j \in \llbracket 1,n \rrbracket.$ Then, we have the following statements: \begin{align}
    &\mathbb{E}\left[ \beta_j \right] = \frac{1}{n} \;; \label{esp_beta}\\
    & \Var[\beta_j] = \frac{1}{n} \frac{1}{1 + \theta } - \frac{1}{n^2} \frac{1}{1 + \theta} \label{var_beta}\;;\\
    & \forall j \neq k, \; \Cov[ \beta_j, \beta_k] = - \frac{1}{n^2} \frac{1}{1 + \theta}\;; \label{cov_beta}\\
    & \mathbb{E}\left[ \beta_j^2\right]  = \frac{1}{n} \frac{1}{1 + \theta} + \frac{1}{n^2}\left( 1 - \frac{1}{1 + \theta} \right) \;;\label{mom_ord_2_beta}\\
    & \forall k \neq j, \; \mathbb{E}\left[ \beta_j \beta_k \right] = \frac{1}{n^2} \left(1 - \frac{1}{1 + \theta} \right)\;; \label{esp_cross_beta}\\
    & \mathbb{E}\left[ \beta_{j} \Tilde{\beta}_{j} \right] = \frac{1}{n} \frac{1}{1 + \theta} + \frac{j}{n^2} \left( 1 - \frac{1}{1 + \theta} \right) \;;\label{esp_beta_tildebeta}\\
    & \mathbb{E}\left[ \beta_{n+1} \Tilde{\beta}_{n+1} \right] = 0\;; \label{esp_beta_tildebeta_n+1} \\
    & \mathbb{E}\left[ \Tilde{\beta}_j ^2 \right] =  \frac{j}{n} \frac{1}{1 + \theta}  + \frac{j^2}{n^2 } \left( 1 - \frac{1}{1 + \theta} \right) \;; \label{mom_ord_2_tilde_beta}\\
    & \mathbb{E}\left[ \Tilde{\beta}_{n+1}^2 \right] = 1. \label{mom_ord_2_tilde_beta_n+1}
\end{align}
\end{lem}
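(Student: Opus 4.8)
The plan is to derive everything from the three standard moment formulas for a Dirichlet vector and then assemble the remaining identities by purely algebraic manipulation and index counting. First I would record that, with concentration vector $(\theta/n,\dots,\theta/n)$, the normalizing constant is $a_0=\sum_{j=1}^n \theta/n=\theta$. The classical formulas for the Dirichlet distribution (as in \citet[page 6]{Frigyik2010}) then give $\mathbb E[\beta_j]=(\theta/n)/\theta=1/n$, which is \eqref{esp_beta}; the variance $\Var[\beta_j]=\tfrac{(\theta/n)(\theta-\theta/n)}{\theta^2(1+\theta)}$, which simplifies to \eqref{var_beta}; and, for $j\neq k$, $\Cov[\beta_j,\beta_k]=\tfrac{-(\theta/n)^2}{\theta^2(1+\theta)}$, which simplifies to \eqref{cov_beta}. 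These three equations are the only genuine inputs; everything else is bookkeeping.

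Next I would obtain the second-order quantities \eqref{mom_ord_2_beta} and \eqref{esp_cross_beta} from the definitions $\mathbb E[\beta_j^2]=\Var[\beta_j]+\mathbb E[\beta_j]^2$ and, for $j\neq k$, $\mathbb E[\beta_j\beta_k]=\Cov[\beta_j,\beta_k]+\mathbb E[\beta_j]\mathbb E[\beta_k]$, substituting the expressions just obtained and collecting the $1/n$ and $1/n^2$ terms.

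For the cumulative-sum identities I would expand $\Tilde\beta_j=\sum_{i=1}^j\beta_i$ (using $\beta_0=0$) and count the two kinds of terms. For \eqref{esp_beta_tildebeta}, $\mathbb E[\beta_j\Tilde\beta_j]=\sum_{i=1}^j\mathbb E[\beta_j\beta_i]$ splits into one diagonal term $\mathbb E[\beta_j^2]$ and $j-1$ off-diagonal terms $\mathbb E[\beta_j\beta_i]$; inserting \eqref{mom_ord_2_beta}--\eqref{esp_cross_beta} and simplifying yields the claim. Likewise, for \eqref{mom_ord_2_tilde_beta}, $\mathbb E[\Tilde\beta_j^2]=\sum_{i=1}^j\mathbb E[\beta_i^2]+\sum_{\substack{i,k\le j\\ i\neq k}}\mathbb E[\beta_i\beta_k]$ has $j$ diagonal and $j(j-1)$ off-diagonal terms, and the $j$ and $j(j-1)$ combine into the announced $j^2$ coefficient. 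The two boundary identities are immediate: \eqref{esp_beta_tildebeta_n+1} holds because $\beta_{n+1}=0$ by convention, and \eqref{mom_ord_2_tilde_beta_n+1} holds because $\Tilde\beta_{n+1}=\sum_{i=1}^{n}\beta_i=1$ almost surely, the Dirichlet weights summing to one.

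There is no real obstacle here beyond careful accounting: the only place to be cautious is keeping track of which indices contribute (the convention $\beta_{-1}=\beta_0=\beta_{n+1}=0$) and the exact count of diagonal versus off-diagonal terms when squaring the partial sums, so that the coefficients $j$ and $j^2$ come out correctly.
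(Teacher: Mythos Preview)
Your proposal is correct and follows essentially the same approach as the paper: derive \eqref{esp_beta}--\eqref{cov_beta} from the standard Dirichlet moment formulas, obtain \eqref{mom_ord_2_beta}--\eqref{esp_cross_beta} via the variance and covariance definitions, expand $\Tilde\beta_j$ as a sum and count diagonal versus off-diagonal terms for \eqref{esp_beta_tildebeta} and \eqref{mom_ord_2_tilde_beta}, and handle the boundary cases \eqref{esp_beta_tildebeta_n+1} and \eqref{mom_ord_2_tilde_beta_n+1} from $\beta_{n+1}=0$ and $\sum_{i=1}^n\beta_i=1$.
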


\begin{proof}
Equations~\eqref{esp_beta}-\eqref{cov_beta} follow from standard properties of the Dirichlet distribution. Equations~\eqref{mom_ord_2_beta}-\eqref{esp_cross_beta} are direct consequences of the variance and covariance formulas. Now, we show Equation~\eqref{esp_beta_tildebeta}. The definition of $\Tilde{\beta}_j$ leads to \begin{align*}
    \mathbb{E}\left[ \beta_j \Tilde{\beta}_j \right] &= \sum_{i=1}^j \mathbb{E}\left[ \beta_j \beta_i \right] = \mathbb{E}\left[ \beta_j^2 \right] + \sum_{i=1}^{j-1} \mathbb{E}\left[ \beta_j \beta_i \right]. \end{align*}  Then, using Equations~\eqref{mom_ord_2_beta}-\eqref{esp_cross_beta}, we get the following expression: \begin{align*}\mathbb{E}\left[ \beta_j \Tilde{\beta}_j \right]
    &= \frac{1}{n} \frac{1}{1 + \theta} + \frac{1}{n^2} \left( 1 - \frac{1}{1 + \theta} \right) + \frac{j-1}{n^2} \left( 1 - \frac{1}{1 + \theta} \right) \\
    &= \frac{1}{n} \frac{1}{1 + \theta} + \frac{j}{n^2} \left( 1 - \frac{1}{1 + \theta} \right).
\end{align*} Equation~\eqref{esp_beta_tildebeta_n+1} ensues from the fact that $\beta_{n+1}=0.$ Consider Equation~\eqref{mom_ord_2_tilde_beta}. Term $\mathbb{E}[ \tilde{\beta}_j ^2 ]$ can be expressed in the following manner: 
\begin{align*}
    \mathbb{E}\left[ \Tilde{\beta}_j^2 \right] &= \sum_{i=1}^j \sum_{k=1}^j \mathbb{E}\left[ \beta_i \beta_k \right] = \sum_{i=1}^j \mathbb{E}\left[ \beta_i^2 \right] + \sum_{i=1}^j \sum_{\substack{k = 1 \\ k \neq i}}^j \mathbb{E}\left[ \beta_i \beta_k \right]. \end{align*} Using Equations~\eqref{mom_ord_2_beta}-\eqref{esp_cross_beta}, we derive the following formula \begin{align*}
    \mathbb{E}\left[ \Tilde{\beta}_j^2 \right]&= j \left(\frac{1}{n} \frac{1}{1 + \theta} + \frac{1}{n^2}\left( 1 - \frac{1}{1 + \theta} \right) \right) + j(j-1) \frac{1}{n^2} \left(1 - \frac{1}{1 + \theta} \right) \\
    &= \frac{j}{n} \frac{1}{1 + \theta} + \frac{j^2}{n^2} \left(1 - \frac{1}{1 + \theta} \right).
\end{align*} 
Equation~\eqref{mom_ord_2_tilde_beta_n+1} results from $ \sum_{i=1}^n\beta_i = 1$ and $\beta_{n+1}=0.$
\end{proof}

\begin{lem}\label{lem_exp_mom_gamma_CDF}
Let $j \in \llbracket 1,n \rrbracket.$ Then, we have the following statements: 
\begin{align}
    & \mathbb{E}\left[ \gamma_{1,p} \right] = \frac{1-p}{n} \label{esp_gamma_1} \;;\\
    &\forall j \neq 1, \mathbb{E}\left[ \gamma_{j,p} \right] = \frac{1}{n} \label{esp_gamma_j}\; ;\\
    & \mathbb{E}\left[ \gamma_{n+1,p} \right] = \frac{p}{n} \label{esp_gamma_n+1}\;;\\
    & \mathbb{E}\left[ \gamma_{1,p}^2 \right] = (1-p)^2 \left[ \frac{1}{n} \frac{1}{1 + \theta} + \frac{1}{n^2} \left(1 - \frac{1}{1+\theta} \right) \right] \label{mom_ord_2_gamma_1}\;; \\
    & \mathbb{E}\left[ \gamma_{n+1,p}^2 \right] = p^2 \left[ \frac{1}{n} \frac{1}{1 + \theta} + \frac{1}{n^2} \left(1 - \frac{1}{1 + \theta} \right) \right] \label{mom_ord_2_gamma_n+1}\;;\\
    & \forall j \neq 1, \mathbb{E}\left[ \gamma_{j,p}^2  \right] = \frac{1}{n} \frac{1}{1 + \theta} \left( 1 -2p + 2 p^2 \right) + \frac{1}{n^2} \left(1 - \frac{1}{1 + \theta} \right) \label{mom_ord_2_gamma_j}\;; \\
    & \mathbb{E}\left[ \Tilde{\gamma}_{1,p} \right] = \frac{1-p}{n} \label{esp_tilde_gamma_1}\;;\\
    & \mathbb{E}\left[ \Tilde{\gamma}_{j,p} \right] = \frac{j-p}{n} \label{esp_tilde_gamma_j}\;;\\
    & \mathbb{E}\left[ \Tilde{\gamma}_{n+1,p} \right] = 1 \label{esp_tilde_gamma_n+1}\;;\\
    & \mathbb{E}\left[ \Tilde{\gamma}_{1,p}^2 \right] = (1-p)^2 \left[ \frac{1}{n} \frac{1}{1 + \theta} + \frac{1}{n^2} \left(1 - \frac{1}{1+\theta} \right) \right] \label{mom_ord_2_tilde_gamma_1}\;;\\
    & \mathbb{E}\left[ \Tilde{\gamma}_{j,p}^2 \right] = (j-2p + p^2) \frac{1}{n} \frac{1}{1 + \theta} + (j-p)^2 \frac{1}{n^2} \left(1 - \frac{1}{1 + \theta} \right)\label{mom_ord_2_tilde_gamma_j}\;;\\ 
    & \mathbb{E}\left[ \Tilde{\gamma}_{n+1,p}^2 \right] = 1 \label{mom_ord_2_tilde_gamma_n+1}\;;\\
    & \mathbb{E}\left[ \gamma_{j+1,p} \Tilde{\gamma}_{j,p} \right] = \frac{1}{n} \frac{1}{1 + \theta} (p-p^2) + \frac{1}{n^2} \left( 1 - \frac{1}{1 + \theta} \right) (j-p) \label{esp_cross_gamma_tilde_gamma}\;;\\
    & \mathbb{E}\left[ \gamma_{n+1, p} \Tilde{\gamma}_{n,p} \right] = p \frac{1}{n} - p^2 \frac{1}{n} \frac{1}{1 + \theta} - p^2 \frac{1}{n^2} \left(1 - \frac{1}{1 + \theta} \right). \label{esp_cross_gamma_tilde_gamma_n}
\end{align}
\end{lem}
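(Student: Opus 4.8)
The strategy is to reduce every quantity to moments of $\beta_j$ and $\tilde{\beta}_j$ already established in Lemma~\ref{lem_exp_beta_CDF}. The starting point is the defining relation $\gamma_{j,p} = (1-p)\beta_j + p\beta_{j-1}$ together with a summed counterpart that I would establish first: summing the definition and invoking the conventions $\beta_{-1} = \beta_0 = \beta_{n+1} = 0$ gives the telescoping identity
\[
\tilde{\gamma}_{j,p} = \sum_{i=0}^j \gamma_{i,p} = (1-p)\sum_{i=0}^j \beta_i + p\sum_{i=0}^j \beta_{i-1} = (1-p)\tilde{\beta}_j + p\tilde{\beta}_{j-1}.
\]
Once these two identities are in hand, each of the formulas follows either by linearity or by expanding a square (respectively a product) and substituting the $\beta$-moments.

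For the first moments \eqref{esp_gamma_1}--\eqref{esp_gamma_n+1} and \eqref{esp_tilde_gamma_1}--\eqref{esp_tilde_gamma_n+1}, I would take expectations in the two identities above using $\mathbb{E}[\beta_j] = 1/n$ from \eqref{esp_beta} and $\mathbb{E}[\tilde{\beta}_j] = j/n$. The end values follow from $\beta_0 = \beta_{n+1} = 0$ (producing the factors $1-p$ and $p$ at $j = 1$ and $j = n+1$) and from $\tilde{\beta}_n = \tilde{\beta}_{n+1} = 1$, itself a consequence of $\sum_{i=1}^n \beta_i = 1$, which forces $\mathbb{E}[\tilde{\gamma}_{n+1,p}] = 1$.

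For the second moments \eqref{mom_ord_2_gamma_1}--\eqref{mom_ord_2_gamma_j} and \eqref{mom_ord_2_tilde_gamma_1}--\eqref{mom_ord_2_tilde_gamma_n+1}, I would expand $\gamma_{j,p}^2 = (1-p)^2\beta_j^2 + 2p(1-p)\beta_j\beta_{j-1} + p^2\beta_{j-1}^2$ and likewise $\tilde{\gamma}_{j,p}^2$, and then insert \eqref{mom_ord_2_beta}--\eqref{esp_cross_beta} and \eqref{mom_ord_2_tilde_beta}. The only moment not directly listed is the cross term $\mathbb{E}[\tilde{\beta}_j\tilde{\beta}_{j-1}]$, which I would obtain by writing $\tilde{\beta}_j = \tilde{\beta}_{j-1} + \beta_j$ and combining \eqref{mom_ord_2_tilde_beta} with $\mathbb{E}[\beta_j\tilde{\beta}_{j-1}] = \mathbb{E}[\beta_j\tilde{\beta}_j] - \mathbb{E}[\beta_j^2]$, read off from \eqref{esp_beta_tildebeta} and \eqref{mom_ord_2_beta}. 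A useful consistency check is that, writing $A = \tfrac{1}{n}\tfrac{1}{1+\theta}$ and $B = \tfrac{1}{n^2}(1-\tfrac{1}{1+\theta})$, the $B$-coefficient of $\mathbb{E}[\tilde{\gamma}_{j,p}^2]$ collapses to the perfect square $[(1-p)j + p(j-1)]^2 = (j-p)^2$, matching \eqref{mom_ord_2_tilde_gamma_j}. The cross terms \eqref{esp_cross_gamma_tilde_gamma}--\eqref{esp_cross_gamma_tilde_gamma_n} are treated identically, expanding $\gamma_{j+1,p}\tilde{\gamma}_{j,p}$ into four products of a $\beta$ with a $\tilde{\beta}$ and using \eqref{esp_beta_tildebeta}, \eqref{esp_beta_tildebeta_n+1} and the covariance \eqref{cov_beta}; the $B$-coefficient again simplifies, this time to $j - p$.

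The computations are routine, so the real work lies in bookkeeping at the boundaries. The indices $j \in \{1, n, n+1\}$ must be handled separately: the conventions $\beta_{-1} = \beta_0 = \beta_{n+1} = 0$ truncate one term of $\gamma_{j,p}$ at the ends, and the constraint $\tilde{\beta}_n = \tilde{\beta}_{n+1} = 1$ makes some second-order contributions degenerate (for instance $\mathbb{E}[\tilde{\gamma}_{n+1,p}^2] = 1$ in \eqref{mom_ord_2_tilde_gamma_n+1}). I expect the main obstacle to be the boundary cross term \eqref{esp_cross_gamma_tilde_gamma_n}, where $\gamma_{n+1,p} = p\beta_n$ meets $\tilde{\gamma}_{n,p} = (1-p)\tilde{\beta}_n + p\tilde{\beta}_{n-1}$ with $\tilde{\beta}_n = 1$: one must carefully split the first-order contribution $p(1-p)\mathbb{E}[\beta_n]$ from the genuinely second-order term $p^2\mathbb{E}[\beta_n\tilde{\beta}_{n-1}] = p^2(n-1)B$, and verify that the two combine into the stated expression after accounting for the sum-to-one constraint.
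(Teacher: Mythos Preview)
Your proposal is correct and follows essentially the same route as the paper: reduce every $\gamma$- and $\tilde\gamma$-moment to the $\beta$- and $\tilde\beta$-moments of Lemma~\ref{lem_exp_beta_CDF} via the linear relations, treat the boundary indices separately, and expand squares/products. The only cosmetic difference is that the paper uses the equivalent identity $\tilde\gamma_{j,p} = \tilde\beta_j - p\beta_j$ instead of your $\tilde\gamma_{j,p} = (1-p)\tilde\beta_j + p\tilde\beta_{j-1}$, which spares the intermediate computation of $\mathbb{E}[\tilde\beta_j\tilde\beta_{j-1}]$ but leads to the same answers.
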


\begin{proof}
We have $ \Tilde{\gamma}_{1,p} = (1-p) \beta_1$ which leads to 
\[
    \mathbb{E}\left[ \Tilde{\gamma}_{1,p} \right] = (1-p) \mathbb{E}[\beta_1] = \frac{1-p}{n}.
\]
For $j \in \llbracket 2, n \rrbracket,\; \gamma_{j,p} = (1-p)\beta_j + p \beta_{j-1}.$ Thus, we have  
\[
\mathbb{E}\left[ \gamma_{j,p} \right] = (1-p)\mathbb{E}[\beta_j] + p \mathbb{E}\left[\beta_{j-1}\right] = \frac{1-p}{n} + \frac{p}{n} = \frac{1}{n}. \] 
The definition of $\gamma_{n+1,p}$ leads to  
\[\mathbb{E}\left[ \gamma_{n+1,p} \right] = \mathbb{E}\left[ p \beta_n \right] = \frac{p}{n}.\] 
Using the expression of $\gamma_{n+1,p}$ and Equation~\eqref{mom_ord_2_beta}, we get 
\[ \mathbb{E}\left[ \gamma_{1,p}^2 \right] = \mathbb{E}\left[ (1-p)^2 \beta_1^2 \right] = (1-p)^2 \left[ \frac{1}{n} \frac{1}{1 + \theta} + \frac{1}{n^2} \left( 1 - \frac{1}{1 + \theta} \right) \right]. \] 
In the same way, we have 
\[\mathbb{E}\left[ \gamma_{n+1,p}^2 \right] = \mathbb{E}\left[ p^2 \beta_n^2\right] = p^2 \left[ \frac{1}{n} \frac{1}{1 + \theta} + \frac{1}{n^2} \left( 1 - \frac{1}{1 + \theta} \right) \right].\] 
For all $j \in \llbracket 1, n \rrbracket,$ we have 
\begin{align*}    \mathbb{E}\left[ \gamma_{j,p}^2 \right] &= \mathbb{E}\left[ \left( (1-p) \beta_j + p \beta_{j-1} \right)^2  \right] \\
&= (1-p)^2 \mathbb{E}\left[ \beta_j^2 \right] + 2p(1-p) \mathbb{E}\left[ \beta_j \beta_{j-1} \right] + p^2 \mathbb{E}\left[ \beta_{j-1}^2 \right] \\
&= \left( (1-p)^2 + p^2 \right) \left[ \frac{1}{n} \frac{1}{1 + \theta} + \frac{1}{n^2} \left(1 - \frac{1}{ 1 + \theta} \right)  \right] + 2p(1-p) \frac{1}{n^2} \left( 1 - \frac{1}{1 + \theta} \right) \\
&=  (1 - 2p +2p^2) \frac{1}{n} \frac{1}{1 + \theta} + \frac{1}{n^2} \left( 1 - \frac{1}{1 + \theta} \right).
\end{align*}
Equation~\eqref{esp_tilde_gamma_1} is obtained by noticing that $ \Tilde{\gamma}_{1,p} = \gamma_{1,p}.$ Equation~\eqref{esp_tilde_gamma_n+1} is derived by writing 
\[\Tilde{\gamma}_{n+1,p} = \gamma_{n+1,p} + \Tilde{\gamma}_{n,p} = p \beta_n + 1 - p\beta_n = 1. \]
Regarding Equation~\eqref{esp_tilde_gamma_j}. Since $\Tilde{\gamma}_{j,p}$ can be expressed as $ \Tilde{\beta}_j - p \beta_j,$ we deduce 
\begin{align*}
\mathbb{E}\left[\Tilde{\gamma}_{j,p} \right] = \frac{j}{n} - \frac{p}{n} = \frac{j-p}{n}.
\end{align*} 
Equation~\eqref{mom_ord_2_tilde_gamma_1} is deduced from the fact that $ \Tilde{\gamma}_{1,p} = \gamma_{1,p}$ and Equation~\eqref{mom_ord_2_tilde_gamma_n+1} results from $\Tilde{\gamma}_{n+1,p} =1.$  
Consider Equation~\eqref{mom_ord_2_tilde_gamma_j}. For all $j \in \llbracket 2,n \rrbracket,$ we have 
\[
    \mathbb{E}\left[ \Tilde{\gamma}_{j,p}^2 \right] = \mathbb{E}\left[ \left( \Tilde{\beta}_j - p \beta_j \right)p^2 \right] = \mathbb{E}\left[ \Tilde{\beta}_j^2 \right] - 2p \mathbb{E}\left[ \Tilde{\beta}_j \beta_j \right] + p^2 \mathbb{E}\left[ \beta_j^2 \right]. 
\] 
Using Equations~\eqref{mom_ord_2_beta}, \eqref{esp_beta_tildebeta} and \eqref{mom_ord_2_tilde_beta}, we get \begin{align*} 
    \mathbb{E}\left[ \Tilde{\gamma}_{j,p}^2 \right]
    &= \frac{j}{n} \frac{1}{1 + \theta} + \frac{j^2}{n^2} \left( 1 - \frac{1}{1 + \theta} \right) -2p \frac{1}{n} \frac{1}{1 + \theta} \\
    & \quad -2p \frac{j}{n^2} \left(1 - \frac{1}{1 + \theta} \right) + p^2 \frac{1}{n} \frac{1}{1 + \theta} + p^2 \frac{1}{n^2} \left(1 - \frac{1}{1 + \theta} \right)  \\
    &= (j-2p + p^2 ) \frac{1}{n} \frac{1}{1 + \theta} + (j^2 - 2pj + p^2) \frac{1}{n^2} \left( 1 - \frac{1}{1 + \theta} \right) \\
    &= (j-2p + p^2 ) \frac{1}{n} \frac{1}{ 1 + \theta}  + (j-p)^2 \frac{1}{n^2} \left( 1 - \frac{1}{1 + \theta} \right).
\end{align*}
Now, we prove Equation~\eqref{esp_cross_gamma_tilde_gamma}. For $j=1,$ we get 
\begin{align*}
    \mathbb{E}\left[ \gamma_{j+1,p} \Tilde{\gamma}_{j,p} \right] &= \mathbb{E}\left[ \gamma_{2,p} \gamma_{1,p} \right] \\
    &= \mathbb{E}\left[ \left( (1-p)\beta_2 + p \beta_1 \right)  (1-p)\beta_1 \right] \\
    &= \mathbb{E}\left[ (1-p)^2 \beta_2 \beta_1 + p(1-p) \beta_1^2 \right] \\
    &= \frac{(1-p)^2}{n^2} \left(1 - \frac{1}{1 + \theta}\right) + (p - p^2) \left[ \frac{1}{n} \frac{1}{1+\theta} + \frac{1}{n^2} \left( 1 - \frac{1}{1 + \theta} \right) \right]\\
    &= \frac{p - p^2}{n} \frac{1}{1+\theta} + \frac{1 - p}{n^2} \left( 1 - \frac{1}{1 + \theta} \right). 
\end{align*} 
For $j \neq1,$ we have 
\begin{align*}
    \mathbb{E}\left[ \gamma_{j+1,p} \Tilde{\gamma}_{j,p} \right] &= \mathbb{E}\left[ \left((1-p) \beta_{j+1} + p \beta_j \right) ( \Tilde{\beta}_j - p \beta_j ) \right] \\
    &= (1-p) \mathbb{E}\left[ \beta_{j+1} \Tilde{\beta}_j \right] + p \mathbb{E}\left[ \beta_j \Tilde{\beta}_j \right]  - p(1-p) \mathbb{E}\left[ \beta_{j+1} \beta_j \right] -p^2 \mathbb{E}\left[ \beta_j^2 \right].
\end{align*} 
Since $ \Tilde{\beta}_{j+1} = \Tilde{\beta}_j + \beta_{j+1},$ we get 
\begin{align*}
    &\mathbb{E}\left[ \gamma_{j+1,p} \Tilde{\gamma}_{j,p} \right]  \\
    &= (1-p) \mathbb{E}\left[ \beta_{j+1} ( \Tilde{\beta}_{j+1} - \beta_{j+1} ) \right] + p \mathbb{E}\left[ \beta_j \Tilde{\beta}_j \right] - p \mathbb{E}\left[ \beta_{j+1}\beta_j \right] + p^2 \mathbb{E}\left[ \beta_{j+1} \beta_j \right] - p^2 \mathbb{E}\left[ \beta_j^2 \right] \\
    &= (1-p) \mathbb{E}\left[ \beta_{j+1} \Tilde{\beta}_{j+1} \right] - (1-p) \mathbb{E}\left[ \beta_{j+1}^2 \right] + p \mathbb{E}\left[ \beta_j \Tilde{\beta}_j \right] - p \mathbb{E}\left[ \beta_{j+1} \beta_j \right] + p^2 \mathbb{E}\left[ \beta_{j+1} \beta_j \right] \\
    &\quad - p^2 \mathbb{E}\left[ \beta_j^2 \right] \\
    &= \mathbb{E}\left[ \beta_{j+1} \Tilde{\beta}_{j+1} \right] - p \left( \mathbb{E}\left[ \beta_{j+1} \Tilde{\beta}_{j+1} \right] - \mathbb{E}\left[ \beta_j \Tilde{\beta}_j \right] \right) \\
    & \quad - \left[ (1-p) \mathbb{E}\left[ \beta_{j+1}^2 \right] + p^2 \mathbb{E}\left[ \beta_j^2 \right] \right]  - p \mathbb{E}\left[ \beta_{j+1} \beta_j \right] + p^2 \mathbb{E}\left[ \beta_{j+1} \beta_j \right] \\
    &= \frac{1}{n} \frac{1}{1 + \theta} + \frac{j+1}{n^2} \left( 1 - \frac{1}{1+ \theta} \right) - p\frac{1}{n^2} \left(1 - \frac{1}{1 + \theta} \right)  \\
    &\quad - \left[ (1-p)+p^2 \right] \left[ \frac{1}{n} \frac{1}{1 + \theta} + \frac{1}{n^2} \left( 1 - \frac{1}{1 + \theta} \right)\right] -p \frac{1}{n^2} \left(1 - \frac{1}{1 + \theta} \right) + p^2 \frac{1}{n^2} \left(1 - \frac{1}{1 + \theta} \right) \\
    &= \frac{1}{n} \frac{1}{1 + \theta} \left( 1 + p - 1 -p^2 \right) + \frac{1}{n^2} \left( 1 - \frac{1}{ 1 + \theta} \right) \left( j+1 -p-1 + p -p^2 -p+p^2  \right) \\
    &= \frac{1}{n} \frac{1}{1 + \theta} (p - p^2 ) + \frac{1}{n^2} \left( 1 - \frac{1}{1 + \theta} \right) (j-p).
\end{align*}
For Equation~\eqref{esp_cross_gamma_tilde_gamma_n}, we have 
\begin{align*}
    \mathbb{E}\left[ \Tilde{\gamma}_{n+1,p} \Tilde{\gamma}_{n,p} \right] &= (1-p)^2 \left[ \frac{1}{n} \frac{1}{1 + \theta} + \frac{1}{n^2} \left(1 - \frac{1}{1+\theta} \right) \right] \\
    &= \mathbb{E}\left[ p \beta_n ( \Tilde{\beta}_n - p \beta_n ) \right] \\
    &= p \mathbb{E}\left[ \beta_n (1 - p \beta_n) \right] \\
    &= p \mathbb{E}\left[\beta_n\right] - p^2 \mathbb{E}\left[ \beta_n^2 \right] \\
    &= p \frac{1}{n} - p^2 \left[\frac{1}{n} \frac{1}{1 + \theta} + \frac{1}{n^2} \left( 1 - \frac{1}{1 + \theta} \right) \right].
\end{align*}
\end{proof}

\begin{lem}\label{lem:esp_ind_CDF}
We have the following statements:  
\begin{align}
    &\sum_{j=1}^{n-1} \mathbb{E}\left[ \mathds{1}_{[\varphi^{-1}(U_j^\ast), \varphi^{-1}(U_{j+1}^\ast))}(t) \right] = 1 - \varphi(t)^n - (1 - \varphi(t))^n \ ; \label{esp_indicatrice} \\
    & \sum_{j=1}^{n-1} j \mathbb{E}\left[ \mathds{1}_{[\varphi^{-1}(U_j^\ast), \varphi^{-1}(U_{j+1}^\ast))}(t) \right] = n \varphi(t) - n \varphi(t)^n \ ;\label{jesp}\\
    & \sum_{j=1}^{n-1} j^2 \mathbb{E}\left[ \mathds{1}_{[ \varphi^{-1}(U_j^\ast), \varphi^{-1}(U_{j+1}^\ast)) }(t) \right] = n^2\varphi(t)^2 - n \varphi(t)^2  -n^2 \varphi(t)^n + n \varphi(t). \label{j2_esp}
\end{align}
In particular, for any $k\geq 1$ we deduce that 
\begin{align}
    &\sum_{j=1}^{n-1} (j-k) \mathbb{E}\left[ \mathds{1}_{[\varphi^{-1}(U_j^\ast), \varphi^{-1}(U_{j+1}^\ast))}(t) \right] =  n \varphi(t) - n \varphi(t)^n - k \left( 1 - \varphi(t)^n - (1 - \varphi(t))^n \right) \ ; \label{j-k_esp}\\
    \begin{split}& \sum_{j=1}^{n-1} (j-k)^2 \mathbb{E}\left[ \mathds{1}_{[\varphi^{-1}(U_j^\ast), \varphi^{-1}(U_{j+1}^\ast))}(t) \right] \\
    &= n(n-1) \varphi(t)^2 - (n-k)^2 \varphi(t)^n + n (1-2k) \varphi(t) + k^2 - k^2 (1 - \varphi(t))^n.\end{split} \label{(j-k)^2_esp}
\end{align}
\end{lem}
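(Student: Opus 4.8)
The plan is to reduce all six identities to the moments of a single binomial random variable. The key observation is that $\varphi$ is a continuous increasing bijection of $[0,1]$ with inverse $\varphi^{-1}$, so applying the increasing map $\varphi$ turns the inequalities $\varphi^{-1}(U_j^\ast)\le t<\varphi^{-1}(U_{j+1}^\ast)$ into $U_j^\ast\le\varphi(t)<U_{j+1}^\ast$. Hence, almost surely,
\[
\mathds{1}_{[\varphi^{-1}(U_j^\ast),\varphi^{-1}(U_{j+1}^\ast))}(t)=\mathds{1}_{[U_j^\ast,U_{j+1}^\ast)}(\varphi(t)),
\]
and the right-hand indicator equals $1$ exactly when precisely $j$ of the uniforms $U_1,\dots,U_n$ lie below $\varphi(t)$.

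Concretely, I would set $N:=\#\{i\in\llbracket1,n\rrbracket : U_i\le\varphi(t)\}$. Since the $U_i$ are i.i.d.\ uniform and $\varphi(t)\in[0,1]$, we have $N\sim\mathrm{Binomial}(n,\varphi(t))$, and the display above gives $\mathds{1}_{[\varphi^{-1}(U_j^\ast),\varphi^{-1}(U_{j+1}^\ast))}(t)=\mathds{1}\{N=j\}$ up to a null event (ties have probability zero by absolute continuity of the $U_i$). Writing $q:=\varphi(t)$, this yields $\mathbb{E}[\mathds{1}_{[\varphi^{-1}(U_j^\ast),\varphi^{-1}(U_{j+1}^\ast))}(t)]=\binom{n}{j}q^{j}(1-q)^{n-j}=\mathbb{P}(N=j)$, so every sum in the statement is a truncated binomial moment, with the truncation removing only the $j=0$ and $j=n$ terms.

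The three base identities then follow by isolating those two extreme terms. For \eqref{esp_indicatrice} I use $\sum_{j=0}^{n}\mathbb{P}(N=j)=1$ with $\mathbb{P}(N=0)=(1-q)^n$ and $\mathbb{P}(N=n)=q^n$. For \eqref{jesp} I use $\mathbb{E}[N]=nq$ and subtract the $j=n$ contribution $nq^n$, the $j=0$ term being zero. For \eqref{j2_esp} I use $\mathbb{E}[N^2]=\Var(N)+(\mathbb{E}[N])^2=nq(1-q)+n^2q^2$ and subtract the $j=n$ contribution $n^2q^n$; rearranging matches the stated right-hand side.

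Finally, \eqref{j-k_esp} and \eqref{(j-k)^2_esp} are purely algebraic: expanding $j-k$ and $(j-k)^2=j^2-2kj+k^2$ and using linearity, I substitute \eqref{esp_indicatrice}, \eqref{jesp} and \eqref{j2_esp}. The only step needing attention is the bookkeeping in \eqref{(j-k)^2_esp}: collecting the coefficients of $q^2$, $q^n$, $q$, the constant and $(1-q)^n$ gives $n(n-1)q^2$, $-(n-k)^2q^n$, $n(1-2k)q$, $k^2$ and $-k^2(1-q)^n$, respectively, which is exactly the claimed expression. The argument presents no genuine obstacle; the single conceptual point is the order-statistic-to-binomial reduction in the opening display, after which everything is elementary.
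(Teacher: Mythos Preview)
Your proof is correct. The reduction $\mathds{1}_{[\varphi^{-1}(U_j^\ast),\varphi^{-1}(U_{j+1}^\ast))}(t)=\mathds{1}\{N=j\}$ with $N\sim\mathrm{Binomial}(n,\varphi(t))$ is valid because $\varphi$ is a strictly increasing bijection (the CDF algorithm uses $\varphi^{-1}$), and the rest is indeed elementary bookkeeping with binomial moments.

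The paper takes a more computational route: it writes each expectation as a double integral against the explicit joint density $f_{(U_j^\ast,U_{j+1}^\ast)}(u,v)=n!\,\frac{u^{j-1}}{(j-1)!}\frac{(1-v)^{n-j-1}}{(n-j-1)!}\mathds{1}_{0\le u<v\le 1}$, integrates over $u\in[0,\varphi(t)]$ and $v\in[\varphi(t),1]$ to obtain $\binom{n}{j}\varphi(t)^j(1-\varphi(t))^{n-j}$, and then evaluates the $j$- and $j^2$-weighted sums through factorial rewriting (e.g.\ $j^2=j(j-1)+j$) and repeated application of the binomial theorem. Your approach short-circuits all of this by recognizing the binomial structure from the outset and reading off $\mathbb{E}[N]$ and $\mathbb{E}[N^2]$ directly. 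The paper's method has the minor advantage of being fully analytic and parallel to its treatment of Lemma~\ref{lem_exp_mom_phi_BK} (where the analogous integrals involve $\varphi(u)$ and no such probabilistic shortcut is available), but your argument is shorter, more transparent, and isolates the single conceptual step cleanly.
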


\begin{proof}
Using the known expression of the joint density of the pairs $(U_j^\ast, U_{j+1}^\ast)$, recalled in Equation~\ref{densite_jointe_Ui}, we can write 
\begin{align*}
&\sum_{j=1}^{n-1} \mathbb{E}\left[ \mathds{1}_{[\varphi^{-1}(U_j^\ast), \varphi^{-1}(U_{j+1}^\ast))}(t) \right]\\
&= \sum_{j=1}^{n-1} \int_0^1 \int_0^1 \mathds{1}_{[ \varphi^{-1}(u), \varphi^{-1}(v))}(t) n! \frac{u^{j-1}}{(j-1)!} \frac{(1-v)^{n-j-1}}{(n-j-1)!} \mathrm{d}v \mathrm{d}u \\
&= \sum_{j=1}^{n-1} \frac{n!}{(j-1)!(n-j-1)!} \int_0^{\varphi(t)} \int_{\varphi(t)}^1 u^{j-1} (1-v)^{n-j-1} \mathrm{d}v\mathrm{d}u
\end{align*} 
Then, by integration and by using the binomial theorem, we get 
\begin{align*}
\sum_{j=1}^{n-1} \mathbb{E}\left[ \mathds{1}_{[\varphi^{-1}(U_j^\ast), \varphi^{-1}(U_{j+1}^\ast))}(t) \right]\;
&= \sum_{j=1}^{n-1} \frac{n!}{j! (n-j)!} \varphi(t)^j (1-\varphi(t))^{n-j} \\
&= 1 - \varphi(t)^n - (1 - \varphi(t))^n. \end{align*} Same arguments lead to the remaining equations. We now focus on Equation~\eqref{jesp}: \begin{align*}
\sum_{j=1}^{n-1} j \mathbb{E}\left[\mathds{1}_{[\varphi^{-1}(U_j^\ast), \varphi^{-1}(U_{j+1}^\ast))}(t) \right] &= \sum_{j=1}^{n-1} \frac{n!}{(j-1)! (n-j)!} \varphi(t)^j (1 - \varphi(t))^{n-j}\\
&= n \varphi(t) \sum_{k=0}^{n-2} \binom{n-1}{k} \varphi(t)^k (1-\varphi(t))^{n-1-k} \\
&= n \varphi(t) (1 - \varphi(t)^{n-1} ) \\
&= n \varphi(t) - n \varphi(t)^n.
\end{align*} 
Regarding Equation~\eqref{j2_esp}, we have 
\begin{align*}
&\sum_{j=1}^{n-1} j^2 \mathbb{E}\left[ \mathds{1}_{[ \varphi^{-1}(U_j^\ast), \varphi^{-1}(U_{j+1}^\ast)) }(t) \right]\\
&= \sum_{j=1}^{n-1} n! j^2 \frac{\varphi(t)^j}{j!} \frac{(1 - \varphi(t))^{n-j}}{(n-j)!} \\
&= \sum_{j=1}^{n-1} n! j \frac{\varphi(t)^j}{(j-1)!} \frac{(1 - \varphi(t))^{n-j}}{(n-j)!}\\
&=\sum_{j=1}^{n-1} n! (j-1 +1) \frac{\varphi(t)^j}{(j-1)!} \frac{(1 - \varphi(t))^{n-j}}{(n-j)!} \\
&= \sum_{j=1}^{n-1} n! (j-1) \frac{\varphi(t)^j}{(j-1)!} \frac{(1 - \varphi(t))^{n-j}}{(n-j)!} + \sum_{j=1}^{n-1} n! \frac{\varphi(t)^j}{(j-1)!} \frac{(1 - \varphi(t))^{n-j}}{(n-j)!} \\
&= \sum_{j=2}^{n-1} n! (j-1) \frac{\varphi(t)^j}{(j-1)!} \frac{(1 - \varphi(t))^{n-j}}{(n-j)!} + n \varphi(t) \sum_{j=1}^{n-1} (n-1)! \frac{\varphi(t)^{j-1}}{(j-1)!} \frac{(1 - \varphi(t))^{n-j}}{(n-j)!}. 
\end{align*} 
Then,  
\begin{align*} 
&\sum_{j=1}^{n-1} j^2 \mathbb{E}\left[ \mathds{1}_{[ \varphi^{-1}(U_j^\ast), \varphi^{-1}(U_{j+1}^\ast)) }(t) \right]\\
    &= n(n-1) \varphi(t)^2 \sum_{j=2}^{n-1} \binom{n-2}{j-2} \varphi(t)^{j-2} (1 - \varphi(t))^{n-2 -(j-2)}\\
    &\quad \quad + n \varphi(t) \sum_{k=0}^{n-2} \binom{n-1}{k} \varphi(t)^k (1 - \varphi(t))^{n-1 -k} \\
    &= n(n-1) \varphi(t)^2 \sum_{k=0}^{n-3} \binom{n-2}{k} \varphi(t)^{k} (1 - \varphi(t))^{n-2 -k} \\
    &\quad \quad + n \varphi(t) \sum_{k=0}^{n-2} \binom{n-1}{k} \varphi(t)^k (1 - \varphi(t))^{n-1 -k}\\
    &= n(n-1) \varphi(t)^2 \left( 1 - \varphi(t)^{n-2} \right) + n \varphi(t) \left( 1 - \varphi(t)^{n-1} \right) \\
    &= n(n-1) \varphi(t)^2 - n(n-1) \varphi(t)^n + n \varphi(t) - n \varphi(t)^n \\
    &= n^2\varphi(t)^2 - n \varphi(t)^2  -n^2 \varphi(t)^n + n \varphi(t).
\end{align*} 
Equation~\eqref{j-k_esp} ensues from Equations~\eqref{jesp}-\eqref{j2_esp}. For Equation~\eqref{(j-k)^2_esp}, we have
\begin{align*}
    &\sum_{j=1}^{n-1} (j-k)^2 \mathbb{E}\left[ \mathds{1}_{[ \varphi^{-1}(U_j^\ast), \varphi^{-1}(U_{j+1}^\ast)) }(t) \right]\\
    &= \sum_{j=1}^{n-1} j^2  \mathbb{E}\left[ \mathds{1}_{[ \varphi^{-1}(U_j^\ast), \varphi^{-1}(U_{j+1}^\ast)) }(t) \right] -2k \sum_{j=1}^{n-1} j \mathbb{E}\left[ \mathds{1}_{[ \varphi^{-1}(U_j^\ast), \varphi^{-1}(U_{j+1}^\ast)) }(t) \right] \\
    &\quad \quad \quad + k^2 \sum_{j=1}^{n-1} \mathbb{E}\left[ \mathds{1}_{[ \varphi^{-1}(U_j^\ast), \varphi^{-1}(U_{j+1}^\ast)) }(t) \right] \\
    &= n^2 \varphi(t)^2 - n \varphi(t)^2 - n^2 \varphi(t)^n + n \varphi(t) -2k \left(n \varphi(t) - n \varphi(t)^n \right)\\
    &\quad \quad \quad + k^2 \left( 1 - \varphi(t)^n - (1 - \varphi(t))^n \right) \\
    &= (n^2 - n) \varphi(t)^2 - (n^2 - 2kn + k^2) \varphi(t)^n + n(1-2k)\varphi(t) + k^2 - k^2(1 - \varphi(t))^n \\
    &= (n^2 - n) \varphi(t)^2 - (n-k)^2 \varphi(t)^n + n(1-2k)\varphi(t) + k^2 -k^2(1-\varphi(t))^n.
\end{align*} 
\end{proof}

\section{Proof of Proposition~\ref{prop:warping} in BK case}\label{appendix::proof_BK}
In this appendix, we aim at proving Proposition~\ref{prop:warping} for the BK procedure using Lemmas~\ref{lem_exp_coeff_BK}-\ref{lem_exp_mom_phi_BK} proved in Appendix~\ref{annexe_preuve}. It is divided into two parts corresponding to the two statements of Proposition~\ref{prop:warping}. We recall that the notation $t \in [0,1] \mapsto w_n^{\BK}(t;\theta)$ refers to the path obtained with the BK procedure, with $\theta \in \mathbb{R}_+^\ast$ and $n \in \mathbb{N}^\ast.$  

\subsection{Proof of Proposition~\ref{prop:warping}(i) for BK Algorithm}

In this subsection, we give an explicit expression of the expectation of the process simulated with the BK procedure. Then, we prove asymptotic results leading to the convergence of the term $\mathbb{E}[w_n^{\BK}(t;\theta)]$ to $\varphi(t),$ for $t\in (0,1)$ and obtain a bound on the associated rate of convergence.


\subsubsection{Explicit expression of the expectation term}

For all $j \in \llbracket 0,n \rrbracket,$ for all $t \in (0,1),$ we set  
\begin{align*}
    A_j(t) = \left( \Tilde{\alpha}_j + \alpha_{j+1} \frac{t - U_j^\ast}{ U_{j+1}^\ast - U_j^\ast} \right) \mathds{1}_{[U_j^\ast, U_{j+1}^\ast)}(t).
\end{align*}  
Let $t \in (0,1).$ The expectation of $w_n^{\BK}(t;\theta)$ can be written as 
\begin{align*}
    \mathbb{E}\left[ w_n^{\BK}(t;\theta) \right] = \sum_{j=0}^n \mathbb{E}\left[ A_j(t) \right] = \sum_{j=0}^n \mathbb{E}\left[ \mathbb{E}\left[A_j(t) \;| \; U^\ast \right] \right].
\end{align*} 
First, $A_0(t) = ( \alpha_1 t / U_1^\ast) \mathds{1}_{[0, U_1^\ast)}(t)$ which, using Lemma~\ref{lem_exp_coeff_BK}, yields \begin{align*}
\mathbb{E}\left[ \mathbb{E}\left[ A_0(t) \;| \; U^\ast \right] \right] &= \mathbb{E}\left[ \frac{t}{U_1^\ast} \mathbb{E}\left[ \alpha_1 \; | \; U^\ast \right] \mathds{1}_{[0, U_1^\ast)}(t) \right] = \mathbb{E}\left[ \frac{t}{U_1^\ast} \varphi(U_1^\ast) \mathds{1}_{[0, U_1^\ast)}(t) \right].
\end{align*}
Second, we can rewrite the term $A_n(t)$ as follows: 
\begin{align*}
    A_n(t) &= \left( \Tilde{\alpha}_n + \frac{\alpha_{n+1}}{U_{n+1}^\ast - U_n^\ast} (t - U_n^\ast) \right) \mathds{1}_{[U_n^\ast, 1)}(t) \\
    &= \left( (1 - \alpha_{n+1} ) + \frac{\alpha_{n+1}}{1 - U_n^\ast} (t-U_n^\ast) \right) \mathds{1}_{[U_n^\ast, 1)}(t) \\
    &= \left( 1 - \alpha_{n+1} \frac{1 - t}{1 - U_n^\ast} \right) \mathds{1}_{[U_n^\ast, 1)}(t),
\end{align*}
whereby we deduce that $\mathbb{E}\left[ A_n(t) \right]$ expresses as follows: 
\begin{align*}
    \mathbb{E}\left[\mathbb{E}\left[A_n(t) | U^\ast \right] \right] &= \mathbb{E}\left[ \left( 1 - \mathbb{E}\left[\alpha_{n+1} \;|\; U^\ast \right] \frac{1-t}{1 -U_n^\ast} \right) \mathds{1}_{[U_n^\ast, 1)}(t) \right]\\
    &=\mathbb{E}\left[ \left(1 - (1 - \varphi(U_n^\ast)) \frac{1 - t}{1 - U_n^\ast} \right) \mathds{1}_{[U_n^\ast, 1)}(t) \right].
\end{align*}
Third, we express the remaining term $ \sum_{j=1}^{n-1} \mathbb{E}\left[A_j(t) \right].$ For all $j \in \llbracket1, n-1 \rrbracket,$ we have 
\begin{align*}
    \mathbb{E}\left[ \mathbb{E}\left[ A_j(t) \;| \; U^\ast \right] \right] &= \mathbb{E}\left[ \left( \mathbb{E}\left[ \Tilde{\alpha}_j \; | \; U^\ast \right] + \frac{t - U_j^\ast}{U_{j+1}^\ast - U_j^\ast} \mathbb{E}\left[ \alpha_{j+1} \; | \; U^\ast \right] \right) \mathds{1}_{[U_j^\ast, U_{j+1}^\ast)}(t) \right] \\
    &= \mathbb{E}\left[ \left( \varphi(U_j^\ast) + \frac{t - U_j^\ast}{U_{j+1}^\ast - U_j^\ast} ( \varphi(U_{j+1}^\ast) - \varphi(U_j^\ast)) \right) \mathds{1}_{[U_j^\ast, U_{j+1}^\ast)}(t)\right].
\end{align*}
Given that the joint distributions of the pairs $(U_j^\ast, U_{j+1}^\ast),$ for all $j \in \llbracket1, n \rrbracket,$  are known, as well as those of $U_1^\ast$ and $U_n^\ast,$ this expression is explicit and can be rewritten as
\begin{align} \label{esp_explicite_BK}
    \begin{split}&\mathcal E_n^{\BK}(t)\\ &= \int_t^1 \frac{t}{u} \varphi(u) n (1-u)^{n-1} \mathrm{d}u \\
    & \quad + \sum_{j=1}^{n-1} \int_t^1 \int_0^t \left(\varphi(u) + \frac{t - u}{v-u} (\varphi(v) - \varphi(u)) \right) \frac{n!}{(j-1)!(n-j-1)!} u^{j-1} (1-v)^{n-j-1} \mathrm{d}u \mathrm{d}v \\
    & \quad  + \int_0^t \left( 1 - (1 - \varphi(v)) \frac{1-t}{1-v} \right) n v^{n-1} \mathrm{d}v. \end{split} 
\end{align}

\subsubsection{Asymptotic behaviour of the expectation term}

Let $t \in (0,1).$ The definition of $w_n^{\BK}(.\;;\theta)$ leads to the following decomposition: \[
    \mathbb{E}\left[ w_n^{\BK}(t;\theta) \right] = \mathbb{E}\left[ A_0(t) \right] + \sum_{j=1}^{n-1} \mathbb{E}\left[ A_j(t) \right] + \mathbb{E}\left[ A_n(t) \right].\] 
\textit{Step 1: Highlighting $\varphi(t).$} 

\noindent We can separate $\sum_{j=1}^{n-1} \mathbb{E}\left[ A_j(t)\right] $ into two terms as follows: 
\begin{align*}
    &\sum_{j=1}^{n-1} \mathbb{E}\left[ A_j(t)\right]\\
    &= \sum_{j=1}^{n-1} \mathbb{E}\left[ \varphi(U_j^\ast) \mathds{1}_{[U_j^\ast, U_{j+1}^\ast)}(t) \right] + \sum_{j=1}^{n-1} \mathbb{E}\left[ \frac{t - U_j^\ast}{U_{j+1}^\ast - U_j^\ast} \left( \varphi(U_{j+1}^\ast) - \varphi(U_j^\ast) \right) \mathds{1}_{[U_j^\ast, U_{j+1}^\ast)}(t) \right].\end{align*} Using Equation~\eqref{esp_phi}, we get  \begin{align*}
    \sum_{j=1}^{n-1} \mathbb{E}\left[ A_j(t)\right]&= \varphi(t) - \varphi(t)t^n - \int_0^t \varphi'(u) \left( (1 - t+ u)^n - u^n \right) \mathrm{d}u \\
    & \quad+ \sum_{j=1}^{n-1} \mathbb{E}\left[ \frac{t - U_j^\ast}{U_{j+1}^\ast - U_j^\ast} \left( \varphi(U_{j+1}^\ast) - \varphi(U_j^\ast) \right) \mathds{1}_{[U_j^\ast, U_{j+1}^\ast)}(t) \right].
\end{align*} 

\noindent Consequently, we can get an expression of the expectation of $w_n^{\BK}(t;\theta)$ of the form \[
    \mathbb{E}\left[ w_n^{\BK}(t;\theta) \right] = \varphi(t) + R_n^{\BK}(t;\theta),\] where we set  
\begin{align*}
    R_n^{\BK}(t;\theta) &= \mathbb{E}\left[A_0(t) \right] +  \mathbb{E}\left[ A_n(t) \right] - \varphi(t)t^n - \int_0^t \varphi'(u) \left( (1 - t+ u)^n - u^n \right) \mathrm{d}u \\ 
     &\quad + \sum_{j=1}^{n-1} \mathbb{E}\left[ \frac{t - U_j^\ast}{U_{j+1}^\ast - U_j^\ast} \left( \varphi(U_{j+1}^\ast) - \varphi(U_j^\ast) \right) \mathds{1}_{[U_j^\ast, U_{j+1}^\ast)}(t) \right]. 
\end{align*} 

\noindent \textit{Step 2: asymptotic behaviour of} $R_n^{\BK}(t;\theta).$ 

\noindent $\bullet$ \textit{Term $\mathbb{E}\left[ A_0(t) \right]$.} We remark that  
\begin{align*}
    \mathbb{E}\left[ A_0(t) \right] &=  \mathbb{E}\left[ \frac{t}{U_1^\ast} \varphi(U_1^\ast) \mathds{1}_{[0, U_1^\ast)}(t) \right] \leq \mathbb{E}\left[ \varphi\left(U_1^\ast \right) \mathds{1}_{[0, U_1^\ast)}(t) \right]. \end{align*} Then, using the bound $\varphi \leq 1,$ we get  \begin{align*} 
    \mathbb{E}\left[ A_0(t) \right] &\leq \mathbb{E}\left[ \mathds{1}_{[0, U_1^\ast)}(t) \right]. 
\end{align*}
Thus, we have  
\begin{equation*}
    0\leq \mathbb{E}\left[A_0(t) \right] \leq (1-t)^n.
\end{equation*} 
Since $t \in (0,1),$ we have $ \displaystyle \lim_{n \to \infty} n (1-t)^n = 0.$ We deduce, as  $n \to  \infty,$  \[\mathbb{E}\left[ A_0(t) \right] =  O\left(\frac{1}{n}\right).\] 
$\bullet$ \textit{Term $\mathbb{E}\left[ A_n(t) \right].$} We can notice that $ \mathbb{E}\left[A_n(t) \right] $ is a positive term. Moreover, we can derive the inequality 
\[
    \mathbb{E}\left[A_n(t) \right]= \mathbb{E}\left[ \left(1 - (1 - \varphi(U_n^\ast)) \frac{1 - t}{1 - U_n^\ast} \right) \mathds{1}_{[U_n^\ast, 1)}(t) \right] \leq \mathbb{E}\left[ \mathds{1}_{[U_n^\ast, 1)}(t) \right],
\] 
which leads to  
\[
    0 \leq \mathbb{E}\left[ A_n(t) \right] \leq t^n \qquad \text{and} \qquad 
    \mathbb{E}\left[A_n(t) \right] = O\left( \frac{1}{n} \right).
\] 
$\bullet$ \textit{Term $\sum_{j=1}^{n-1} \mathbb{E}\left[ A_j(t)\right] - \varphi(t).$} The expression of $\sum_{j=1}^{n-1} \mathbb{E}\left[ A_j(t)\right] - \varphi(t)$ is given by the following formula: 
\begin{align*}
\sum_{j=1}^{n-1} \mathbb{E}\left[ A_j(t)\right] - \varphi(t)&= - \varphi(t)t^n - \int_0^t \varphi'(u) ( (1 - t+ u)^n - u^n) \mathrm{d}u \\
& \quad + \sum_{j=1}^{n-1} \mathbb{E}\left[ \frac{t - U_j^\ast}{U_{j+1}^\ast - U_j^\ast} \left( \varphi(U_{j+1}^\ast) - \varphi(U_j^\ast) \right) \mathds{1}_{[U_j^\ast, U_{j+1}^\ast)}(t) \right].
\end{align*} 
The term $\sum_{j=1}^{n-1} \mathbb{E}[((t - U_j^\ast)/ (U_{j+1}^\ast - U_j^\ast)) (\varphi(U_{j+1}^\ast) - \varphi(U_j^\ast)) \mathds{1}_{[U_j^\ast, U_{j+1}^\ast)}(t)]$ is positive. Moreover, we have 
\begin{align*}
    &\sum_{j=1}^{n-1} \mathbb{E}\left[ \frac{t - U_j^\ast}{U_{j+1}^\ast - U_j^\ast} \left( \varphi(U_{j+1}^\ast) - \varphi(U_j^\ast) \right) \mathds{1}_{[U_j^\ast, U_{j+1}^\ast)}(t) \right]\\
    &\leq \sum_{j=1}^{n-1} \mathbb{E}\left[ \left( \varphi\left( U_{j+1}^\ast \right) - \varphi\left( U_j^\ast\right) \right) \mathds{1}_{[U_j^\ast, U_{j+1}^\ast)}(t)   \right] \\
    &= \varphi(t) - \varphi(t)(1-t)^n - t^n + \int_t^1 \varphi'(v) \left( (1-v+t)^n - (1-v)^n \right)\mathrm{d}v  \\
    & \hspace{60pt} - \left( \varphi(t) - \varphi(t)t^n - \int_0^t \varphi'(u) \left((1-t+u)^n - u^n \right) \mathrm{d}u \right) \\
    &= \varphi(t)t^n -t^n - \varphi(t)(1-t)^n + \int_t^1 \varphi'(v) \left( (1-v+t)^n - (1-v)^n \right)\mathrm{d}v \\
    & \hspace{60pt}+ \int_0^t \varphi'(u) \left((1-t+u)^n - u^n \right) \mathrm{d}u . 
\end{align*}
The following two inequalities can be deduced: 
\begin{align*}
    & R_n^{\BK}(t;\theta) \geq - \varphi(t)t^n - \int_0^t \varphi'(u) \left( (1 - t+ u)^n - u^n \right) \mathrm{d}u ,\\
    & R_n^{\BK}(t;\theta) \leq (1-t)^n (1 - \varphi(t)) + \int_t^1 \varphi'(v) \left( (1-v+t)^n - (1-v)^n \right)\mathrm{d}v .
\end{align*} 
We set 
\begin{align*}
    &m_n(t) = - \varphi(t)t^n - \int_0^t \varphi'(u) \left( (1 - t+ u)^n - u^n \right) \mathrm{d}u, \\
    &M_n(t) = (1-t)^n (1 - \varphi(t)) + \int_t^1 \varphi'(v) \left( (1-v+t)^n - (1-v)^n \right)\mathrm{d}v.
\end{align*}
Since $\varphi \in \Tilde{\Gamma}$ and using the notation introduced in \eqref{sup}, we obtain 
\begin{align*}
    |m_n(t)|&=   \varphi(t)t^n + \int_0^t \varphi'(u) \left( (1 - t+ u)^n - u^n \right) \mathrm{d}u\\
    &\leq \varphi(t)t^n + \| \varphi' \|_{\infty} \int_0^t \left( (1 - t+ u)^n - u^n \right) \mathrm{d}u\\
    &= \varphi(t)t^n + \| \varphi' \|_{\infty} \frac{1}{n+1} \left( 1 - t^{n+1} - (1-t)^{n+1} \right).
\end{align*}
We deduce, as $n \to \infty,$  
\begin{equation*}
    m_n(t) = O\left( \frac{1}{n} \right). 
\end{equation*} 
Similar arguments give 
\begin{align*}
    |M_n(t)| &= (1-t)^n (1 - \varphi(t)) + \int_t^1 \varphi'(v) \left( (1-v+t)^n - (1-v)^n \right)\mathrm{d}v \\
    &\leq (1-t)^n (1 - \varphi(t)) + \| \varphi' \|_{\infty} \int_t^1 \left( (1-v+t)^n - (1-v)^n \right)\mathrm{d}v \\
    &= (1-t)^n (1 - \varphi(t)) + \| \varphi' \|_{\infty} \frac{1}{n+1} \left[1  -t^{n+1} - (1-t)^{n+1} \right].
\end{align*}
Thus, as $n \to \infty,$ 
\begin{equation*}
    M_n(t) = O\left( \frac{1}{n} \right) .
\end{equation*}
Since $R_n^{\BK}(t;\theta)$ is bounded by $\max\{|m_n(t)|\;;\;|M_n(t)| \},$ we conclude, as $n \to \infty,$ \begin{equation*}
    R_n^{\BK}(t;\theta)  = O\left( \frac{1}{n} \right). 
\end{equation*}
We finally get, as $n \to \infty,$ 
\begin{equation*}
    \mathbb{E}\left[w_n^{\BK}(t;\theta)\right] - \varphi(t)  =  R_n^{\BK}(t;\theta)  = O\left( \frac{1}{n} \right).
\end{equation*}

\subsection{Proof of Proposition~\ref{prop:warping}(ii) for BK Algorithm}

In this subsection, we prove the second statement of Proposition~\ref{prop:warping} for the BK case:  we derive an explicit expression for the variance of $w_n^\BK(t;\theta)$ at every point $t \in (0,1)$ and obtain the asymptotic convergence, as $n \to \infty,$ of this term to $\varphi(t)(1 - \varphi(t)) / (1+\theta),$ with a bound on the associated convergence rate.


\subsubsection{Explicit expression of the variance term}

Let $t \in (0,1).$ Since the indexed family $\{[U_j^\ast, U_{j+1}^\ast]\;;\;j \in \llbracket0, n \rrbracket \}$ is a partition of $[0,1],$ the variance term can be written in the following manner: \[\Var[w_n^{\BK}(t;\theta)] = \Var[A_0(t)] + \sum_{j=1}^{n-1}\Var[A_j(t)] + \Var[A_n(t)].\]
$\bullet$ \textit{First term.} 

\noindent We express the variance term $\Var[A_0(t)]$ using conditional expectation: 
\begin{align*}
    \Var[A_0(t)] &= \mathbb{E}\left[ A_0(t)^2 \right] - \mathbb{E}\left[A_0(t) \right]^2 \\
    &= \mathbb{E}\left[ \mathbb{E}\left[ A_0(t)^2 \; | \; U^\ast \right] \right] - \mathbb{E}\left[ A_0(t) \right]^2.
\end{align*}
The definition of $A_0(t)$ leads to 
\begin{align*}
    \mathbb{E}\left[ \mathbb{E}\left[ A_0(t)^2 \;|\; U^\ast \right] \right] 
    &= \mathbb{E}\left[ \mathbb{E}\left[ \left( \frac{t}{U_1^\ast} \right)^2 \alpha_1^2 \mathds{1}_{[0,U_1^\ast)}(t) \; | \; U^\ast \right] \right] \\
    &= \mathbb{E}\left[ \left( \frac{t}{U_1^\ast} \right)^2 \mathbb{E}\left[ \alpha_1^2 \; | \; U^\ast \right] \mathds{1}_{[0, U_1^\ast)}(t) \right]. \end{align*} Using Equation~\eqref{mom_ord_2_alpha}, we deduce \begin{align*}
    &\mathbb{E}\left[ \mathbb{E}\left[ A_0(t)^2 \;|\; U^\ast \right] \right] \\
    &= \mathbb{E}\left[ \left( \frac{t}{U_1^\ast} \right)^2 \left( \frac{1}{1 + \theta} \varphi(U_1^\ast) + \left( 1 - \frac{1}{1 + \theta} \right) \varphi(U_1^\ast)^2 \right) \mathds{1}_{[0, U_1^\ast)}(t) \right]  \\
    &= \frac{1}{1 + \theta} \mathbb{E}\left[ \left(\frac{t}{U_1^\ast} \right)^2 \varphi(U_1^\ast) \mathds{1}_{[0, U_1^\ast)}(t) \right] + \left( 1 - \frac{1}{1 + \theta} \right) \mathbb{E}\left[ \left( \frac{t}{U_1^\ast} \right)^2 \varphi(U_1^\ast)^2 \mathds{1}_{[0, U_1^\ast)}(t) \right].
\end{align*} 
Consequently, 
\begin{align*}
        \Var[A_0(t)] &= \frac{1}{1 + \theta} \mathbb{E}\left[ \left(\frac{t}{U_1^\ast} \right)^2 \varphi(U_1^\ast) \mathds{1}_{[0, U_1^\ast)}(t) \right] + \left( 1 - \frac{1}{1 + \theta} \right) \mathbb{E}\left[ \left( \frac{t}{U_1^\ast} \right)^2 \varphi(U_1^\ast)^2 \mathds{1}_{[0, U_1^\ast)}(t) \right] \\
        & \quad - \mathbb{E}\left[ \frac{t}{U_1^\ast} \varphi(U_1^\ast) \mathds{1}_{[0, U_1^\ast)}(t) \right]^2.
\end{align*}
$\bullet$ \textit{Term} $\Var[A_n(t)].$ 

\noindent Similar arguments lead to the following sequence of equations: 
\begin{align*}
    \Var[A_n(t)]  &= \mathbb{E}\left[ A_n(t)^2 \right] - \mathbb{E}\left[A_n(t)\right]^2 \\
    &= \mathbb{E}\left[ \mathbb{E}\left[ A_n(t)^2 \; | \; U^\ast \right] \right] - \mathbb{E}\left[ A_n(t)\right]^2 \\
    &= \mathbb{E}\left[ \mathbb{E}\left[ \left( 1 - \alpha_{n+1} \frac{1-t}{1 - U_n^\ast} \right)^2 \mathds{1}_{[U_n^\ast, 1)}(t) \; | \; U^\ast\right] \right] - \mathbb{E}\left[ A_n(t) \right]^2 \\
    &= \mathbb{E}\left[ \mathds{1}_{[U_n^\ast, 1)}(t) \right] - 2 \mathbb{E}\left[ \frac{1-t}{1 - U_n^\ast} \mathbb{E}\left[ \alpha_{n+1} \; | \; U^\ast  \right] \mathds{1}_{[U_n^\ast,1)}(t) \right] \\
    & \quad \quad + \mathbb{E}\left[ \left( \frac{1-t}{1-U_n^\ast}\right)^2 \mathbb{E}\left[ \alpha_{n+1} ^2 \; | \; U^\ast \right] \mathds{1}_{[U_n^\ast,1)}(t)\right]  - \mathbb{E}\left[A_n(t) \right]^2 .
\end{align*}
Thus, 
\begin{align*}
    \Var[A_n(t)] &= \mathbb{E}\left[ \mathds{1}_{[U_n^\ast, 1)}(t) \right] -2 \mathbb{E}\left[ \frac{1-t}{1 - U_n^\ast} \left( 1 - \varphi(U_n^\ast) \right) \mathds{1}_{[U_n^\ast, 1)}(t) \right] \\
    & \quad \quad + \frac{1}{1 + \theta}\mathbb{E}\left[ \left(\frac{1-t}{1 - U_n^\ast} \right)^2 \left(1 - \varphi(U_n^\ast) \right) \mathds{1}_{[U_n^\ast, 1)}(t)   \right] \\
    &\quad \quad + \left( 1 - \frac{1}{1 + \theta} \right) \mathbb{E}\left[ \left( \frac{1-t}{1 - U_n^\ast} \right)^2 \left( 1 - \varphi(U_n^\ast) \right)^2 \mathds{1}_{[U_n^\ast, 1)}(t) \right] - \mathbb{E}\left[ A_n(t) \right]^2 .
\end{align*} 
As a result, 
\begin{align*}
    \Var[A_n(t)] &= \mathbb{E}\left[ \mathds{1}_{[U_n^\ast, 1)}(t) \right] -2 \mathbb{E}\left[ \frac{1-t}{1 - U_n^\ast} \left( 1 - \varphi(U_n^\ast) \right) \mathds{1}_{[U_n^\ast, 1)}(t) \right] \\
    & \quad \quad + \frac{1}{1 + \theta}\mathbb{E}\left[ \left(\frac{1-t}{1 - U_n^\ast} \right)^2 \left(1 - \varphi(U_n^\ast) \right) \mathds{1}_{[U_n^\ast, 1)}(t)   \right] \\
    &\quad \quad + \left( 1 - \frac{1}{1 + \theta} \right) \mathbb{E}\left[ \left( \frac{1-t}{1 - U_n^\ast} \right)^2 \left( 1 - \varphi(U_n^\ast) \right)^2 \mathds{1}_{[U_n^\ast, 1)}(t) \right] \\
    &\quad \quad  - \mathbb{E}\left[ \left(1 - (1 - \varphi(U_n^\ast)) \frac{1 - t}{1 - U_n^\ast} \right) \mathds{1}_{[U_n^\ast, 1)}(t) \right]^2.
\end{align*} 
Now, we express each term separately. Using the known densities, we have the following equations: 
\begin{align*}
    &\mathbb{E}\left[ \mathds{1}_{[U_n^\ast, 1)}(t) \right] = \int_0^1 \mathds{1}_{[v,1)}(t) n v^{n-1} \mathrm{d}v = \int_0^t n v^{n-1} \mathrm{d}v = t^n \;;\\
    &\mathbb{E}\left[ \frac{1-t}{1 - U_n^\ast} \left( 1 - \varphi(U_n^\ast) \right) \mathds{1}_{[U_n^\ast,1)}(t) \right] = \int_0^t \frac{1 - t}{1-v} \left( 1 - \varphi(v) \right)n v^{n-1} \mathrm{d}v\;;\\
    &\mathbb{E}\left[ \left( \frac{1 - t}{1 - U_n^\ast} \right)^2 (1 - \varphi(U_n^\ast)) \mathds{1}_{[U_n^\ast, 1)}(t)  \right] = \int_0^t \left( \frac{1 - t}{1 - v} \right)^2 (1 - \varphi(v)) n v^{n-1} \mathrm{d}v.
\end{align*}
Moreover, 
\begin{align*}
    \mathbb{E}\left[ \left( \frac{1 - t}{1  -U_n^\ast} \right)^2 (1- \varphi(U_n^\ast))^2 \mathds{1}_{[U_n^\ast, 1)}(t)  \right] &= \int_0^t \left( \frac{1 - t}{1 - v} \right)^2 (1-\varphi(v))^2 n v^{n-1} \mathrm{d}v .
\end{align*}
We deduce the following explicit expression: 
\begin{align*}
    &\Var[A_n(t)]\\
    &= t^n -2 \int_0^t \frac{1 - t}{1-v} \left( 1 - \varphi(v) \right)n v^{n-1} \mathrm{d}v   + \frac{1}{1 + \theta} \int_0^t \left( \frac{1 - t}{1 - v} \right)^2 (1 - \varphi(v)) n v^{n-1} \mathrm{d}v\\
    &\quad  + \left( 1 - \frac{1}{1 + \theta} \right) \int_0^t \left( \frac{1 - t}{1 - v} \right)^2 (1-\varphi(v))^2 n v^{n-1} \mathrm{d}v  \\
    &\quad - \left(  \int_0^t \left( 1 - (1 - \varphi(v)) \frac{1-t}{1-v} \right) n v^{n-1} \mathrm{d}v \right)^2.
\end{align*}
$\bullet$ \textit{Remaining terms.} \newline 
The term $ \sum_{j=1}^{n-1} \Var[A_j(t)]$ is expressed as follows: 
\[
    \sum_{j=1}^{n-1} \Var[A_j(t)] = \sum_{j=1}^{n-1} \left( \mathbb{E}\left[A_j(t) \right]^2 - \mathbb{E}\left[A_j(t) \right]^2 \right).
\]
Expanding the term $\mathbb{E}\left[A_j(t)^2 \right],$ we derive  
\begin{align*}
    &\mathbb{E}\left[A_j(t)^2 \right]\\
    &= \mathbb{E}\left[ \mathbb{E}\left[A_j(t)^2 \; | \; U^\ast \right] \right] \\
    &= \mathbb{E}\left[ \mathbb{E}\left[ \left( \Tilde{\alpha}_j + \alpha_{j+1} \frac{t - U_j^\ast}{U_{j+1}^\ast - U_j^\ast} \right)^2 \mathds{1}_{[U_j^\ast, U_{j+1}^\ast)}(t) \; | \; U^\ast \right] \right]\\
    &= \mathbb{E}\left[ \mathbb{E}\left[ \left( \Tilde{\alpha}_j ^2 + 2 \Tilde{\alpha}_j\alpha_{j+1} \frac{t - U_j^\ast}{U_{j+1}^\ast - U_j^\ast} + \alpha_{j+1}^2\left(\frac{t-U_j^\ast }{U_{j+1}^\ast - U_j^\ast }\right)^2 \right) \mathds{1}_{[U_j^\ast, U_{j+1}^\ast)}(t) \; | \; U^\ast \right] \right].
\end{align*} 
The results of Lemma~\ref{lem_exp_mom_phi_BK} then give  
\begin{align*}
    &\sum_{j=1}^{n-1 }\mathbb{E}\left[A_j(t)^2 \right] \\
    &= \frac{1}{1 + \theta} \sum_{j=1}^{n-1} \mathbb{E}\left[ \varphi(U_j^\ast) \mathds{1}_{[U_j^\ast, U_{j+1}^\ast)}(t) \right] + \left( 1 - \frac{1}{1 + \theta} \right) \sum_{j=1}^{n-1} \mathbb{E}\left[ \varphi(U_j^\ast)^2 \mathds{1}_{[U_j^\ast, U_{j+1}^\ast)}(t) \right] \\
    & \quad+2 \left( 1 - \frac{1}{1 + \theta} \right)\sum_{j=1}^{n-1}\mathbb{E}\left[ \frac{t - U_j^\ast}{U_{j+1}^\ast - U_j^\ast}\left( \varphi(U_{j+1}^\ast) - \varphi(U_j^\ast) \right) \varphi(U_j^\ast)  \mathds{1}_{[U_j^\ast, U_{j+1}^\ast)}(t) \right] \\
    &\quad  + \frac{1}{1 + \theta} \sum_{j=1}^{n-1} \mathbb{E}\left[ \left( \varphi(U_{j+1}^\ast) - \varphi(U_j^\ast) \right) \left( \frac{t - U_j^\ast}{U_{j+1}^\ast - U_j^\ast} \right)^2 \mathds{1}_{[U_j^\ast, U_{j+1}^\ast)}(t) \right] \\
    &\quad  + \left( 1 - \frac{1}{1 + \theta} \right) \sum_{j=1}^{n-1}\mathbb{E}\left[ \left( \varphi(U_{j+1}^\ast) - \varphi(U_j^\ast) \right)^2 \left( \frac{t - U_j^\ast}{U_{j+1}^\ast - U_j^\ast} \right)^2 \mathds{1}_{[U_j^\ast, U_{j+1}^\ast)}(t) \right].
\end{align*} 
Since the densities of the pairs $(U_j^\ast, U_{j+1}^\ast)$, and those of $U_1^\ast$ and $U_n^\ast$ are known, this formula is an explicit expression of $\Var[w_n^{\BK}(t;\theta)],$ that can be rewritten as in Equation~\eqref{var_explicite_BK} : 

\begin{align} \label{var_explicite_BK}
   \begin{split} &\mathcal V_n^{\BK}(t;\theta) \\
    &= \frac{1}{1 + \theta} \int_t^1 \left(\frac{t}{u}\right)^2 \varphi(u) n (1-u)^{n-1} \mathrm{d}u + \left( 1 - \frac{1}{1 + \theta} \right) \int_t^1 \left(\frac{t}{u}\right)^2 \varphi(u)^2 n (1-u)^{n-1} \mathrm{d}u\\
    &\quad - \left( \int_t^1 \frac{t}{u} \varphi(u) n (1-u)^{n-1} \mathrm{d}u\right)^2 \\
    & \quad + \frac{1}{1 + \theta} \sum_{j=1}^{n-1} \int_t^1 \int_0^t \varphi(u) \frac{n!}{(j-1)!(n-j-1)!} u^{j-1} (1-v)^{n-j-1} \mathrm{d}u \mathrm{d}v \\
    &\quad + \left(1 -\frac{1}{1+\theta} \right) \sum_{j=1}^{n-1} \int_t^1 \int_0^t \varphi(u)^2 \frac{n!}{(j-1)!(n-j-1)!} u^{j-1} (1-v)^{n-j-1} \mathrm{d}u \mathrm{d}v \\
    &\quad + 2 \left( 1 - \frac{1}{1 + \theta} \right) \sum_{j=1}^{n-1} \int_t^1 \int_0^t \frac{t - u}{v-u} \left(\varphi(v) - \varphi(u)\right) \varphi(u) n!\frac{u^{j-1} (1-v)^{n-j-1}}{(j-1)!(n-j-1)!}  \mathrm{d}u \mathrm{d}v \\
    & \quad + \frac{1}{1 + \theta} \sum_{j=1}^{n-1} \int_t^1 \int_0^t \left( \varphi(v) - \varphi(u) \right) \left( \frac{t- u}{v-u} \right)^2 \frac{n!}{(j-1)!(n-j-1)!} u^{j-1} (1-v)^{n-j-1} \mathrm{d}u \mathrm{d}v \\
    &\quad + \left( 1 - \frac{1}{1 + \theta} \right) \int_t^1 \int_0^t \left( \varphi(v) - \varphi(u) \right)^2 \left( \frac{t-u}{v-u} \right)^2 n!\frac{  u^{j-1} (1-v)^{n-j-1}}{(j-1)!(n-j-1)!} \mathrm{d}u \mathrm{d}v \\
    & \quad -  \sum_{j=1}^{n-1} \left( \int_t^1 \int_0^t \left(\varphi(u) + \frac{t - u}{v-u} (\varphi(v) - \varphi(u)) \right) n!\frac{ u^{j-1} (1-v)^{n-j-1}}{(j-1)!(n-j-1)!} \mathrm{d}u \mathrm{d}v \right)^2 \\
    &\quad + t^n -2 \int_0^t \frac{1 - t}{1-v} \left( 1 - \varphi(v) \right)n v^{n-1} \mathrm{d}v   + \frac{1}{1 + \theta} \int_0^t \left( \frac{1 - t}{1 - v} \right)^2 (1 - \varphi(v)) n v^{n-1} \mathrm{d}v\\
    &\quad  + \left( 1 - \frac{1}{1 + \theta} \right) \int_0^t \left( \frac{1 - t}{1 - v} \right)^2 (1-\varphi(v))^2 n v^{n-1} \mathrm{d}v  \\
    &\quad - \left(  \int_0^t \left( 1 - (1 - \varphi(v)) \frac{1-t}{1-v} \right) n v^{n-1} \mathrm{d}v \right)^2. \end{split}
\end{align}


\subsubsection{Asymptotic behaviour of the variance term}

Using the notations of the proof for the convergence result of the expectation term, we get: 
\[
    \Var[w_n^{\BK}(t;\theta)] = \mathbb{E}\left[A_0(t)^2\right] + \mathbb{E}\left[A_n(t)^2 \right] + \sum_{j=1}^{n-1} \mathbb{E}\left[A_j(t)^2\right] - \left( \varphi(t) + R_n^{\BK}(t;\theta) \right)^2.
\]
$\bullet$\textbf{ Highlighting the term } $\varphi(t) (1 - \varphi(t)) / ( 1 + \theta).$ 

We break down the term $\sum_{j=1}^{n-1} \mathbb{E}\left[A_j(t)^2\right] $ into two sums: 
\begin{align*}
    \sum_{j=1}^{n-1} \mathbb{E}\left[A_j(t)^2\right] &= \sum_{j=1}^{n-1} \mathbb{E}\left[ \mathbb{E}\left[ \Tilde{\alpha}_j ^2 |U^\ast \right] \mathds{1}_{[U_j^\ast, U_{j+1}^\ast)}(t) \right] \\
    & \quad  + \sum_{j=1}^{n-1} \mathbb{E}\left[ \mathbb{E}\left[ 2 \Tilde{\alpha}_j \alpha_{j+1} \frac{t - U_j^\ast}{U_{j+1}^\ast - U_j^\ast} + \alpha_{j+1}^2 \left( \frac{t - U_j^\ast}{U_{j+1}^\ast - U_j^\ast} \right)^2 |U^\ast \right] \mathds{1}_{[U_j^\ast, U_{j+1}^\ast)}(t) \right].
\end{align*}
Then, Lemmas~\ref{lem_exp_coeff_BK}-\ref{lem_exp_mom_phi_BK} give 
\begin{align*}
    &\sum_{j=1}^{n-1} \mathbb{E}\left[ \mathbb{E}\left[ \Tilde{\alpha}_j ^2 |U^\ast \right] \mathds{1}_{[U_j^\ast, U_{j+1}^\ast)}(t) \right]\\
    &= \frac{1}{1 + \theta} \sum_{j=1}^{n-1} \mathbb{E}\left[ \varphi(U_j^\ast) \mathds{1}_{[U_j^\ast, U_{j+1}^\ast)}(t) \right]
     + \left( 1 - \frac{1}{1 + \theta}\right) \sum_{j=1}^{n-1} \mathbb{E}\left[ \varphi(U_j^\ast)^2 \mathds{1}_{[U_j^\ast, U_{j+1}^\ast)}(t) \right] \\
     &= \frac{1 }{1 + \theta} \sum_{j=1}^{n-1} \mathbb{E}\left[ (\varphi(U_j^\ast) - \varphi(U_j^\ast)^2) \mathds{1}_{[U_j^\ast, U_{j+1}^\ast)}(t) \right] + \sum_{j=1}^{n-1} \mathbb{E}\left[ \varphi(U_j^\ast)^2 \mathds{1}_{[U_j^\ast, U_{j+1}^\ast)}(t) \right] \\
     &= \frac{1}{1 + \theta} \left[ \varphi(t) - \varphi(t) t^n - \int_0^t \varphi'(u) \left( (1-t+u)^n -u^n \right) \mathrm{d}u \right] \\
     &\quad - \frac{1}{1 + \theta}\left[ \varphi(t)^2 - \varphi(t)^2t^n  - 2 \int_0^t \varphi'(u) \varphi(u) \left( (1-t+u)^n - u^n \right) \mathrm{d}u\right] \\
     &\quad + \varphi(t)^2 - \varphi(t)^2t^n  - 2 \int_0^t \varphi'(u) \varphi(u) \left( (1-t+u)^n - u^n \right) \mathrm{d}u \\
     &= \frac{1}{1 + \theta} \varphi(t) - \frac{1}{1 + \theta} \varphi(t)^2 + \varphi(t)^2 - \frac{1}{1 + \theta} \varphi(t) (1- \varphi(t))t^n \\
     &\quad + \frac{1}{1 + \theta} \left[ 2 \int_0^t \varphi'(u) \varphi(u) \left( (1-t + u)^n - u^n \right) \mathrm{d}u - \int_0^t \varphi'(u)\left( (1-t + u)^n - u^n \right) \mathrm{d}u\right] \\
     &\quad - \varphi(t)^2 t^n -2\int_0^t \varphi'(u) \varphi(u) \left( (1-t + u)^n - u^n \right) \mathrm{d}u.
\end{align*}
We can thus rewrite the variance term as follows: 
\begin{align*}
    \Var[w_n^{\BK}(t;\theta)] = \frac{1}{1 + \theta} \varphi(t) - \frac{1}{1 + \theta} \varphi(t)^2 + \Tilde{R}_n^{\BK}(t;\theta),
\end{align*} where we set 
\begin{align*}
    \Tilde{R}_n^{\BK}(t;\theta) &=  \mathbb{E}\left[A_0(t)^2\right] + \mathbb{E}\left[A_n(t)^2 \right]\\
    &\quad  + \frac{1}{1 + \theta} \left[ - \varphi(t)(1 - \varphi(t))t^n + 2 \int_0^t \varphi'(u) \varphi(u) \left( (1-t + u)^n - u^n \right) \mathrm{d}u \right] \\
    &\quad - \frac{1}{1 + \theta} \int_0^t \varphi'(u)\left( (1-t + u)^n - u^n \right) \mathrm{d}u \\
     &\quad - \varphi(t)^2 t^n -2\int_0^t \varphi'(u) \varphi(u) \left( (1-t + u)^n - u^n \right) \mathrm{d}u\\
      & \quad  + \sum_{j=1}^{n-1} \mathbb{E}\left[ \mathbb{E}\left[ 2 \Tilde{\alpha}_j \alpha_{j+1} \frac{t - U_j^\ast}{U_{j+1}^\ast - U_j^\ast} + \alpha_{j+1}^2 \left( \frac{t - U_j^\ast}{U_{j+1}^\ast - U_j^\ast} \right)^2 |U^\ast \right] \mathds{1}_{[U_j^\ast, U_{j+1}^\ast)}(t) \right]\\
     &\quad - 2 \varphi(t)R_n^{\BK}(t;\theta)  - R_n^{\BK}(t;\theta)^2. 
\end{align*} 
$\bullet$ \textbf{Term} $\mathbb{E}\left[A_0(t)^2\right].$

We note that 
\begin{align*}
    \mathbb{E}\left[ A_0(t)^2 \right] &= \frac{1}{1 + \theta} \mathbb{E}\left[ \left( \frac{t}{U_1^\ast} \right)^2 \varphi(U_1^\ast) \mathds{1}_{[0, U_1^\ast)}(t) \right] + \left( 1 - \frac{1}{1 + \theta} \right) \mathbb{E}\left[ \left(\frac{t}{U_1^\ast}\right)^2 \varphi(U_1^\ast)^2 \mathds{1}_{[0, U_1^\ast)}(t) \right] \\
    &= \frac{1}{1 + \theta} \mathbb{E}\left[ \left(\frac{t}{U_1^\ast} \right)^2 \left( \varphi(U_1^\ast) - \varphi(U_1^\ast)^2\right) \mathds{1}_{[0, U_1^\ast)}(t) \right] +  \mathbb{E}\left[ \left(\frac{t}{U_1^\ast}\right)^2 \varphi(U_1^\ast)^2 \mathds{1}_{[0, U_1^\ast)}(t) \right]\\
    &\leq \frac{1}{1 + \theta} \mathbb{E}\left[ \varphi(U_1^\ast)(1 - \varphi(U_1^\ast)) \mathds{1}_{[0, U_1^\ast)}(t) \right] + \mathbb{E}\left[ \varphi(U_1^\ast)^2 \mathds{1}_{[0, U_1^\ast)}(t) \right].\end{align*} Using that the function $x \in [0,1] \mapsto x(1-x)$ is bounded by $1/4,$ we derive the following inequality: \begin{align*}
    \mathbb{E}\left[ A_0(t)^2 \right]
    &\leq \frac{1}{1+\theta} \frac{1}{4} \mathbb{E}\left[\mathds{1}_{[0, U_1^\ast)}(t) \right] + \mathbb{E}\left[\mathds{1}_{[0, U_1^\ast)}(t)\right] \\
    &= \left( \frac{1}{4} \frac{1}{1 +  \theta} + 1 \right) (1-t)^n.
\end{align*}
We have proved 
\begin{align*}
    0 \leq \mathbb{E}\left[A_0(t)^2 \right] \leq \left( \frac{1}{4} \frac{1}{1 +  \theta} + 1 \right) (1-t)^n.
\end{align*} 
As a result, as $n \to \infty,$ 
\begin{align*}
    \mathbb{E}\left[A_0(t)^2 \right] = O\left( \frac{1}{n} \right).
\end{align*}
$\bullet$ \textbf{Term} $\mathbb{E}\left[A_n(t)^2\right].$ 

The expression of  $\mathbb{E}\left[A_n(t)^2\right]$ is given by 
\begin{align*}
    \mathbb{E}\left[A_n(t)^2\right] &= \mathbb{E}\left[ \mathds{1}_{[U_n^\ast, 1)}(t) \right] - 2 \mathbb{E}\left[ \frac{1 - t}{1 - U_n^\ast} (1 - \varphi(U_n^\ast)) \mathds{1}_{[U_n^\ast, 1)}(t) \right]\\ 
    &\quad +\frac{1}{1 + \theta} \mathbb{E}\left[ \left( \frac{1-t}{1 - U_n^\ast} \right)^2 \left[ ( 1 - \varphi(U_n^\ast)) - (1-\varphi(U_n^\ast))^2 \right]  \mathds{1}_{[U_n^\ast, 1)}(t)  \right] \\
    & \quad + \mathbb{E}\left[ \left( \frac{1 -t}{1-U_n^\ast}\right)^2 (1-\varphi(U_n^\ast))^2  \mathds{1}_{[U_n^\ast, 1)}(t) \right]. 
\end{align*} 
By the argument mentioned above, we get 
\begin{align*} 
    \mathbb{E}\left[A_n(t)^2\right] 
    & \leq \left(1 +  \frac{1}{4}\frac{1}{1 + \theta} + 1 \right) \mathbb{E}\left[ \mathds{1}_{[U_n^\ast, 1)}(t) \right] \\
    &= \left(2 + \frac{1}{4} \frac{1}{1 + \theta} \right) t^n.
\end{align*}
We have derived 
\[0 \leq \mathbb{E}\left[ A_n(t)^2 \right] \leq \left(2 + \frac{1}{4} \frac{1}{1 + \theta} \right) t^n.\]
Thus, as $n \to \infty,$ we have 
\[\mathbb{E}\left[A_n(t)^2\right] = O\left( \frac{1}{n} \right).\]
Before continuing the proof, we introduce some notations. Let 
\begin{align*}
    T_{1,n}^{\BK}(t;\theta)&= \frac{1}{1 + \theta} \left[ - \varphi(t)(1 - \varphi(t))t^n + 2 \int_0^t \varphi'(u) \varphi(u) \left( (1-t + u)^n - u^n \right) \mathrm{d}u \right] \\
    & \quad \quad - \frac{1}{1 + \theta} \int_0^t \varphi'(u)\left( (1-t + u)^n - u^n \right) \mathrm{d}u,\\
     T_{2,n}^{\BK}(t;\theta) &= - \varphi(t)^2 t^n -2\int_0^t \varphi'(u) \varphi(u) \left( (1-t + u)^n - u^n \right) \mathrm{d}u, \\
     T_{3,n}^{\BK}(t;\theta) &= \sum_{j=1}^{n-1} \mathbb{E}\left[ \mathbb{E}\left[ 2 \Tilde{\alpha}_j \alpha_{j+1} \frac{t - U_j^\ast}{U_{j+1}^\ast - U_j^\ast} + \alpha_{j+1}^2 \left( \frac{t - U_j^\ast}{U_{j+1}^\ast - U_j^\ast} \right)^2 |U^\ast \right] \mathds{1}_{[U_j^\ast, U_{j+1}^\ast)}(t) \right],
\end{align*}
such that 
\begin{align*}
   \Tilde{R}_n^{\BK}(t;\theta) &=  \mathbb{E}\left[A_0(t)^2\right] + \mathbb{E}\left[A_n(t)^2 \right]\\
   &\quad  \quad + T_{1,n}^{\BK}(t;\theta) +  T_{2,n}^{\BK}(t;\theta) +   T_{3,n}^{\BK}(t;\theta) - 2 \varphi(t)R_n^{\BK}(t;\theta)  - R_n^{\BK}(t;\theta)^2.
\end{align*}
Using the bounds $\varphi (1 - \varphi) \leq \frac{1}{4}$ and $  \varphi \leq 1,$ we get 
\begin{align*}
    | T_{1,n}^{\BK}(t;\theta)| &\leq \frac{1}{1 + \theta} \left \lvert- \varphi(t)(1 - \varphi(t))t^n \right \rvert + \frac{2}{1 + \theta} \left\lvert \int_0^t \varphi'(u) \varphi(u) \left( (1-t + u)^n - u^n \right) \mathrm{d}u \right\rvert  \\
    &\quad+ \frac{1}{1 + \theta} \left\lvert \int_0^t \varphi'(u)\left( (1-t + u)^n - u^n \right) \mathrm{d}u \right\rvert \\
    &\leq \frac{1}{4} \frac{1}{1 + \theta} t^n + \frac{2}{1 + \theta} \| \varphi'\|_{\infty} \int_0^t \left( (1-t + u)^n - u^n \right) \mathrm{d}u \\
    &\quad +\frac{1}{1 + \theta} \| \varphi' \|_{\infty} \int_0^t \left( (1-t + u)^n - u^n \right) \mathrm{d}u \\
    &= \frac{1}{4} \frac{1}{1 + \theta} t^n + \frac{3}{1+\theta}\frac{1}{n+1} (1 - t^{n+1} - (1-t)^{n+1}).
\end{align*} 
We have obtained, as $n \to \infty,$  
\[ T_{1,n}^{\BK}(t;\theta) = O\left(\frac{1}{n} \right).\]
Similar arguments to those above lead to the following sequence of inequalities: 
\begin{align*}
    | T_{2,n}^{\BK}(t;\theta)| &\leq \left \lvert- \varphi(t)^2 t^n \right \rvert + \left \lvert -2\int_0^t \varphi'(u) \varphi(u) \left( (1-t + u)^n - u^n \right) \mathrm{d}u\right \rvert\\
    &\leq t^n + 2 \|\varphi' \|_{\infty} \int_0^t \left( (1-t + u)^n - u^n \right) \mathrm{d}u\\
    &= t^n + 2 \| \varphi' \|_{\infty} \frac{1}{n+1} (1 - t^{n+1} - (1-t)^{n+1}). 
\end{align*} 
Consequently, as $n \to \infty,$ 
\[ T_{2,n}^{\BK}(t;\theta) = O\left( \frac{1}{n} \right).\] 
We can notice that $ T_{3,n}^{\BK}(t;\theta)$ is a non-negative term and  
\begin{align*}
     T_{3,n}^{\BK}(t;\theta)&\leq \sum_{j=1}^{n-1} \mathbb{E}\left[ \mathbb{E}\left[ 2 \Tilde{\alpha}_j \alpha_{j+1} + \alpha_{j+1}^2 | U^\ast \right] \mathds{1}_{[U_j^\ast, U_{j+1}^\ast)}(t) \right] .
\end{align*} 
Using the results of Lemma~\ref{lem_exp_coeff_BK}, we derive 
\begin{align*}   
     &T_{3,n}^{\BK}(t;\theta) \\
    &= \sum_{j=1}^{n-1} \mathbb{E}\left[ 2\left(1- \frac{1}{1 + \theta} \right) ( \varphi(U_{j+1}^\ast) - \varphi(U_j^\ast))\varphi(U_j^\ast) \mathds{1}_{[U_j^\ast, U_{j+1}^\ast)}(t) \right] \\
    &\quad + \frac{1}{1 + \theta} \sum_{j=1}^{n-1} \mathbb{E}\left[ (\varphi(U_{j+1}^\ast) - \varphi(U_j^\ast)) \mathds{1}_{[U_j^\ast, U_{j+1}^\ast)}(t) \right] \\
    &\quad + \left( 1 - \frac{1}{1 + \theta} \right) \sum_{j=1}^{n-1} \mathbb{E}\left[ (\varphi(U_{j+1}^\ast) - \varphi(U_j^\ast))^2 \mathds{1}_{[U_j^\ast, U_{j+1}^\ast)}(t) \right] \\
    &= \left( 1- \frac{1}{1 + \theta} \right) \sum_{j=1}^{n-1} \mathbb{E}\left[ \left( \varphi(U_{j+1}^\ast) - \varphi(U_j^\ast) \right) \left( 2 \varphi(U_j^\ast) + \varphi(U_{j+1}^\ast) - \varphi(U_j^\ast) \right) \mathds{1}_{[U_j^\ast, U_{j+1}^\ast)}(t) \right] \\
    &\quad + \frac{1}{1 + \theta} \sum_{j=1}^{n-1} \mathbb{E}\left[ \left( \varphi(U_{j+1}^\ast) - \varphi(U_j^\ast) \right) \mathds{1}_{[U_j^\ast, U_{j+1}^\ast)}(t) \right] \\
    &=\left( 1 - \frac{1}{ 1 + \theta} \right) \sum_{j=1}^{n-1} \mathbb{E}\left[ \left(\varphi(U_{j+1}^\ast)^2 - \varphi(U_j^\ast)^2 \right) \mathds{1}_{[U_j^\ast, U_{j+1}^\ast)}(t) \right] \\
    &\quad + \frac{1}{1 + \theta} \sum_{j=1}^{n-1} \mathbb{E}\left[ \left( \varphi(U_{j+1}^\ast) - \varphi(U_j^\ast) \right) \mathds{1}_{[U_j^\ast, U_{j+1}^\ast)}(t) \right].
\end{align*}
The results obtained in Lemma~\ref{lem_exp_mom_phi_BK} give the two following expressions: 
\begin{align*}
    &\sum_{j=1}^{n-1} \mathbb{E}\left[\varphi(U_{j+1}^\ast)^2 \mathds{1}_{[U_j^\ast, U_{j+1}^\ast)}(t) \right] - \sum_{j=1}^{n-1} \mathbb{E}\left[\varphi(U_j^\ast)^2 \mathds{1}_{[U_j^\ast, U_{j+1}^\ast)}(t)\right]\\
    &= -t^n + \varphi(t)^2 (1 - (1-t)^n) + 2 \int_t^1 \varphi'(v) \varphi(v) \left( (1-v+t)^n - (1-v)^n \right) \mathrm{d}v  \\
    &\quad - \left[ \varphi(t)^2 - \varphi(t)^2t^n  - 2 \int_0^t \varphi'(u) \varphi(u) \left( (1-t+u)^n - u^n \right) \mathrm{d}u \right] \\
    &= -t^n + \varphi(t)^2t^n - \varphi(t)^2 (1-t)^n + 2 \int_t^1 \varphi'(v) \varphi(v) ((1-v+t)^n - (1-v)^n) \mathrm{d}v \\
    &\quad + 2 \int_0^t \varphi'(u) \varphi(u) \left( (1-t+u)^n - u^n \right) \mathrm{d}u,
\end{align*}
and 
\begin{align*}
    &\sum_{j=1}^{n-1} \mathbb{E}\left[ \left( \varphi(U_{j+1}^\ast) - \varphi(U_j^\ast) \right) \mathds{1}_{[U_j^\ast, U_{j+1}^\ast)}(t) \right] \\
    &=  -t^n + \varphi(t) - (1-t)^n\varphi(t) + \int_t^1 \varphi'(v) \left( (1-v+t)^n - (1-v)^n \right) \mathrm{d}v \\
    &\quad -\left[ \varphi(t) - \varphi(t) t^n - \int_0^t \varphi'(u) \left( (1-t+u)^n -u^n \right) \mathrm{d}u \right] \\
    &= -t^n -(1-t)^n\varphi(t) + \int_t^1 \varphi'(v)((1-v+t)^n -(1-v)^n)\mathrm{d}v + \varphi(t) t^n \\
    &\quad + \int_0^t \varphi'(u) ((1-t+u)^n - u^n ) \mathrm{d}u.
\end{align*}
Consequently, 
\begin{align*}
     T_{3,n}^{\BK}(t;\theta) &\leq  -t^n + \varphi(t)^2t^n - \varphi(t)^2 (1-t)^n + 2 \int_t^1 \varphi'(v) \varphi(v) ((1-v+t)^n - (1-v)^n) \mathrm{d}v \\
    &\quad + 2 \int_0^t \varphi'(u) \varphi(u) \left( (1-t+u)^n - u^n \right) \mathrm{d}u \\
    &\quad + \frac{1}{1+\theta} \left( -t^n -(1-t)^n\varphi(t) + \int_t^1 \varphi'(v)((1-v+t)^n -(1-v)^n)\mathrm{d}v + \varphi(t) t^n  \right) \\
    &\quad + \frac{1}{1+\theta}\int_0^t \varphi'(u) ((1-t+u)^n - u^n ) \mathrm{d}u  \\
    & \quad - \frac{1}{1 + \theta}  \left( -t^n + \varphi(t)^2t^n - \varphi(t)^2 (1-t)^n\right)\\
    & \quad - \frac{2}{1  + \theta} \int_t^1 \varphi'(v) \varphi(v) ((1-v+t)^n - (1-v)^n) \mathrm{d}v \\
    &\quad -  2 \frac{1}{1+\theta} \int_0^t \varphi'(u) \varphi(u) \left( (1-t+u)^n - u^n \right) \mathrm{d}u \\
    &= -t^n + \varphi(t)^2t^n - \varphi(t)^2 (1-t)^n + 2 \int_t^1 \varphi'(v) \varphi(v) ((1-v+t)^n - (1-v)^n) \mathrm{d}v \\
    &\quad + 2 \int_0^t \varphi'(u) \varphi(u) \left( (1-t+u)^n - u^n \right) \mathrm{d}u \\
    &\quad +\frac{1}{1 + \theta} \left[ -(1-t)^n\varphi(t)(1 - \varphi(t)) + \int_t^1 \varphi'(v)((1-v+t)^n -(1-v)^n)\mathrm{d}v \right] \\
    &\quad + \frac{1}{1 + \theta} \left[ \varphi(t)(1 - \varphi(t)) t^n + \int_0^t \varphi'(u) ((1-t+u)^n - u^n ) \mathrm{d}u  \right] \\
    &\quad - \frac{2}{1 + \theta} \int_t^1 \varphi'(v) \varphi(v) ((1-v+t)^n - (1-v)^n) \mathrm{d}v \\
    &\quad - \frac{2}{1+\theta} \int_0^t \varphi'(u) \varphi(u) \left( (1-t+u)^n - u^n \right) \mathrm{d}u.
\end{align*}
Noting that the two terms 
\begin{equation*}
    - \frac{2}{1 + \theta} \int_t^1 \varphi'(v) \varphi(v) ((1+t-v)^n - (1-v)^n) \mathrm{d}v
\end{equation*} 
and 
\begin{equation*}
    - \frac{2}{1+\theta} \int_0^t \varphi'(u) \varphi(u) \left( (1-t+u)^n - u^n \right) \mathrm{d}u
\end{equation*} 
are non positive, we derive the inequality 
\begin{align*}
     T_{3,n}^{\BK}(t;\theta) &\leq  \varphi(t)^2t^n  + 2 \int_t^1 \varphi'(v) \varphi(v) ((1-v+t)^n - (1-v)^n) \mathrm{d}v \\
    &\quad + 2 \int_0^t \varphi'(u) \varphi(u) \left( (1-t+u)^n - u^n \right) \mathrm{d}u \\
    &\quad +\frac{1}{1 + \theta} \int_t^1 \varphi'(v)((1-v+t)^n -(1-v)^n)\mathrm{d}v \\
    &\quad + \frac{1}{1 + \theta} \left[ \varphi(t)(1 - \varphi(t)) t^n + \int_0^t \varphi'(u) ((1-t+u)^n - u^n ) \mathrm{d}u  \right] .
\end{align*}
We deduce that 
\begin{align*}
         T_{3,n}^{\BK}(t;\theta) &\leq  t^n  + 4 \|\varphi' \|_{\infty}\frac{1}{n+1} (1-t^{n+1} -(1-t)^{n+1}) \\
    &\quad + \frac{1}{1 + \theta}\frac{1}{4}t^n + 2 \|\varphi'\|_{\infty} \frac{1}{1 + \theta} \frac{1}{n+1} (1 - t^{n+1} - (1-t)^{n+1} ).
\end{align*}
We can conclude, as $n \to  \infty,$  
\[ T_{3,n}^{\BK}(t;\theta) = O\left( \frac{1}{n} \right).\]
Since we proved, as $n \to \infty,$ that $R_n^{\BK}(t;\theta) = O\left( \frac{1}{n} \right),$ we also get  
\[-2 \varphi(t) R_n^{\BK}(t;\theta) = O\left( \frac{1}{n} \right).\]
In addition, we can notice that  
\begin{align*}
    m_n(t)^2 &= \left( \varphi(t) t^n + \int_0^t \varphi'(u) \left( (1-t+u)^n - u^n \right) \mathrm{d}u \right)^2 \\
    &= \varphi(t)^2 t^{2n} + 2\varphi(t) t^n \int_0^t \varphi'(u) \left( (1-t+u)^n - u^n \right) \mathrm{d}u \\
    &\quad + \left( \int_0^t \varphi'(u) \left( (1-t+u)^n - u^n \right) \mathrm{d}u \right)^2 \\
    &\leq \varphi(t)^2 t^{2n} + 2 \varphi(t) t^n \|\varphi'\|_{\infty} \frac{1}{n+1} \left( 1 - t^{n+1} - (1-t)^{n+1} \right) \\
    & \quad + \left(\frac{1}{n+1}\right)^2 \|\varphi'\|_{\infty}^2 \left( 1 - t^{n+1} - (1-t)^{n+1} \right)^2.
\end{align*} 
We can conclude that, as $n \to  \infty,$  
\begin{align*}
    m_n(t)^2 = O\left( \frac{1}{n} \right).
\end{align*}
We can also derive the following expression  
\begin{align*}
    M_n(t)^2 &= \left( (1-t)^n (1-\varphi(t)) + \int_t^1 \varphi'(v) \left( (1-v+t)^n - (1-v)^n \right) \mathrm{d}v\right)^2 \\
    &= (1-t)^{2n} (1-\varphi(t))^2 \\
    &\quad + 2 (1-t)^n (1-\varphi(t)) \int_t^1 \varphi'(v) \left( (1-v+t)^n - (1-v)^n \right) \mathrm{d}v \\
    & \quad + \left( \int_t^1 \varphi'(v) \left( (1-v+t)^n - (1-v)^n \right) \mathrm{d}v \right)^2.
\end{align*} 
Then  
\begin{align*}
    M_n(t)^2
    &\leq (1-t)^{2n} (1 - \varphi(t))^2 + 2(1-t)^n (1-\varphi(t)) \| \varphi'\|_{\infty} \frac{1}{n+1} \left( 1- t^{n+1} - (1-t)^{n+1} \right) \\
    &\quad + \| \varphi' \|_{\infty}^2 \frac{1}{(n+1)^2} \left(1 - (1-t)^{n+1} -t^{n+1} \right)^2.
\end{align*}
We can conclude, as $n \to  \infty,$ that
\begin{align*}
    M_n(t)^2 = O\left( \frac{1}{n} \right).
\end{align*}
We deduce that, as $n \to  \infty,$ 
\[R_n^{\BK}(t;\theta)^2 = O\left( \frac{1}{n} \right).\]
Finally, we get, as $n \to  \infty,$  
\[\Var\left[w_n^{\BK}(t;\theta)\right] - \frac{1}{1 + \theta} \varphi(t) (1 - \varphi(t)) = O\left( \frac{1}{n} \right).\]

\section{Proof of Proposition~\ref{prop:warping} in CDF case}\label{appendix::proof_CDF}

In this appendix, we aim at proving Proposition~\ref{prop:warping} for the CDF Algorithm. This appendix is divided into two subsections corresponding to the two statements of Proposition~\ref{prop:warping}. We recall that we denote by $w_n^\CDF(\cdot;\theta,p)$ the path obtained with the CDF Algorithm. 

\subsection{Proof of Proposition~\ref{prop:warping}(i) for CDF Algorithm} 

In this subsection, we prove the first statement of Proposition~\ref{prop:warping} for the CDF case:  we derive an explicit expression for the expectation of $w_n^\CDF(t;\theta,p)$ at every point $t \in (0,1)$ and obtain the asymptotic convergence, as $n \to \infty,$ of this term to $\varphi(t),$ with a bound on the associated convergence rate.

\subsubsection{Explicit expression of the expectation term}

Let $t \in (0,1).$ For all $j \in \llbracket 0, n \rrbracket,$ we set  
\begin{align*}
    B_j(t) = \left( \Tilde{\gamma}_{j,p} + \gamma_{j+1, p} \frac{t - \varphi^{-1}(U_j^\ast)}{\varphi^{-1}(U^\ast_{j+1}) - \varphi^{-1}(U_j^\ast) } \right) \mathds{1}_{[ \varphi^{-1}(U_j^\ast), \varphi^{-1}(U^\ast_{j+1}) )}(t).
\end{align*}   
Hence, the expectation of $w_n^{\CDF}(t;\theta,p)$ is expressed as 
\begin{align*}
    \mathbb{E}\left[ w_n^{\CDF}(t;\theta,p) \right]=  \sum_{j=0}^{n} \mathbb{E}\left[ B_j(t) \right] .
\end{align*}
Since the first and last terms of the partition $([\varphi^{-1}(U_j^\ast), \varphi^{-1}(U_{j+1}^\ast)] )_{j \in \llbracket0, n\rrbracket}$ only have one random endpoint, we perform the calculation of $\mathbb{E}\left[ B_0(t) \right]$ and $\mathbb{E}\left[ B_n(t) \right]$ separately. We have 
\begin{align*}
    \mathbb{E}\left[B_0(t)\right] &= \mathbb{E}\left[ \gamma_{1,p} \frac{t}{\varphi^{-1}(U_1^\ast)} \mathds{1}_{[0, \varphi^{-1}(U_1^\ast))}(t) \right] \\
    &= \mathbb{E}\left[ \gamma_{1,p}\right] \mathbb{E}\left[ \frac{t}{\varphi^{-1}(U_1^\ast)} \mathds{1}_{[0, \varphi^{-1}(U_1^\ast))}(t) \right] \text{(using Equation~\eqref{esp_gamma_1})}\\
    &= \frac{1-p}{n} \mathbb{E}\left[ \frac{t}{\varphi^{-1}(U_1^\ast)} \mathds{1}_{[0, \varphi^{-1}(U_1^\ast))}(t) \right].
\end{align*}
Similarly, 
\begin{align*}
    \mathbb{E}\left[B_n(t)\right] &= \mathbb{E}\left[ \left(1 - p \beta_n \frac{1-t}{1 - \varphi^{-1}(U_n^\ast)} \right) \mathds{1}_{[\varphi^{-1}(U_n^\ast),1)}(t) \right] \\
    &= \mathbb{E}\left[ \left( 1 - p \mathbb{E}\left[ \beta_n \right] \frac{1 - t}{1 - \varphi^{-1}(U_n^\ast)} \right) \mathds{1}_{[\varphi^{-1}(U_n^\ast), 1)}(t) \right] \quad  \text{(using Equation~\eqref{esp_beta})}\\
    &= \mathbb{E}\left[ \left(1 - \frac{p}{n} \frac{1 - t}{1 - \varphi^{-1}(U_n^\ast)} \right) \mathds{1}_{[\varphi^{-1}(U_n^\ast),1)}(t) \right].
\end{align*}
Since 
\begin{align*}
    &\sum_{j=1}^{n-1}\mathbb{E}\left[ B_j(t) \right] = \sum_{j=1}^{n-1} \mathbb{E}\left[ \mathbb{E}\left[ B_j(t) \; | \; U^\ast \right] \right],
\end{align*}
Equations~\eqref{esp_gamma_j} and \eqref{esp_tilde_gamma_j} lead to  
\begin{align*}
    \sum_{j=1}^{n-1} \mathbb{E}\left[ B_j(t) \right]
    &= \sum_{j=1}^{n-1} \mathbb{E}\left[ \mathbb{E}\left[\Tilde{\gamma}_{j,p} \right] \mathds{1}_{[\varphi^{-1}(U_j^\ast), \varphi^{-1}(U_{j+1}^\ast))}(t) \right] \\
    &\quad \quad \quad + \sum_{j=1}^{n-1} \mathbb{E}\left[ \mathbb{E}\left[\gamma_{j+1,p} \right] \frac{t- \varphi^{-1}(U_j^\ast)}{\varphi^{-1}(U_{j+1}^\ast) - \varphi^{-1}(U_j^\ast)} \mathds{1}_{[\varphi^{-1}(U_j^\ast), \varphi^{-1}(U_{j+1}^\ast))}(t) \right]\\
    &= \sum_{j=1}^{n-1} \frac{j-p}{n} \mathbb{E}\left[ \mathds{1}_{[\varphi^{-1}(U_j^\ast), \varphi^{-1}(U_{j+1}^\ast))}(t)  \right] \\
    &\quad \quad \quad + \sum_{j=1}^{n-1} \frac{1}{n} \mathbb{E}\left[ \frac{t- \varphi^{-1}(U_j^\ast)}{\varphi^{-1}(U_{j+1}^\ast) - \varphi^{-1}(U_j^\ast)} \mathds{1}_{[\varphi^{-1}(U_j^\ast), \varphi^{-1}(U_{j+1}^\ast))}(t) \right].
\end{align*}
We get the announced formula by combining the previous three equations. Using the expressions of the known densities of $U_0^\ast, U_n^\ast$ and of the pairs $(U_j^\ast, U_{j+1}^\ast),$ we can see that this formula is an explicit expression of the expectation of $w_n^{\CDF}(t;\theta,p).$ More precisely, using the Lemma~\ref{lem:esp_ind_CDF}, we get 
\begin{align} \label{esp_explicite_CDF}
\begin{split}
\mathcal E_n^{\CDF}(t;p) &= \frac{1-p}{n}\int_0^1 \frac{t}{ \varphi^{-1}(u)} \mathds{1}_{[0, \varphi^{-1}(u))}(t) n (1-u)^{n-1} \mathrm{d}u  \\
&\quad \quad + \frac{1}{n} \left[ n \varphi(t) - n \varphi(t)^n - p \left( 1 - \varphi(t)^n - (1 - \varphi(t))^n \right) \right] \\
&\quad +\frac{1}{n} \sum_{j=1}^{n-1} \int_0^{\varphi(t)} \int_{\varphi(t)}^1 \frac{t - \varphi^{-1}(u)}{\varphi^{-1}(v) - \varphi^{-1}(u)} n!\frac{u^{j-1}}{(j-1)!}\frac{(1-v)^{n-j-1}}{(n-j-1)!} \mathrm{d}v \mathrm{d}u \\
&\quad + \int_{0}^{\varphi(t)} \left( 1 - \frac{p}{n} \frac{1-t}{1 - \varphi^{-1}(v)}\right) nv^{n-1}\mathrm{d}v.
\end{split}
\end{align}

\subsubsection{Asymptotic behaviour of the expectation term}

\textbf{Highlighting the term} $\varphi(t).$ \newline We previously showed that 
\begin{align*}
    \sum_{j=1}^{n-1} \mathbb{E}\left[ B_j(t) \right] &= \frac{1}{n} \left[ n \varphi(t) - n \varphi(t)^n - p\left( 1 - \varphi(t)^n - (1-\varphi(t))^n \right) \right] \\
    & \quad + \frac{1}{n} \sum_{j=1}^{n-1} \mathbb{E}\left[ \frac{t- \varphi^{-1}(U_j^\ast)}{\varphi^{-1}(U_{j+1}^\ast) - \varphi^{-1}(U_j^\ast)} \mathds{1}_{[\varphi^{-1}(U_j^\ast), \varphi^{-1}(U_{j+1}^\ast))}(t) \right] \\
    &= \varphi(t) - \varphi(t)^n - \frac{p}{n} \left( 1 - \varphi(t)^n - (1 - \varphi(t))^n \right) \\
    & \quad + \frac{1}{n} \sum_{j=1}^{n-1} \mathbb{E}\left[ \frac{t- \varphi^{-1}(U_j^\ast)}{\varphi^{-1}(U_{j+1}^\ast) - \varphi^{-1}(U_j^\ast)} \mathds{1}_{[\varphi^{-1}(U_j^\ast), \varphi^{-1}(U_{j+1}^\ast))}(t) \right].
\end{align*}
Thus, we can rewrite $\mathbb{E}\left[w_n^{\CDF}(t;\theta,p) \right]$ as follows: 
\begin{align*}
     \mathbb{E}\left[w_n^{\CDF}(t;\theta,p) \right] = \varphi(t) + R_n^{\CDF}(t;\theta,p),
\end{align*} 
where $R_n^{\CDF}(t;\theta,p)$ is given by the following expression   
\begin{align*}
    R_n^{\CDF}(t;\theta,p) &= \frac{1-p}{n} \mathbb{E}\left[ \frac{t}{\varphi^{-1}(U_1^\ast)} \mathds{1}_{[0, \varphi^{-1}(U_1^\ast))}(t) \right]\\
    & - \varphi(t)^n - \frac{p}{n} \left( 1 - \varphi(t)^n - (1 - \varphi(t))^n \right)\\
    &+ \frac{1}{n} \sum_{j=1}^{n-1} \mathbb{E}\left[ \frac{t- \varphi^{-1}(U_j^\ast)}{\varphi^{-1}(U_{j+1}^\ast) - \varphi^{-1}(U_j^\ast)} \mathds{1}_{[ \varphi^{-1}(U_j^\ast), \varphi^{-1}(U_{j+1}^\ast))}(t) \right] \\
    & + \mathbb{E}\left[ \left(1 - \frac{p}{n} \frac{1 - t}{1 - \varphi^{-1}(U_n^\ast)} \right) \mathds{1}_{[\varphi^{-1}(U_n^\ast),1)}(t) \right].
\end{align*}
\textbf{Study of the term }  $((1-p)/n) \mathbb{E}[ (t/\varphi^{-1}(U_1^\ast)) \mathds{1}_{[0, \varphi^{-1}(U_1^\ast))}(t)].$ 

This is a positive term. Moreover, we have 
\begin{align*}
    \frac{1-p}{n} \mathbb{E}\left[ \frac{t}{\varphi^{-1}(U_1^\ast)} \mathds{1}_{[0, \varphi^{-1}(U_1^\ast))}(t) \right] &\leq \frac{1-p}{n} \mathbb{E}\left[ \mathds{1}_{[0, \varphi^{-1}(U_1^\ast))}(t) \right] \\
    &= \frac{1-p}{n} (1 - \varphi(t))^n.
\end{align*}
By definition of a warping function, we have $\varphi(t) \in (0,1)$ when $t$ belongs to $(0,1).$ Thus, as $n \to  \infty,$ 
\begin{equation*}
    \frac{1-p}{n} \mathbb{E}\left[ \frac{t}{\varphi^{-1}(U_1^\ast)} \mathds{1}_{[0, \varphi^{-1}(U_1^\ast))}(t) \right] = O\left( \frac{1}{n} \right) .
\end{equation*} 

\noindent \textbf{Study of the term: } $- \varphi(t)^n - (p/n) ( 1 - \varphi(t)^n - (1 - \varphi(t))^n).$

Using Equation~\eqref{esp_indicatrice}, we can remark that this is a non-positive term. Hence, we have 
\begin{align*}
    \lvert - \varphi(t)^n - \frac{p}{n} \left( 1 - \varphi(t)^n - (1 - \varphi(t))^n \right) \rvert = \varphi(t)^n + \frac{p}{n} (1 - \varphi(t)^n + (1-\varphi(t))^n).
\end{align*} 
Consequently, as $n \to \infty,$ we get 
\[
- \varphi(t)^n - \frac{p}{n} \left( 1 - \varphi(t)^n - (1 - \varphi(t))^n \right) = O\left( \frac{1}{n} \right).
\] 
\textbf{Study of the term } $ \sum_{j=1}^{n-1} \mathbb{E}[ (t- \varphi^{-1}(U_j^\ast))/(\varphi^{-1}(U_{j+1}^\ast) - \varphi^{-1}(U_j^\ast)) \mathds{1}_{[ \varphi^{-1}(U_j^\ast), \varphi^{-1}(U_{j+1}^\ast))}(t)] / n. $ 

We note that this is a positive term. Moreover, we get  
\begin{align*}
    \frac{1}{n} \sum_{j=1}^{n-1} \mathbb{E}\left[ \frac{t- \varphi^{-1}(U_j^\ast)}{\varphi^{-1}(U_{j+1}^\ast) - \varphi^{-1}(U_j^\ast)} \mathds{1}_{[ \varphi^{-1}(U_j^\ast), \varphi^{-1}(U_{j+1}^\ast))}(t) \right] &\leq \frac{1}{n} \sum_{j=1}^{n-1} \mathbb{E}\left[\mathds{1}_{[ \varphi^{-1}(U_j^\ast), \varphi^{-1}(U_{j+1}^\ast))}(t) \right] \\
    &= \frac{1}{n} \left( 1 - \varphi(t)^n - (1-\varphi(t))^n\right).
\end{align*}
Thus, as $n \to \infty,$ we have  
\begin{align*}
   \frac{1}{n} \sum_{j=1}^{n-1} \mathbb{E}\left[ \frac{t- \varphi^{-1}(U_j^\ast)}{\varphi^{-1}(U_{j+1}^\ast) - \varphi^{-1}(U_j^\ast)} \mathds{1}_{[ \varphi^{-1}(U_j^\ast), \varphi^{-1}(U_{j+1}^\ast))}(t) \right] = O\left( \frac{1}{n} \right).
\end{align*} 

\noindent\textbf{Study of the term } $\mathbb{E}\left[ \left(1 - \frac{p}{n} \frac{1 - t}{1 - \varphi^{-1}(U_n^\ast)} \right) \mathds{1}_{[\varphi^{-1}(U_n^\ast),1)}(t) \right].$ 

This is a positive term. Moreover, we have 
\begin{align*}
    \mathbb{E}\left[ \left(1 - \frac{p}{n} \frac{1 - t}{1 - \varphi^{-1}(U_n^\ast)} \right) \mathds{1}_{[\varphi^{-1}(U_n^\ast),1)}(t) \right] &\leq \mathbb{E}\left[\mathds{1}_{[\varphi^{-1}(U_n^\ast),1)}(t) \right]  \\
    &= \varphi(t)^n . 
\end{align*}
Hence, as $n \to \infty,$ 
\begin{align*}
    \mathbb{E}\left[ \left(1 - \frac{p}{n} \frac{1 - t}{1 - \varphi^{-1}(U_n^\ast)} \right) \mathds{1}_{[\varphi^{-1}(U_n^\ast),1)}(t) \right] = O\left( \frac{1}{n} \right).
\end{align*}
Finally, we have as $n \to \infty,$  
\begin{align*}
    \mathbb{E}\left[ w_n^{\CDF}(t;\theta,p)\right] = \varphi(t) + O\left( \frac{1}{n} \right).
\end{align*}

\subsection{Proof of Proposition~\ref{prop:warping}(ii) for CDF Algorithm}

In this subsection, we prove the second statement of Proposition~\ref{prop:warping} for the CDF case:  we derive an explicit expression for the variance of $w_n^\CDF(t;\theta,p)$ at every point $t \in (0,1)$ and get the asymptotic convergence, as $n \to \infty,$ of this term to $\varphi(t)(1 - \varphi(t)) / (1+\theta),$ with a bound on the associated convergence rate.

\subsubsection{Explicit expression of the variance term}

Let $t \in (0,1).$  Since $([\varphi^{-1}(U_j^\ast), \varphi^{-1}(U_{j+1}^\ast)])_{j \in \llbracket0, n\rrbracket}$ is a partition of $[0,1],$ we get  \[\Var[w_n^{\CDF}(t;\theta,p)] = \sum_{j=0}^{n} \Var[B_j(t)].\]
We have  \begin{align*}
    \Var[B_0(t)] &=\Var\left[ \gamma_{1,p} \frac{t}{\varphi^{-1}(U_1^\ast)} \mathds{1}_{[0, \varphi^{-1}(U_1^\ast))}(t) \right] \\
    &= \mathbb{E}\left[ \gamma_{1,p}^2 \left(\frac{t}{\varphi^{-1}(U_1^\ast)}\right)^2 \mathds{1}_{[0, \varphi^{-1}(U_1^\ast))}(t)\right] - \mathbb{E}\left[ \gamma_{1,p} \frac{t}{\varphi^{-1}(U_1^\ast)} \mathds{1}_{[0, \varphi^{-1}(U_1^\ast))}(t) \right]^2 .
\end{align*}
Using Equation~\eqref{mom_ord_2_gamma_1}, we deduce \begin{align*}
    &\mathbb{E}\left[ \gamma_{1,p}^2 \left(\frac{t}{\varphi^{-1}(U_1^\ast)}\right)^2 \mathds{1}_{[0, \varphi^{-1}(U_1^\ast))}(t)\right]\\
    &= \mathbb{E}\left[ \gamma_{1,p}^2 \right] \mathbb{E}\left[ \left(\frac{t}{\varphi^{-1}(U_1^\ast)} \right)^2 \mathds{1}_{[0, \varphi^{-1}(U_1^\ast))}(t) \right] \\
    &= (1-p)^2 \left( \frac{1}{n} \frac{1}{1 + \theta} + \frac{1}{n^2} \left(1 - \frac{1}{1+\theta} \right) \right) \mathbb{E}\left[ \frac{t^2}{\varphi^{-1}(U_1^\ast)^2} \mathds{1}_{[0, \varphi^{-1}(U_1^\ast))}(t) \right].
\end{align*}
From Equation~\eqref{esp_gamma_1}, we derive  \begin{align*}
    \mathbb{E}\left[ \gamma_{1,p} \frac{t}{\varphi^{-1}(U_1^\ast)} \mathds{1}_{[0, \varphi^{-1}(U_1^\ast))}(t)\right]^2 &= \mathbb{E}\left[ \gamma_{1,p} \right]^2 \mathbb{E}\left[ \frac{t}{\varphi^{-1}(U_1^\ast)} \mathds{1}_{[0, \varphi^{-1}(U_1^\ast))}(t)\right]^2  \\
    &= \left( \frac{1-p}{n} \right)^2 \mathbb{E}\left[ \frac{t}{\varphi^{-1}(U_1^\ast)} \mathds{1}_{[0, \varphi^{-1}(U_1^\ast))}(t)\right]^2.
\end{align*}
Thus, \begin{align*}
    \Var\left[B_0(t)\right] &= (1-p)^2 \frac{1}{n} \frac{1}{1 + \theta} \mathbb{E}\left[ \frac{t^2}{\varphi^{-1}(U_1^\ast)^2} \mathds{1}_{[0, \varphi^{-1}(U_1^\ast))}(t) \right] \\
    & \quad + \left( \frac{1-p}{n} \right)^2 \left(1 - \frac{1}{1 + \theta} \right) \mathbb{E}\left[ \frac{t^2}{\varphi^{-1}(U_1^\ast)^2} \mathds{1}_{[0, \varphi^{-1}(U_1^\ast))}(t) \right] \\
    & \quad - \left( \frac{1 - p}{n} \right)^2  \mathbb{E}\left[ \frac{t}{\varphi^{-1}(U_1^\ast)} \mathds{1}_{[0, \varphi^{-1}(U_1^\ast))}(t)\right]^2.
\end{align*}
We obtained the following relation: \begin{align*}
    & \mathbb{E}\left[ B_n(t)^2 \right] \\
    &=\mathbb{E}\left[ \left( 1 - p \beta_n \frac{1 - t}{1 - \varphi^{-1}(U_n^\ast)} \right)^2 \mathds{1}_{[\varphi^{-1}(U_n^\ast), 1)}(t) \right] \\
    &= \mathbb{E}\left[ \left( 1 - 2 p \beta_n \frac{1-t}{1 - \varphi^{-1}(U_n^\ast)} + p^2 \beta_n^2 \left( \frac{1-t}{1 - \varphi^{-1}(U_n^\ast)} \right)^2 \right) \mathds{1}_{[\varphi^{-1}(U_n^\ast), 1)}(t) \right] \\
    &= \mathbb{E}\left[ \mathds{1}_{[\varphi^{-1}(U_n^\ast), 1)}(t) \right] -2p \mathbb{E}\left[ \beta_n \right] \mathbb{E}\left[ \frac{1-t}{1 - \varphi^{-1}(U_n^\ast)} \mathds{1}_{[\varphi^{-1}(U_n^\ast), 1)}(t) \right] \\
    & \quad  + p^2 \mathbb{E}\left[ \beta_n^2 \right] \mathbb{E}\left[ \left( \frac{1 - t}{1 - \varphi^{-1}(U_n^\ast)} \right)^2 \mathds{1}_{[\varphi^{-1}(U_n^\ast), 1)}(t) \right] .
    \end{align*} Using Equations~\eqref{esp_beta} and \eqref{mom_ord_2_beta}, we derive the following result:  \begin{align*}
     \mathbb{E}\left[ B_n(t)^2 \right]      &=  \mathbb{E}\left[ \mathds{1}_{[\varphi^{-1}(U_n^\ast), 1)}(t) \right] -2p \frac{1}{n} \mathbb{E}\left[ \frac{1-t}{1 - \varphi^{-1}(U_n^\ast)} \mathds{1}_{[\varphi^{-1}(U_n^\ast), 1)}(t) \right]\\
    & \quad + p^2 \left( \frac{1}{n} \frac{1}{1 +\theta} + \frac{1}{n^2} \left( 1 - \frac{1}{1 + \theta} \right) \right)  \mathbb{E}\left[ \left( \frac{1 - t}{1 - \varphi^{-1}(U_n^\ast)} \right)^2 \mathds{1}_{[\varphi^{-1}(U_n^\ast), 1)}(t) \right].
\end{align*}
\textbf{Remaining terms.} 

We transform the expression of $\mathbb{E}\left[B_j(t)^2\right],$ for all $j\in \llbracket 1, n-1 \rrbracket,$ as follows: \begin{align*}
    \mathbb{E}\left[ B_j(t)^2 \right]  &= \mathbb{E}\left[  \Tilde{\gamma}_{j,p}^2 \mathds{1}_{[\varphi^{-1}(U_j^\ast), \varphi^{-1}(U_{j+1}^\ast))}(t) \right]  \\
    & \quad + 2 \mathbb{E}\left[  \gamma_{j+1,p} \Tilde{\gamma}_{j,p} \frac{t - \varphi^{-1}(U_j^\ast)}{\varphi^{-1}(U_{j+1}^\ast) - \varphi^{-1}(U_j^\ast)} \mathds{1}_{[\varphi^{-1}(U_j^\ast), \varphi^{-1}(U_{j+1}^\ast))}(t)  \right] \\
    & \quad + \mathbb{E}\left[ \gamma_{j+1,p}^2 \left( \frac{t - \varphi^{-1}(U_j^\ast)}{\varphi^{-1}(U_{j+1}^\ast) - \varphi^{-1}(U_j^\ast)}\right)^2 \mathds{1}_{[\varphi^{-1}(U_j^\ast), \varphi^{-1}(U_{j+1}^\ast))}(t) \right] \\
    &= \mathbb{E}\left[ \Tilde{\gamma}_{j,p}^2 \right] \mathbb{E}\left[ \mathds{1}_{[\varphi^{-1}(U_j^\ast), \varphi^{-1}(U_{j+1}^\ast))}(t) \right] \\
    & \quad + 2 \mathbb{E}\left[ \gamma_{j+1,p} \Tilde{\gamma}_{j,p} \right] \mathbb{E}\left[ \frac{t - \varphi^{-1}(U_j^\ast)}{\varphi^{-1}(U_{j+1}^\ast) - \varphi^{-1}(U_j^\ast)} \mathds{1}_{[\varphi^{-1}(U_j^\ast), \varphi^{-1}(U_{j+1}^\ast))}(t) \right] \\
    & \quad + \mathbb{E}\left[ \gamma_{j+1,p}^2 \right] \mathbb{E}\left[ \left( \frac{t - \varphi^{-1}(U_j^\ast)}{\varphi^{-1}(U_{j+1}^\ast) - \varphi^{-1}(U_j^\ast)}\right)^2 \mathds{1}_{[\varphi^{-1}(U_j^\ast), \varphi^{-1}(U_{j+1}^\ast))}(t) \right]. 
    \end{align*}
Using Equations~\eqref{mom_ord_2_gamma_j}, \eqref{mom_ord_2_tilde_gamma_j} and \eqref{esp_cross_gamma_tilde_gamma} we get\begin{align*}
    &\sum_{j=1}^{n-1} \Var[B_j(t)] \\
    &=  \frac{1}{n} \frac{1}{ 1 + \theta} \sum_{j=1}^{n-1} (j-2p+p^2) \mathbb{E}\left[ \mathds{1}_{[\varphi^{-1}(U_j^\ast), \varphi^{-1}(U_{j+1}^\ast))}(t)\right] \\
    & \quad + \frac{1}{n^2 } \left(1 - \frac{1}{1 + \theta} \right)\sum_{j=1}^{n-1} (j-p)^2 \mathbb{E}\left[ \mathds{1}_{[\varphi^{-1}(U_j^\ast), \varphi^{-1}(U_{j+1}^\ast))}(t)\right] \\
    & \quad + \frac{1}{n} \frac{1}{1 + \theta} (p-p^2)\sum_{j=1}^{n-1}  \mathbb{E}\left[ \frac{t - \varphi^{-1}(U_j^\ast)}{\varphi^{-1}(U_{j+1}^\ast) - \varphi^{-1}(U_j^\ast)} \mathds{1}_{[\varphi^{-1}(U_j^\ast), \varphi^{-1}(U_{j+1}^\ast))}(t) \right] \\
    & \quad + \frac{1}{n^2} \left(1 - \frac{1}{1 + \theta} \right) \sum_{j=1}^{n-1} (j-p) \mathbb{E}\left[ \frac{t - \varphi^{-1}(U_j^\ast)}{\varphi^{-1}(U_{j+1}^\ast) - \varphi^{-1}(U_j^\ast)} \mathds{1}_{[\varphi^{-1}(U_j^\ast), \varphi^{-1}(U_{j+1}^\ast))}(t) \right] \\
    &  \quad  + \frac{1}{n} \frac{1}{1 + \theta} (1 - 2p + 2p^2) \sum_{j=1}^{n-1}\mathbb{E}\left[ \left( \frac{t - \varphi^{-1}(U_j^\ast)}{\varphi^{-1}(U_{j+1}^\ast) - \varphi^{-1}(U_j^\ast)} \right)^2 \mathds{1}_{[\varphi^{-1}(U_j^\ast), \varphi^{-1}(U_{j+1}^\ast))}(t) \right]\\
    & \quad + \frac{1}{n^2} \left(1 - \frac{1}{1 + \theta} \right) \sum_{j=1}^{n-1}  \mathbb{E}\left[ \left( \frac{t - \varphi^{-1}(U_j^\ast)}{\varphi^{-1}(U_{j+1}^\ast) - \varphi^{-1}(U_j^\ast)} \right)^2 \mathds{1}_{[\varphi^{-1}(U_j^\ast), \varphi^{-1}(U_{j+1}^\ast))}(t) \right] \\
    & \quad  - \sum_{j=1}^{n-1} \mathbb{E}\left[ B_j(t) \right]^2.
\end{align*}
Finally, we get the following expression for $\Var[w_n^{\CDF}(t;\theta,p)]: $
\begin{align*}
    &\Var[w_n^{\CDF}(t;\theta,p)] \\
    &= (1-p)^2 \frac{1}{n} \frac{1}{1 + \theta} \mathbb{E}\left[ \frac{t^2}{\varphi^{-1}(U_1^\ast)^2} \mathds{1}_{[0, \varphi^{-1}(U_1^\ast))}(t) \right] \\
    & \quad + \left( \frac{1-p}{n} \right)^2 \left(1 - \frac{1}{1 + \theta} \right) \mathbb{E}\left[ \frac{t^2}{\varphi^{-1}(U_1^\ast)^2} \mathds{1}_{[0, \varphi^{-1}(U_1^\ast))}(t) \right] \\
    & \quad +  \frac{1}{n} \frac{1}{ 1 + \theta} \sum_{j=1}^{n-1} (j-2p+p^2) \mathbb{E}\left[ \mathds{1}_{[\varphi^{-1}(U_j^\ast), \varphi^{-1}(U_{j+1}^\ast))}(t)\right] \\
    & \quad + \frac{1}{n^2 } \left(1 - \frac{1}{1 + \theta} \right)\sum_{j=1}^{n-1} (j-p)^2 \mathbb{E}\left[ \mathds{1}_{[\varphi^{-1}(U_j^\ast), \varphi^{-1}(U_{j+1}^\ast))}(t)\right] \\
    & \quad + \frac{1}{n} \frac{1}{1 + \theta} (p-p^2)\sum_{j=1}^{n-1}  \mathbb{E}\left[ \frac{t - \varphi^{-1}(U_j^\ast)}{\varphi^{-1}(U_{j+1}^\ast) - \varphi^{-1}(U_j^\ast)} \mathds{1}_{[\varphi^{-1}(U_j^\ast), \varphi^{-1}(U_{j+1}^\ast))}(t) \right] \\
    & \quad + \frac{1}{n^2} \left(1 - \frac{1}{1 + \theta} \right) \sum_{j=1}^{n-1} (j-p) \mathbb{E}\left[ \frac{t - \varphi^{-1}(U_j^\ast)}{\varphi^{-1}(U_{j+1}^\ast) - \varphi^{-1}(U_j^\ast)} \mathds{1}_{[\varphi^{-1}(U_j^\ast), \varphi^{-1}(U_{j+1}^\ast))}(t) \right] \\
    &  \quad  + \frac{1}{n} \frac{1}{1 + \theta} (1 - 2p + 2p^2) \sum_{j=1}^{n-1}\mathbb{E}\left[ \left( \frac{t - \varphi^{-1}(U_j^\ast)}{\varphi^{-1}(U_{j+1}^\ast) - \varphi^{-1}(U_j^\ast)} \right)^2 \mathds{1}_{[\varphi^{-1}(U_j^\ast), \varphi^{-1}(U_{j+1}^\ast))}(t) \right]\\
    & \quad + \frac{1}{n^2} \left(1 - \frac{1}{1 + \theta} \right) \sum_{j=1}^{n-1}  \mathbb{E}\left[ \left( \frac{t - \varphi^{-1}(U_j^\ast)}{\varphi^{-1}(U_{j+1}^\ast) - \varphi^{-1}(U_j^\ast)} \right)^2 \mathds{1}_{[\varphi^{-1}(U_j^\ast), \varphi^{-1}(U_{j+1}^\ast))}(t) \right] \\
    & \quad + \mathbb{E}\left[ \mathds{1}_{[\varphi^{-1}(U_n^\ast), 1)}(t) \right] -2p \frac{1}{n} \mathbb{E}\left[ \frac{1-t}{1 - \varphi^{-1}(U_n^\ast)} \mathds{1}_{[\varphi^{-1}(U_n^\ast), 1)}(t) \right] \\
    & \quad + p^2 \left( \frac{1}{n} \frac{1}{1 +\theta} + \frac{1}{n^2} \left( 1 - \frac{1}{1 + \theta} \right) \right)  \mathbb{E}\left[ \left( \frac{1 - t}{1 - \varphi^{-1}(U_n^\ast)} \right)^2 \mathds{1}_{[\varphi^{-1}(U_n^\ast), 1)}(t) \right]\\
    & \quad - \sum_{j=0}^{n} \mathbb{E}\left[ B_j(t) \right]^2.
\end{align*} This is an explicit expression that can be written as follows: \begin{align} \label{var_explicite_CDF}
    \begin{split}&\mathcal V_n^{\CDF}(t;\theta,p) \\
    &= (1-p)^2 \frac{1}{n} \frac{1}{1 + \theta} \int_{\varphi(t)}^1 \frac{t^2}{\varphi^{-1}(u)^2} n(1-u)^{n-1}\mathrm{d}u \\
    &\quad + \left( \frac{1-p}{n} \right)^2 \left(1 - \frac{1}{1 + \theta} \right) \int_{\varphi(t)}^1 \frac{t^2}{\varphi^{-1}(u)^2} n(1-u)^{n-1}\mathrm{d}u \\
    &\quad +  \frac{1}{n} \frac{1}{ 1 + \theta}\left[ n \varphi(t) - n \varphi(t)^n - (2p-p^2)\left(1 - \varphi(t)^n - (1-\varphi(t))^n\right) \right]\\
    & + \frac{1}{n^2 } \left(1 - \frac{1}{1 + \theta} \right)\left[ n(n-1) \varphi(t)^2 - (n-p)^2 \varphi(t)^n + n (1-2p) \varphi(t) + p^2 - p^2 (1 - \varphi(t))^n \right] \\
    & \quad + \frac{1}{n} \frac{1}{1 + \theta} (p-p^2)\sum_{j=1}^{n-1}  \int_0^{\varphi(t)} \int_{\varphi(t)}^1 \frac{t - \varphi^{-1}(u)}{\varphi^{-1}(v) - \varphi^{-1}(u)} n! \frac{u^{j-1}}{(j-1)!} \frac{(1-v)^{n-j-1}}{(n-j-1)!} \mathrm{d}v \mathrm{d}u \\
    & \quad + \frac{1}{n^2} \left(1 - \frac{1}{1 + \theta} \right) \sum_{j=1}^{n-1} (j-p) \int_0^{\varphi(t)} \int_{\varphi(t)}^1 \frac{t - \varphi^{-1}(u)}{\varphi^{-1}(v) - \varphi^{-1}(u)} n! \frac{u^{j-1}}{(j-1)!} \frac{(1-v)^{n-j-1}}{(n-j-1)!} \mathrm{d}v \mathrm{d}u \\
    &  \quad  + \frac{1}{n} \frac{1}{1 + \theta} (1 - 2p + 2p^2) \sum_{j=1}^{n-1}\int_0^{\varphi(t)} \int_{\varphi(t)}^1 \left( \frac{t - \varphi^{-1}(u)}{\varphi^{-1}(v) - \varphi^{-1}(u)} \right)^2 \frac{ n! u^{j-1} (1-v)^{n-j-1} }{(j-1)! (n-j-1)!} \mathrm{d}v \mathrm{d}u\\
    & \quad + \frac{1}{n^2} \left(1 - \frac{1}{1 + \theta} \right) \sum_{j=1}^{n-1} \int_0^{\varphi(t)} \int_{\varphi(t)}^1 \left( \frac{t - \varphi^{-1}(u)}{\varphi^{-1}(v) - \varphi^{-1}(u)} \right)^2 n! \frac{u^{j-1}}{(j-1)!} \frac{(1-v)^{n-j-1}}{(n-j-1)!}  \mathrm{d}v \mathrm{d}u \\
    & \quad + \varphi(t)^n -2p \frac{1}{n} \int_0^{\varphi(t)} \frac{1-t}{1 - \varphi^{-1}(v)} nv^{n-1} \mathrm{d}v \\
    & \quad + p^2 \left( \frac{1}{n} \frac{1}{1 +\theta} + \frac{1}{n^2} \left( 1 - \frac{1}{1 + \theta} \right) \right)  \int_0^{\varphi(t)} \left( \frac{1 - t}{1 - \varphi^{-1}(v)} \right)^2 n v^{n-1} \mathrm{d}v\\
    & - \left( \frac{1-p}{n}\int_{\varphi(t)}^1 \frac{t}{ \varphi^{-1}(u)} \mathds{1}_{[0, \varphi^{-1}(u))}(t) n (1-u)^{n-1} \mathrm{d}u\right)^2 \\
    & - \left( \frac{1}{n} \left[ n \varphi(t) - n \varphi(t)^n - p \left( 1 - \varphi(t)^n - (1 - \varphi(t))^n \right) \right]\right. \\
    &\quad \left. +\frac{1}{n} \sum_{j=1}^{n-1} \int_0^{\varphi(t)} \int_{\varphi(t)}^1 \frac{t - \varphi^{-1}(u)}{\varphi^{-1}(v) - \varphi^{-1}(u)} n!\frac{u^{j-1}}{(j-1)!}\frac{(1-v)^{n-j-1}}{(n-j-1)!} \mathrm{d}v \mathrm{d}u \right)^2  \\
    & - \left( \int_{0}^{\varphi(t)} \left( 1 - \frac{p}{n} \frac{1-t}{1 - \varphi^{-1}(v)}\right) nv^{n-1}\mathrm{d}v \right)^2.\end{split}
\end{align}

\subsubsection{Asymptotic behaviour of the variance term}

\textbf{Highlighting the term } $ \varphi(t) (1 - \varphi(t))/(1 + \theta).$ 

We have  
\begin{align*}
    \mathbb{E}\left[w_n^{\CDF}(t;\theta,p) \right]^2 &= (\varphi(t) + R_n^{\CDF}(t;\theta,p))^2 = \varphi(t)^2 - 2 R_n^{\CDF}(t;\theta,p) \varphi(t) + R_n^{\CDF}(t;\theta,p)^2.
\end{align*} 
Using the previous calculations, we can remark that  
\begin{align*}
    &\sum_{j=1}^{n-1} \mathbb{E}\left[ B_j(t)^2 \right] \\
    &= \sum_{j=1}^{n-1}\mathbb{E}\left[ \Tilde{\gamma}_{j,p}^2 \right] \mathbb{E}\left[ \mathds{1}_{[\varphi^{-1}(U_j^\ast), \varphi^{-1}(U_{j+1}^\ast))}(t) \right] \\
    & \quad + 2 \sum_{j=1}^{n-1} \mathbb{E}\left[ \gamma_{j+1,p} \Tilde{\gamma}_{j,p} \right] \mathbb{E}\left[ \frac{t - \varphi^{-1}(U_j^\ast)}{\varphi^{-1}(U_{j+1}^\ast) - \varphi^{-1}(U_j^\ast)} \mathds{1}_{[\varphi^{-1}(U_j^\ast), \varphi^{-1}(U_{j+1}^\ast))}(t) \right] \\
    & \quad + \sum_{j=1}^{n-1} \mathbb{E}\left[ \gamma_{j+1,p}^2 \right] \mathbb{E}\left[ \left( \frac{t - \varphi^{-1}(U_j^\ast)}{\varphi^{-1}(U_{j+1}^\ast) - \varphi^{-1}(U_j^\ast)}\right)^2 \mathds{1}_{[\varphi^{-1}(U_j^\ast), \varphi^{-1}(U_{j+1}^\ast))}(t) \right] \\
    &=  \frac{1}{n} \frac{1}{ 1 + \theta} \sum_{j=1}^{n-1} (j-2p+p^2) \mathbb{E}\left[ \mathds{1}_{[\varphi^{-1}(U_j^\ast), \varphi^{-1}(U_{j+1}^\ast))}(t)\right] \\
    & \quad + \frac{1}{n^2 } \left(1 - \frac{1}{1 + \theta} \right)\sum_{j=1}^{n-1} (j-p)^2 \mathbb{E}\left[ \mathds{1}_{[\varphi^{-1}(U_j^\ast), \varphi^{-1}(U_{j+1}^\ast))}(t)\right] \\
    &\quad + 2 \sum_{j=1}^{n-1} \mathbb{E}\left[ \gamma_{j+1,p} \Tilde{\gamma}_{j,p} \right] \mathbb{E}\left[ \frac{t - \varphi^{-1}(U_j^\ast)}{\varphi^{-1}(U_{j+1}^\ast) - \varphi^{-1}(U_j^\ast)} \mathds{1}_{[\varphi^{-1}(U_j^\ast), \varphi^{-1}(U_{j+1}^\ast))}(t) \right] \\
    & \quad + \sum_{j=1}^{n-1} \mathbb{E}\left[ \gamma_{j+1,p}^2 \right] \mathbb{E}\left[ \left( \frac{t - \varphi^{-1}(U_j^\ast)}{\varphi^{-1}(U_{j+1}^\ast) - \varphi^{-1}(U_j^\ast)}\right)^2 \mathds{1}_{[\varphi^{-1}(U_j^\ast), \varphi^{-1}(U_{j+1}^\ast))}(t) \right]. \end{align*} 
    Then,  \begin{align*}
    &\sum_{j=1}^{n-1} \mathbb{E}\left[ B_j(t)^2 \right] \\
    &= \frac{1}{n} \frac{1}{ 1 + \theta}[ n \varphi(t) - n \varphi(t)^n - (2p-p^2)\left(1 - \varphi(t)^n - (1-\varphi(t))^n\right) ]\\
    &  + \frac{1}{n^2 } \left(1 - \frac{1}{1 + \theta} \right)[ n(n-1) \varphi(t)^2 - (n-p)^2 \varphi(t)^n + n (1-2p) \varphi(t) + p^2 - p^2 (1 - \varphi(t))^n ] \\
    &\quad + 2 \sum_{j=1}^{n-1} \mathbb{E}\left[ \gamma_{j+1,p} \Tilde{\gamma}_{j,p} \right] \mathbb{E}\left[ \frac{t - \varphi^{-1}(U_j^\ast)}{\varphi^{-1}(U_{j+1}^\ast) - \varphi^{-1}(U_j^\ast)} \mathds{1}_{[\varphi^{-1}(U_j^\ast), \varphi^{-1}(U_{j+1}^\ast))}(t) \right] \\
    & \quad + \sum_{j=1}^{n-1} \mathbb{E}\left[ \gamma_{j+1,p}^2 \right] \mathbb{E}\left[ \left( \frac{t - \varphi^{-1}(U_j^\ast)}{\varphi^{-1}(U_{j+1}^\ast) - \varphi^{-1}(U_j^\ast)}\right)^2 \mathds{1}_{[\varphi^{-1}(U_j^\ast), \varphi^{-1}(U_{j+1}^\ast))}(t) \right].
\end{align*}
Thus, \begin{align*}
    \Var[w_n^{\CDF}(t;\theta,p)] &= \frac{1}{1+\theta}\varphi(t) - \frac{1}{1+\theta} \varphi(t)^2  + \Tilde{R}_{n,\theta,p}^{\text{CDF}}(t),
\end{align*} where \begin{align*}
    \Tilde{R}_{n,\theta,p}^{\text{CDF}}(t) &= \frac{1}{1 + \theta} \left[ - \varphi(t)^n - \frac{2p-p^2}{n} \left(1 - \varphi(t)^n - (1 - \varphi(t))^n\right) \right] \\
    &\quad +\left(1- \frac{1}{1 + \theta} \right) \left[ - \frac{1}{n} \varphi(t)^2 + \frac{1 - 2p}{n} \varphi(t) + \frac{p^2}{n^2} \left( 1 - \varphi(t)^n - (1-\varphi(t))^n\right) \right]\\
    &\quad + 2 \sum_{j=1}^{n-1} \mathbb{E}\left[ \gamma_{j+1,p} \Tilde{\gamma}_{j,p} \right] \mathbb{E}\left[ \frac{t - \varphi^{-1}(U_j^\ast)}{\varphi^{-1}(U_{j+1}^\ast) - \varphi^{-1}(U_j^\ast)} \mathds{1}_{[\varphi^{-1}(U_j^\ast), \varphi^{-1}(U_{j+1}^\ast))}(t) \right] \\
    & \quad + \sum_{j=1}^{n-1} \mathbb{E}\left[ \gamma_{j+1,p}^2 \right] \mathbb{E}\left[ \left( \frac{t - \varphi^{-1}(U_j^\ast)}{\varphi^{-1}(U_{j+1}^\ast) - \varphi^{-1}(U_j^\ast)}\right)^2 \mathds{1}_{[\varphi^{-1}(U_j^\ast), \varphi^{-1}(U_{j+1}^\ast))}(t) \right] \\
    &\quad + \mathbb{E}\left[B_0(t)^2 \right] + \mathbb{E}\left[B_n(t)^2\right] - \left(- 2 R_n^{\CDF}(t;\theta,p) \varphi(t) + R_n^{\CDF}(t;\theta,p)^2 \right).\\ 
\end{align*} 
\textbf{Term} $\mathbb{E}\left[B_0(t)^2 \right].$

Using the previous calculations, we get  
\begin{align*}
    \mathbb{E}\left[B_0(t)^2\right] &= (1-p)^2 \left( \frac{1}{n} \frac{1}{1 + \theta} + \frac{1}{n^2} \left(1 - \frac{1}{1+\theta} \right) \right) \mathbb{E}\left[ \frac{t^2}{\varphi^{-1}(U_1^\ast)^2} \mathds{1}_{[0, \varphi^{-1}(U_1^\ast))}(t) \right] \\
    &\leq (1-p)^2 \left( \frac{1}{n} \frac{1}{1 + \theta} + \frac{1}{n^2} \left(1 - \frac{1}{1+\theta} \right) \right) \mathbb{E}\left[ \mathds{1}_{[0, \varphi^{-1}(U_1^\ast))}(t) \right] \\
    &= (1-p)^2 \left( \frac{1}{n} \frac{1}{1 + \theta} + \frac{1}{n^2} \left(1 - \frac{1}{1+\theta} \right) \right) (1-\varphi(t))^n. 
\end{align*}
Thus, as $n \to \infty,$  
\[
\mathbb{E}\left[B_0(t)^2\right] = O\left( \frac{1}{n} \right).
\]
\textbf{Term} $\mathbb{E}\left[B_n(t)^2\right].$ 

We have 
\begin{align*}
    \mathbb{E}\left[B_n(t)^2\right] &=  \mathbb{E}\left[ \left( 1 - p \beta_n \frac{1 - t}{1 - \varphi^{-1}(U_n^\ast)} \right)^2 \mathds{1}_{[\varphi^{-1}(U_n^\ast), 1)}(t) \right] \\
    &\leq \mathbb{E}\left[ \mathds{1}_{[\varphi^{-1}(U_n^\ast), 1)}(t) \right] \\
    &= \varphi(t)^n.
\end{align*}
We deduce that, as $n \to  \infty,$  
\[
\mathbb{E}\left[ B_n(t)^2 \right]  = O\left( \frac{1}{n} \right).
\]
\textbf{Term} $\displaystyle \frac{1}{1 + \theta} \left[ - \varphi(t)^n - \frac{2p-p^2}{n} \left(1 - \varphi(t)^n - (1 - \varphi(t))^n\right) \right].$

We have \begin{align*}
    &\left|\frac{1}{1 + \theta} \left[ - \varphi(t)^n - \frac{2p-p^2}{n} \left(1 - \varphi(t)^n - (1 - \varphi(t))^n\right) \right]\right|\\
    &\leq \frac{1}{1 + \theta} \varphi(t)^n + \frac{2p-p^2}{n} \frac{1}{1 + \theta} \left( 1 - \varphi(t)^n - (1 - \varphi(t))^n \right).
\end{align*}
We deduce that, as $n \to \infty,$ 
\[
\frac{1}{1 + \theta} \left[ - \varphi(t)^n - \frac{2p-p^2}{n} \left(1 - \varphi(t)^n - (1 - \varphi(t))^n\right) \right] = O\left( \frac{1}{n} \right).
\] 
\noindent\textbf{Term} $\displaystyle\left(1- \frac{1}{1 + \theta} \right) \left[ - \frac{1}{n} \varphi(t)^2 + \frac{1 - 2p}{n} \varphi(t) + \frac{p^2}{n^2} \left( 1 - \varphi(t)^n - (1-\varphi(t))^n\right) \right].$

We remark that, as $n \to \infty,$ 
\begin{equation*}
    \left(1- \frac{1}{1 + \theta} \right) \left[ - \frac{1}{n} \varphi(t)^2 + \frac{1 - 2p}{n} \varphi(t) + \frac{p^2}{n^2} \left( 1 - \varphi(t)^n - (1-\varphi(t))^n\right) \right] = O\left( \frac{1}{n} \right).
\end{equation*} 
We have 
\begin{align*}
    &2 \sum_{j=1}^{n-1} \mathbb{E}\left[ \gamma_{j+1,p} \Tilde{\gamma}_{j,p} \right] \mathbb{E}\left[ \frac{t - \varphi^{-1}(U_j^\ast)}{\varphi^{-1}(U_{j+1}^\ast) - \varphi^{-1}(U_j^\ast)} \mathds{1}_{[\varphi^{-1}(U_j^\ast), \varphi^{-1}(U_{j+1}^\ast))}(t) \right] \\
    &= 2 \sum_{j=1}^{n-1} \left( \frac{1}{n} \frac{1}{1 + \theta} p(1-p) + \frac{1}{n^2} \left(1 - \frac{1}{1 + \theta} \right)(j-p) \right) \mathbb{E}\left[ \mathds{1}_{[\varphi^{-1}(U_j^\ast), \varphi^{-1}(U_{j+1}^\ast))}(t) \right].
\end{align*}  
Moreover  
\begin{align*}
&\sum_{j=1}^{n-1} \mathbb{E}\left[ \gamma_{j+1,p}^2 \right] \mathbb{E}\left[ \left( \frac{t - \varphi^{-1}(U_j^\ast)}{\varphi^{-1}(U_{j+1}^\ast) - \varphi^{-1}(U_j^\ast)}\right)^2 \mathds{1}_{[\varphi^{-1}(U_j^\ast), \varphi^{-1}(U_{j+1}^\ast))}(t) \right] \\
&= \sum_{j=1}^{n-1} \left( \frac{1}{n} \frac{1}{1 + \theta} (1-2p+2p^2) + \frac{1}{n^2} \left( 1 - \frac{1}{ 1 + \theta} \right) \right) \mathbb{E}\left[\mathds{1}_{[\varphi^{-1}(U_j^\ast), \varphi^{-1}(U_{j+1}^\ast))}(t) \right] .
\end{align*}
Let $S_n^{\CDF}(t;\theta,p)$ be the quantity 
\begin{align*}
    &2 \sum_{j=1}^{n-1} \mathbb{E}\left[ \gamma_{j+1,p} \Tilde{\gamma}_{j,p} \right] \mathbb{E}\left[ \frac{t - \varphi^{-1}(U_j^\ast)}{\varphi^{-1}(U_{j+1}^\ast) - \varphi^{-1}(U_j^\ast)} \mathds{1}_{[\varphi^{-1}(U_j^\ast), \varphi^{-1}(U_{j+1}^\ast))}(t) \right]\\
    &\quad +\sum_{j=1}^{n-1} \mathbb{E}\left[ \gamma_{j+1,p}^2 \right] \mathbb{E}\left[ \left( \frac{t - \varphi^{-1}(U_j^\ast)}{\varphi^{-1}(U_{j+1}^\ast) - \varphi^{-1}(U_j^\ast)}\right)^2 \mathds{1}_{[\varphi^{-1}(U_j^\ast), \varphi^{-1}(U_{j+1}^\ast))}(t) \right].
\end{align*}
Then \begin{align*}
    S_n^{\CDF}(t;\theta,p) &= \frac{1}{n} \frac{1}{1+\theta} \sum_{j=1}^{n-1} \mathbb{E}\left[ \mathds{1}_{[\varphi^{-1}(U_j^\ast), \varphi^{-1}(U_{j+1}^\ast))}(t) \right] \\
    &\quad + \frac{1}{n^2} \left( 1 - \frac{1}{1 + \theta} \right) \sum_{j=1}^{n-1} (j -(p-1)) \mathbb{E}\left[ \mathds{1}_{[\varphi^{-1}(U_j^\ast), \varphi^{-1}(U_{j+1}^\ast))}(t) \right]. 
\end{align*} 
Using Equation~\eqref{esp_indicatrice} and Equation~\eqref{j-k_esp}, we get  
\begin{align*}
     S_n^{\CDF}(t;\theta,p) &= \frac{1}{n} \frac{1}{1 + \theta} \left( 1 - \varphi(t)^n - (1 - \varphi(t))^n \right) \\
     &\quad + \frac{1}{n^2} \left(1 - \frac{1}{1 + \theta} \right) \left[ n \varphi(t) - n \varphi(t)^n - (p-1)\left(1 - \varphi(t)^n - (1-\varphi(t))^n\right) \right] \\
     &= \frac{1}{n} \frac{1}{1 + \theta} \left[ 1 - \varphi(t)^n - (1-\varphi(t))^n \right] \\
     &\quad + \frac{1}{n} \left(1 - \frac{1}{1 + \theta} \right) \left[ \varphi(t) - \varphi(t)^n + \frac{1-p}{n} \left( 1 - \varphi(t)^n - (1-\varphi(t))^n \right) \right].
\end{align*}
Consequently, as $n \to \infty,$  
\[
     S_n^{\CDF}(t;\theta,p) = O\left( \frac{1}{n} \right).
\]
\textbf{Term} $ R_n^{\CDF}(t;\theta,p)^2 - 2 R_n^{\CDF}(t;\theta,p)\varphi(t).$ 

The term $R_n^{\CDF}(t;\theta,p)^2$ can be decomposed as the sum of all the squares of the components of $R_n^{\CDF}(t;\theta,p)$ and all the products of two components of $R_n^{\CDF}(t;\theta,p).$
Squaring the inequalities established to prove the convergence of the term $\mathbb{E}[w_n^{\CDF}(t;\theta,p)]$ to $\varphi(t),$ as $n \to \infty,$ we get that each squared term of $R_n^{\CDF}(t;\theta,p)$ is asymptotically $O(1/n).$ Using those inequalities, we also get that each double product of two terms of $R_n^{\CDF}(t;\theta,p)$ behaves asymptotically as $O(1/n).$  Consequently, we conclude, as $n \to \infty,$  
\[ 
R_n^{\CDF}(t;\theta,p)^2 = O\left( \frac{1}{n} \right). 
\] 
Since, $R_n^{\CDF}(t;\theta,p) = O( 1/n),$ as $n \to \infty,$ we also have  
\[ R_n^{\CDF}(t;\theta,p) \varphi(t) = O\left( \frac{1}{n} \right). \] 
Finally, we have the following result: \[ \Var[w_n^{\CDF}(t;\theta,p)] - \frac{1}{1 + \theta} \varphi(t) (1 - \varphi(t)) = O\left( \frac{1}{n} \right) .\]

\section{Reminder and additional results for MZW Algorithm} \label{app:MZW}

\subsection{Inner product space structure}\label{rappel_MZW}

We begin this appendix by recalling the main sets involved in the MZW Algorithm. The set of warping functions $\Gamma$ is defined as \[ \Gamma = \{ \gamma:\;[0,1] \to [0,1] \;\mid\; \gamma(0) = 0, \gamma(1) =1, \gamma \text{ is an increasing continuous function} \}.\] For given functions $f,g$ belonging to $\Gamma,$ we can notice that the difference $f -g$ is generally not in $\Gamma.$ Consequently, \citet{ma2024stochastic} consider the more restrictive set denoted by $\Gamma_1$ given by
\[\Gamma_1~=\{\gamma \in \Gamma \cap \mathcal{D}([0,1])\;|\; \exists\;(m_\gamma, M_\gamma)\in (\mathbb{R}_+^\ast)^2, 0< m_\gamma< \gamma^\prime <M_\gamma <\infty\}, \] 
where we recall that $ \mathcal{D}([0,1])$ refers to as the set of functions that admits a derivative at each point of the line segment $[0,1]$ and on which it is possible to add an inner product space structure. We recall the definitions of the two operations $\oplus_{\Gamma_1}$ and $\odot_{\Gamma_1}$ such that $(\Gamma_1, \oplus_{\Gamma_1}, \odot_{\Gamma_1})$ owns a vector space structure and the inner product $\langle \cdot, \cdot \rangle_{\Gamma_1}$ on $\Gamma_1.$ \begin{definition}\citep[Definition 1]{ma2024stochastic} Let $f,g \in \Gamma_1$ and $\alpha \in \mathbb{R},$ the perturbation operation with operator $\oplus_{\Gamma_1}:\Gamma_1 \times \Gamma_1 \to \Gamma_1$ is given by \[ [f \oplus_{\Gamma_1} g] = \frac{\int_0^\cdot f^\prime(s)g^\prime(s) \mathrm{d}s}{\int_0^1 f^\prime(\tau)g^\prime(\tau) \mathrm{d}\tau}. 
\] The power operation with operator $\odot_{\Gamma_1}:\mathbb{R} \times \Gamma_1 \to \Gamma_1$ is given by \[[\alpha \odot_{\Gamma_1} f] = \frac{\int_0^\cdot f^\prime(s)^\alpha \mathrm{d}s}{\int_0^1 f^\prime(\tau)^\alpha \mathrm{d}\tau}. \]
\end{definition}

In particular, we notice that the identity element for the perturbation operation is given by the identity function and the inverse element of a given $f\in \Gamma_1$ is given by $(-1) \odot_{\Gamma_1} f$, that can be rewritten as $(\int_0^\cdot (1/f^\prime(s))\mathrm{d}s) / (\int_0^1 (1 / f^\prime(\tau)) \mathrm{d}\tau).$ We then naturally introduce the following notation. Let $f,g \in \Gamma_1.$ We denote by $f \ominus_{\Gamma_1} g$ the element of $\Gamma_1$ given by $f \oplus_{\Gamma_1} [(-1) \odot_{\Gamma_1} g].$ Finally, we recall the inner product on $\Gamma_1.$ 

\begin{definition} \citep[Definition 2]{ma2024stochastic} Let $f,g \in \Gamma_1,$ the inner product is defined as the functional $\langle \cdot, \cdot \rangle_{\Gamma_1}: \Gamma_1 \times \Gamma_1 \to \mathbb{R}$ of the following form: 
\[ \langle f, g \rangle_{\Gamma_1} = \int_0^1 \log(f^\prime(t)) \log(g^\prime(t)) \mathrm{d}t - \int_0^1 \log(f^\prime(s)) \mathrm{d}s \int_0^1 \log(g^\prime(t)) \mathrm{d}t. \] 
\end{definition}

\subsection{Proof of Proposition~\ref{prop_MZW}}

We begin by recalling some useful notation. The function $\varphi$ belongs to the set $\Gamma_1$ defined in Appendix~\ref{rappel_MZW}. Let $m\ge 0$. The family of functions $(\phi_i)_{i \geq 1}$ denotes the Fourier basis with the constant function equal to 1 removed. The set $H(0,1)$ is defined as $H(0,1)= \{ h \in L^2([0,1]) \;;\; \int_0^1 h = 0, -\infty < m_h < h < M_h < \infty \}.$ The application $\psi$ is an isometric isomorphism between $ \Gamma_1$ and $H(0,1)$ defined as follows $\psi \;:\; \varphi \in \Gamma_1 \mapsto \psi(\varphi) =  \log(\varphi') - \int_0^1 \log(\varphi') \in H(0,1).$ For a fixed $m,$ $X_m(\cdot;\theta)$ is the function $ \psi(\varphi) + \sum_{i=1}^m G_i \phi_i,$ where the coefficients $G_i$ follow a centered stochastic process with variance term given by $v_i / (1 + \theta),$ for $i \in \llbracket 1,m \rrbracket.$ The expression of the simulated path is given by $w_m^{\MZW}(\cdot;\theta) = (\int_0^\cdot \exp{(X_m(\cdot;\theta))}) / (\int_0^1 \exp{(X_m(\cdot;\theta))})$ which corresponds to $\psi^{-1}(X_m(\cdot;\theta)).$


\begin{proof}

Let $\Check{\varphi} \in \Gamma_1$ and $t \in (0,1).$ The expression of $X_m(\cdot;\theta)$ and the definition of $w_m^{\MZW}(\cdot;\theta)$ lead to \[ w_m^{\MZW}(\cdot;\theta) = \frac{\int_0^\cdot \varphi' \exp{(\sum_{i=1}^m G_i \phi_i)}}{ \int_0^1 \varphi' \exp{(\sum_{i=1}^m G_i \phi_i)}}.\] In particular, the derivative of $w_m^{\MZW}(\cdot;\theta)$ can be expressed as follows: \[ (w_m^{\MZW})'(\cdot;\theta) = \frac{\varphi' \exp{(\sum_{i=1}^m G_i \phi_i)}}{\int_0^1 \varphi' \exp{(\sum_{i=1}^m G_i \phi_i)}}.\] Consequently, 
\[ w_m^{\MZW}(\cdot;\theta) \ominus_{\Gamma_1} \Check{\varphi} = \frac{\int_0^\cdot \exp{(\sum_{i=1}^m G_i \phi_i)} \frac{\varphi'}{\Check{\varphi}'}}{\int_0^1 \exp{(\sum_{i=1}^m G_i \phi_i)} \frac{\varphi'}{\Check{\varphi}'}}. \] 
This expression leads to 
\[ \log((w_m^\MZW(\cdot;\theta) \ominus_{\Gamma_1} \Check{\varphi} )') = \sum_{i=1}^m G_i \phi_i + \log \left( \frac{\varphi'}{\Check{\varphi}'} \right) - \log\left( \int_0^1 \exp{(\sum_{i=1}^m G_i \phi_i) \frac{\varphi'}{\Check{\varphi}'}} \right) \]
and 
\begin{align*}\log((w_m^\MZW(\cdot;\theta) \ominus_{\Gamma_1} \Check{\varphi} )')^2 &= \left( \sum_{i=1}^m G_i \phi_i \right)^2 + \log\left(  \frac{\varphi'}{\Check{\varphi}'}\right)^2 + \log\left( \int_0^1 \exp{(\sum_{i=1}^m G_i \phi_i) } \frac{\varphi'}{\Check{\varphi}'} \right)^2 \\
& + 2 \sum_{i=1}^m G_i \phi_i \log\left( \frac{\varphi'}{\Check{\varphi}'} \right) - 2 \sum_{i=1}^m G_i \phi_i \log\left( \int_0^1\exp{(\sum_{i=1}^m G_i \phi_i)} \frac{\varphi'}{\Check{\varphi}'}\right) \\
& - 2 \log\left( \frac{\varphi'}{\Check{\varphi}'} \right) \log\left( \int_0^1 \exp{(\sum_{i=1}^m G_i \phi_i)} \frac{\varphi'}{\Check{\varphi}'} \right).
\end{align*} 
Using $\int_0^1 \phi_i = 0,$ for $i \in \llbracket1,m\rrbracket,$ we get 
\[ \int_0^1 \log((w_m^\MZW(\cdot;\theta) \ominus_{\Gamma_1} \Check{\varphi} )') = \int_0^1 \log\left( \frac{\varphi'}{\Check{\varphi}'} \right) - \log\left( \int_0^1 \exp{(\sum_{i=1}^m G_i \phi_i)} \frac{\varphi'}{\Check{\varphi}'}\right)  \] 
and 
\begin{align*}
\int_0^1 \log((w_m^\MZW(\cdot;\theta) \ominus_{\Gamma_1} \Check{\varphi} )')^2 &= \int_0^1 ( \sum_{i=1}^m G_i \phi_i)^2 + \int_0^1 \log\left( \frac{\varphi'}{\Check{\varphi}'} \right)^2 \\
&+ \log\left( \int_0^1 \exp{\left( \sum_{i=1}^m G_i \phi_i\right)} \frac{\varphi'}{ \Check{\varphi}'}\right)^2 \\
& - 2 \log\left( \int_0^1 \exp{\left( \sum_{i=1}^m G_i \phi_i \right)} \frac{\varphi'}{\Check{\varphi}'} \right) \int_0^1 \log\left( \frac{\varphi'}{\Check{\varphi}'} \right).
\end{align*}
Then we deduce 
\begin{align*}
&\mathbb{E}\left[ \int_0^1 \log((w_m^\MZW(\cdot;\theta) \ominus_{\Gamma_1} \Check{\varphi} )')^2  \right]\\
&= \sum_{i=1}^m \Var[G_i] + \int_0^1 \log\left( \frac{\varphi'}{\Check{\varphi}'} \right)^2 + \mathbb{E}\left[ \log\left( \int_0^1 \exp{\left( \sum_{i=1}^m G_i \phi_i\right)} \frac{\varphi'}{\Check{\varphi}'} \right)^2 \right] \\
& \quad  - 2 \int_0^1 \log\left( \frac{\varphi'}{\Check{\varphi}'} \right) \mathbb{E}\left[ \log\left(  \int_0^1 \exp{\left( \sum_{i=1}^m G_i \phi_i\right)} \frac{\varphi'}{\Check{\varphi}'}\right) \right]
\end{align*} 
and 
\begin{align*}
&\mathbb{E}\left[ \left(\int_0^1 \log((w_m^\MZW(\cdot;\theta) \ominus_{\Gamma_1} \Check{\varphi} )') \right)^2  \right] \\
&= \left( \int_0^1 \log\left( \frac{\varphi'}{\Check{\varphi}'} \right) \right)^2 - 2 \left( \int_0^1 \log\left( \frac{\varphi'}{\Check{\varphi}'} \right) \right) \mathbb{E}\left[ \log\left(  \int_0^1 \exp{\left( \sum_{i=1}^m G_i \phi_i\right)} \frac{\varphi'}{\Check{\varphi}'}\right) \right] \\
&+ \mathbb{E}\left[ \log\left( \int_0^1 \exp{\left( \sum_{i=1}^m G_i \phi_i \right)} \frac{\varphi'}{\Check{\varphi}'} \right)^2 \right].
\end{align*} Since \begin{equation*} \| w_m^\MZW(\cdot;\theta) \ominus_{\Gamma_1} \Check{\varphi} \|_{\Gamma_1}^2 = \int_0^1 \log\left( (w_m^\MZW(\cdot;\theta) \ominus_{\Gamma_1} \Check{\varphi})' \right)^2 - \left( \int_0^1 \log\left((w_m^\MZW(\cdot;\theta) \ominus_{\Gamma_1} \Check{\varphi})' \right) \right) ^2, \end{equation*} we obtain 
\begin{align*}
\mathbb{E}\left[ \| w_m^\MZW(\cdot;\theta) \ominus_{\Gamma_1} \Check{\varphi} \|_{\Gamma_1}^2\right] &= \sum_{i=1}^m \Var[G_i] + \int_0^1 \log\left( \frac{\varphi'}{\Check{\varphi}'}\right)^2 - \left( \int_0^1 \log\left( \frac{\varphi'}{\Check{\varphi}'} \right) \right)^2.
\end{align*} 
Now, we remark the following: 
\begin{align*}
\int_0^1 \log\left( \frac{\varphi'}{\Check{\varphi}'} \right)^2 &= \int_0^1 \left( \log(\varphi') - \log(\Check{\varphi}) \right)^2 \\
&= \int_0^1 \log(\varphi')^2 + \int_0^1 \log(\Check{\varphi}') - 2 \int_0^1 \log(\varphi') \log(\Check{\varphi}'). 
\end{align*} 
Moreover, 
\begin{align*}
\left( \int_0^1 \log\left( \frac{\varphi'}{\Check{\varphi}'} \right) \right)^2 &= \left[ \int_0^1 \log(\varphi') - \int_0^1 \log(\Check{\varphi}') \right]^2 \\
&= \left( \int_0^1 \log(\varphi') \right)^2 + \left( \int_0^1 \log(\Check{\varphi}') \right)^2 - 2 \left( \int_0^1 \log(\varphi') \right) \left( \int_0^1 \log( \Check{\varphi}') \right).
\end{align*}
Thus, 
\begin{align*}
&\int_0^1 \log\left( \frac{\varphi'}{ \Check{\varphi}'} \right)^2 - \left( \int_0^1 \log\left( \frac{\varphi'}{\Check{\varphi}'}\right)  \right)^2\\ 
&= \int_0^1 \log(\varphi')^2 - \left( \int_0^1 \log(\varphi') \right)^2 + \int_0^1 \log(\Check{\varphi}')^2 - \left( \int_0^1 \log(\Check{\varphi}') \right)^2 \\
& - 2 \left[ \int_0^1 \log(\varphi') \log(\Check{\varphi}') - \left( \int_0^1 \log(\varphi') \right) \left( \int_0^1 \log(\Check{\varphi}') \right) \right] \\
&= \| \varphi \|_{\Gamma_1}^2 + \| \Check{\varphi}' \|_{\Gamma_1}^2- 2 \langle \varphi, \Check{\varphi} \rangle_{\Gamma_1} \\
&= \| \varphi \ominus_{\Gamma_1} \Check{\varphi} \|_{\Gamma_1}^2.
\end{align*}
Thus, we justified Equation~\eqref{decompo_biais_variance}: 
\[ \mathbb{E}\left[ \| w_m^\MZW(\cdot;\theta) \ominus_{\Gamma_1} \Check{\varphi} \|_{\Gamma_1}^2 \right]   = \sum_{i=1}^m \Var[G_i] + \| \varphi \ominus_{\Gamma_1} \Check{\varphi} \|_{\Gamma_1}^2 .\] 
Hence, $\Check{\varphi}$ minimizes $\mathbb{E}\left[ \| \varphi \ominus_{\Gamma_1} \cdot \|_{\Gamma_1}^2 \right]$ if and only if $ \Check{\varphi} \ominus_{\Gamma_1} \varphi$ is equal to the identity function if and only if $ \Check{\varphi} = \varphi.$
\end{proof}

\subsection{Proof of the behaviour of the first two moments in a particular case }\label{appendix::cv_moments_MZW}

Numerical illustrations presented in Subsection~\ref{ssec:num_study} highlight that considering large values of the concentration parameter $\theta$ in the modified version of MZW Algorithm allows to simulate a warping process that can be centered at a target warping function $\varphi$ belonging to $\Gamma_1$ and presenting no variability. In this subsection, we give a proof of this observation in the case where $(G_i)_{i \geq 1}$ is a sequence of independent normally distributed random variables with mean 0 and variance term equals to $v_i / (1 + \theta),$ for $i \geq 1.$

\begin{proposition}
Let $m$ be a positive integer and $(v_i)_{i \geq 1}$ be a sequence of real values satisfying $\sum_{i = 1}^\infty v_i < \infty.$  We assume that $(G_i)_{i \geq 1}$ is a sequence of independent normally distributed random variables with mean 0 and variance term $v_i/(1+\theta),$ for $i \in \llbracket1,m\rrbracket.$ Then, for all $ \varphi \in \Gamma_1$ and all $t \in [0,1],$ we have, as $\theta \to \infty$
\begin{equation*}
\mathbb{E}\left[ w_m^{\MZW}(t;\theta)\right] \to \varphi(t) 
\quad \text{ and } \quad     
    \Var[w_m^{\MZW}(t;\theta)] \to  0.
\end{equation*}

\end{proposition}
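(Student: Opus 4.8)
The plan is to reduce the statement to an almost-sure convergence of the simulated paths, followed by a bounded-convergence argument for the two moments. The key device is a coupling across the parameter $\theta$: since each $G_i$ is centered Gaussian with variance $v_i/(1+\theta)$, I would write $G_i = (1+\theta)^{-1/2}\sqrt{v_i}\,Z_i$, where $Z_1,\dots,Z_m$ are fixed i.i.d.\ standard normal variables that do not depend on $\theta$. With this representation the randomness is frozen and only the deterministic factor $(1+\theta)^{-1/2}$ varies, so that letting $\theta\to\infty$ becomes transparent.

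First I would establish that the random exponent vanishes uniformly. Since $m$ is finite and each Fourier basis function satisfies $\|\phi_i\|_\infty \le \sqrt 2$ on $[0,1]$, I get
\[
\sup_{s\in[0,1]}\Big|\sum_{i=1}^m G_i\phi_i(s)\Big| \le \frac{\sqrt 2}{\sqrt{1+\theta}}\sum_{i=1}^m \sqrt{v_i}\,|Z_i| \xrightarrow[\theta\to\infty]{} 0 \quad\text{almost surely,}
\]
whence $\exp(\sum_{i=1}^m G_i\phi_i(\cdot))\to 1$ uniformly on $[0,1]$ almost surely. Because $\varphi\in\Gamma_1$ guarantees $0<m_\varphi<\varphi'<M_\varphi<\infty$, the numerator $\int_0^t\varphi'(s)\exp(\sum_i G_i\phi_i(s))\,\d s$ converges almost surely to $\int_0^t\varphi'(s)\,\d s=\varphi(t)$, and likewise the denominator converges almost surely to $\int_0^1\varphi'(\tau)\,\d\tau=1$. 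As the limiting denominator is $1\neq 0$ (and the denominator is bounded below by $m_\varphi\exp(-\sup_s|\sum_i G_i\phi_i(s)|)>0$, so all expressions are well defined), the ratio converges and I conclude that $w_m^{\MZW}(t;\theta)\to\varphi(t)$ almost surely for every fixed $t\in[0,1]$.

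It then remains to upgrade this pathwise convergence to convergence of the first two moments. Here I would use that, by construction, $w_m^{\MZW}(\cdot;\theta)=\psi^{-1}(X_m(\cdot;\theta))\in\Gamma_1\subset\Gamma$, so $0\le w_m^{\MZW}(t;\theta)\le 1$ for all $\theta$ and all $t\in[0,1]$. The bounded convergence theorem (dominating by the constant $1$) then yields both $\mathbb E[w_m^{\MZW}(t;\theta)]\to\varphi(t)$ and $\mathbb E[(w_m^{\MZW}(t;\theta))^2]\to\varphi(t)^2$, and subtracting gives $\Var[w_m^{\MZW}(t;\theta)]\to\varphi(t)^2-\varphi(t)^2=0$.

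The argument is essentially routine once the coupling is in place; the only point requiring a little care is the passage to the limit inside the ratio of integrals, which I would handle by the uniform convergence of the integrands together with the uniform lower bound $\varphi'\ge m_\varphi>0$ keeping the denominator away from zero. The Gaussian assumption is used only to make the coupling and the almost-sure statement transparent; in fact, convergence in probability of the $G_i$ (which holds for any centered distribution with the prescribed variances, by Chebyshev) combined with the boundedness of $w_m^{\MZW}$ would already suffice to obtain convergence of both moments.
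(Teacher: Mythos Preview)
Your proof is correct and takes a genuinely different route from the paper. The paper bounds the expectation directly: for the lower bound it uses $\|\phi_i\|_\infty\le\sqrt 2$ together with Jensen's inequality and the expectation of the folded normal, obtaining
\[
\mathbb E[w_m^{\MZW}(t;\theta)]\ge \varphi(t)\exp\Bigl(-\tfrac{4}{\sqrt\pi}\tfrac{1}{\sqrt{1+\theta}}\textstyle\sum_i\sqrt{v_i}\Bigr);
\]
for the upper bound it uses $\exp(x)\ge 1+x$ and $\int_0^1 X_m=0$ to get the denominator $\ge 1$, then identifies the Gaussian moment generating function to bound $\mathbb E[\int_0^t\varphi'\exp(\sum G_i\phi_i)]$. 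The second moment is handled similarly via the MGF of $\sum_i G_i(\phi_i(u)+\phi_i(v))$. Both bounds are explicit in $\theta$.

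Your coupling $G_i=(1+\theta)^{-1/2}\sqrt{v_i}\,Z_i$ freezes the randomness and turns the limit into an almost-sure pathwise statement, after which the bounded convergence theorem (using $w_m^{\MZW}\in[0,1]$) does all the work. This is shorter and conceptually cleaner; it also makes transparent, as you note, that Gaussianity is inessential and Chebyshev plus boundedness suffice. What the paper's approach buys in return is explicit non-asymptotic bounds on the bias and second moment as functions of $\theta$ and of $\sum_i v_i$ (or $\sum_i\sqrt{v_i}$), which your argument does not immediately produce.
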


\begin{proof}
Let $m \in \mathbb{N}^\ast.$ For a given $i \in \llbracket 1,m \rrbracket,$ the definition of $\phi_i$ implies that $- \sqrt{2} \leq \phi_i \leq \sqrt{2}.$ Consequently, we have the following inequalities: 
\begin{equation}\label{eq:bornes_phi} - \sqrt{2} \sum_{i=1}^m |G_i | \leq \sum_{i=1}^m G_i \phi_i \leq \sqrt{2} \sum_{i=1}^m |G_i|. 
\end{equation} 
Then, using the definition of $X_m(\cdot, \theta),$ we get 
\begin{equation*}
    \exp{\left(- \sqrt{2} \sum_{i=1}^m |G_i | \right)} \int_0^t \varphi' \leq \int_0^t \exp{(X_m(\cdot, \theta))} \leq \exp{\left(\sqrt{2} \sum_{i=1}^m |G_i|\right)}\int_0^t \varphi', 
\end{equation*} which can be rewritten as 
\begin{equation*}
    \exp{\left(- \sqrt{2} \sum_{i=1}^m |G_i | \right)} \varphi(t) \leq \int_0^t \exp{(X_m(\cdot, \theta))} \leq  \exp{\left(\sqrt{2} \sum_{i=1}^m |G_i|\right)} \varphi(t).
\end{equation*}
In particular, using the equation $\varphi(1) = 1,$ we obtain the following bound
\begin{equation*}
    \int_0^1 \exp{(X_m(\cdot, \theta))} \leq \exp{\left(\sqrt{2} \sum_{i=1}^m |G_i|\right)}.
\end{equation*}
Let $t \in [0,1].$ We deduce that \begin{equation*}
    \frac{\varphi(t) \exp{\left(- \sqrt{2} \sum_{i=1}^m |G_i|\right)}}{\exp\left(\sqrt{2} \sum_{i=1}^m |G_i|\right)}\leq w_m^{\MZW}(t;\theta)
\end{equation*}
and
\begin{equation*}
    \mathds{E}[ w_m^{\MZW}(t;\theta)] \geq \varphi(t) \mathds{E}\left[ \exp{\left( -2 \sqrt{2} \sum_{i=1}^m |G_i|\right)}\right]. 
\end{equation*}
Using Jensen's inequality to $x \mapsto \exp{(-x)},$ we get \begin{equation*}
    \mathds{E}[w_m^{\MZW}(t;\theta)] \geq \varphi(t) \exp{\left( - \mathds{E}\left[ 2 \sqrt{2} \sum_{i=1}^m |G_i| \right] \right)}.
\end{equation*}
Since $|G_i|,$ for $i \in \llbracket 1, m\rrbracket,$ is Folded Normal distributed with parameters $0$ and $v_i / (1 +\theta),$ the expectation of $|G_i|$ is given by $\sqrt{2 v_i}/ \sqrt{ \pi ( 1 + \theta)}.$ Consequently: \begin{equation*}
    \mathds{E}[ w_m^{\MZW}(t;\theta)] \geq \varphi(t) \exp{\left(- \frac{4}{\sqrt{\pi}} \frac{1}{\sqrt{1 + \theta}} \sum_{i=1}^m \sqrt{v_i}\right)}.
\end{equation*}
Now, we derive an upper bound of the term $ \mathbb{E}[w_m^{\MZW}(t;\theta)].$ Since the inequality $\exp{(x)} \geq 1 + x$ holds for $x \in \mathbb{R},$ we get \begin{equation*}
    \int_0^1 \exp{(X_m(\cdot; \theta))} \geq \int_0^1 (1  + X_m(\cdot; \theta)).
\end{equation*}
Since $X_m(\cdot; \theta)$ belongs to $H(0,1),$ we deduce that \begin{equation*}
     \int_0^1 \exp{(X_m(\cdot; \theta))} \geq 1
\end{equation*}
and then that
\begin{equation*}
    w_m^{\MZW}(t;\theta) \leq \int_0^t \varphi' \exp{\left( \sum_{i=1}^m G_i \phi_i \right)}.
\end{equation*} 
Then, we have 
\begin{equation*}
    \mathds{E}\left[ w_m^{\MZW}(t;\theta) \right] \leq \int_0^t \varphi'(u) M_{\sum_{i=1}^m G_i \phi_i(u)}(1) \mathrm{d}u,
\end{equation*} 
where, for $u \in [0,1],$ $M_{\sum_{i=1}^m G_i \phi_i(u)}$ refers to the moment generating function of the random variable $\sum_{i=1}^m G_i \phi_i(u).$ Since the random variables $G_i,$ for $i \in \llbracket 1, m\rrbracket,$ are independent and follow a Normal distribution, the random variable $ \sum_{i=1}^m G_i \phi_i(u) $ is normally distributed with mean 0 and variance parameter $(\sum_{i=1}^m \phi_i(u)^2 v_i) / (1 + \theta)$. Consequently, 
\begin{align*}
    M_{\sum_{i=1}^m G_i \phi_i}(1) &= \exp\left( \frac{1}{2} \frac{1}{1 + \theta} \sum_{i=1}^m \phi_i^2 v_i \right) \\
    &\leq \exp \left( \frac{1}{1 + \theta} \sum_{i=1}^m v_i \right). 
\end{align*}
Thus, we get 
\begin{align*}
    \mathds{E}\left[ w_m^\MZW(t;\theta) \right] &\leq \exp\left( \frac{1}{1 + \theta} \sum_{i=1}^m v_i \right) \int_0^t \varphi'\\
    &=  \exp\left( \frac{1}{1 + \theta} \sum_{i=1}^m v_i \right) \varphi(t).
\end{align*}

Since, for a fixed $m,$ the limit term, as $\theta\to\infty,$ of the upper bound and lower bound of $\mathds{E}\left[ w_m^{\MZW}(t;\theta) \right]$ is given by $\varphi(t),$ we deduce that, as $\theta\to\infty,$ 
\begin{equation}\label{eq:cv_esp_MZW}
    \mathds{E}\left[ w_m^{\MZW}(t;\theta) \right] \xrightarrow{} \varphi(t).
\end{equation}
We now focus on the variance term. Let $m \in \mathbb{N}^\ast.$ We deduce from~\eqref{eq:cv_esp_MZW} that \begin{equation*}
    \underset{\theta \to \infty}{ \lim} \mathds{E}[w_m^{\MZW}(t;\theta)]^2 = \varphi(t)^2.
\end{equation*}
Moreover, we have  
\begin{align*}
    \mathds{E}[w_m^\MZW(t;\theta)^2] &= \mathds{E}\left[ \left( \frac{\int_0^t \exp{\left( \sum_{i=1}^m G_i \phi_i \right)} \varphi'}{\int_0^1 \exp{\left( \sum_{i=1}^m G_i \phi_i \right)} \varphi'} \right)^2 \right] \\
    &\leq \mathds{E}\left[ \left( \int_0^t \exp{\left( \sum_{i=1}^m G_i \phi_i \right)} \varphi' \right)^2 \right] \\
    &= \mathds{E}\left[ \int_0^t \int_0^t \varphi'(u) \varphi'(v) \exp\left( \sum_{i=1}^m G_i (\phi_i(u) + \phi_i(v)) \right) \mathrm{d}u \mathrm{d}v \right] \\
    &= \int_0^t \int_0^t \varphi'(u) \varphi'(v) M_{\sum_{i=1}^m G_i (\phi_i(u) + \phi_i(v))}(1) \mathrm{d}u \mathrm{d}v . 
\end{align*}    
Since, for a given $(u,v) \in [0,t]^2,$ the random variable $\sum_{i=1}^m G_i (\phi_i(u) + \phi_i(v))$ is normally distributed with mean $0$ and variance $\sum_{i=1}^m v_i (\phi_i(u) + \phi_i(v))^2 / (1 + \theta) ,$ which is upper-bounded by $8 \sum_{i=1}^m v_i / (1 + \theta)$, its moment generating function at $1$ is therefore upper-bounded by $ \exp(4 \sum_{i=1}^m v_i / ( 1 + \theta)).$ Consequently, \begin{align*}
     \mathds{E}[w_m^\MZW(t;\theta)^2] & \leq \left( \int_0^t \varphi'(u) \mathrm{d}u \right) \left( \int_0^t \varphi'(v) \mathrm{d}v \right) \exp\left( \frac{4}{1 + \theta} \sum_{i=1}^m v_i \right) \\
     &= \varphi(t)^2 \exp\left( \frac{4}{1 + \theta} \sum_{i=1}^m v_i \right).
\end{align*}
Thus, 
\begin{equation*}
    \mathds{E}[w_m^{\MZW}(t;\theta)]^2 \leq \mathds{E}\left[ w_m^{\MZW}(t;\theta)^2 \right] \leq \varphi(t)^2 \exp\left( \frac{4}{1 + \theta} \sum_{i=1}^m v_i \right).
\end{equation*}
Hence
\begin{equation*}
    \underset{\theta \to \infty}{\lim} \mathds{E}\left[ w_m^{\MZW}(t;\theta)^2 \right] = \varphi(t)^2,
\end{equation*} and \begin{equation*}
    \underset{\theta \to \infty}{\lim}  \Var[ w_m^{\MZW}(t;\theta)] = 0.
\end{equation*}
\end{proof}
\end{appendix}

\begin{acks}[Acknowledgments]
The authors thank Julie Mireille Thériault (UQÀM) for providing the temperature data and for her help in obtaining research funding. The authors are also grateful to Sebastian Kurtek for discussions about the BK Algorithm.
\end{acks}

\begin{funding}
This project has received financial support from the CNRS through the MITI interdisciplinary programs and Institut des Mathématiques pour la Planète Terre.
\end{funding}


\bibliographystyle{imsart-nameyear} 
\bibliography{refs}       

\end{document}